\numberwithin{equation}{subsection}
\title[TBG in Commensurate Angles]{Twisted Bilayer Graphene in Commensurate Angles}
\author[Tal Malinovitch]{Tal Malinovitch$^1$\\
\tiny $^1$Department of Mathematics, Rice University, Houston, TX, 77005}
\begin{document}
\begin{abstract}
    We study a 2D continuum model of electronic transport in twisted bilayer graphene (TBG) at commensurate angles. We use two honeycomb potentials with the symmetries of graphene, either sharing a common origin (AA stacking) or shifted by a half-lattice spacing (AB stacking), and twisted relative to each other. While the electronic properties of TBG are most commonly studied via the approximate Bistritzer-MacDonald (BM) model, our approach studies the exact continuum Schr\"{o}dinger operator without these approximations. Our results hold for a wide class of potentials in both stacking types. We describe the exact angles for which the two twisted lattices are commensurate and prove the existence of Dirac cones at the vertices of the Brillouin zone for such angles. Additionally, we establish quantitative bounds showing that, for small potentials, the slope of the Dirac cones flattens at commensurate angles near incommensurate angles. This work is the first to rigorously establish the existence of Dirac cones for twisted bilayer graphene in the continuum setting, without the BM approximations.\\
    \smallskip 
    \noindent  {\bf Keywords:}  Twisted Bilayer Graphene, Commensurate angles, AA stacking, AB stacking, Honeycomb lattice potential, Periodic operators, Dirac cones.
\end{abstract}

\maketitle

\section{Introduction }
\subsection{Motivation and main results}
The electronic properties of twisted bilayer graphene (which we abbreviate as TBG) have been studied extensively in recent years in the physics community (e.g. \cite{bistritzer2011moire, cao2018unconventional, hennighausen2021twistronics, koshino2019band, ledwith2021strong, li2022Dirac, neto2009electronic, yankowitz2019tuning}), and in the mathematical community (for example, \cite{becker2021spectral, becker2022mathematics, becker2024fine, becker2024dirac, cances2023simple, massatt2021electronic,massatt2018incommensurate,watson2023bistritzer, watson2021existence}).  \par 
This system was famously studied theoretically by Bistritzer and MacDonald in their seminal paper \cite{bistritzer2011moire}- where they considered two layers of graphene, one on top of the other, shifted with respect to each other, and twisted one layer with respect to the other, at some angle $\theta$. In their work, Bistritzer and MacDonald created an effective model for TBG, which is periodic at all twisting angles- which we refer to as the BM model. They derived this model using successive approximations to enable more tractable analysis.  One of the main predictions was that the resulting operator's spectrum should contain a degenerate Dirac cone (more on Dirac cones in Section \ref{Setting}) at certain angles, called magic angles. Since then, there has been much effort to establish these results rigorously (e.g., \cite{cances2023simple, watson2023bistritzer}), both in this model and in related models (such as the one in \cite{tarnopolsky2019origin}, for example, in works like \cite{becker2021spectral, becker2022mathematics}). Despite these efforts, some aspects of the BM approximations remain poorly understood, especially in the continuum setting.\par 
In this work, we study a continuous model of TBG, without these approximations, at commensurate angles. A \emph{commensurate angle} is an angle for which the twisted system is still periodic (we use a slightly different definition in the following, see remark \ref{Rmk:Commensurate}), thus allowing the use of Floquet theory and providing a natural definition for Dirac cones, without relying on the approximation of the BM model. \par
This work will focus on the continuum setting- though a similar analysis can be carried out for the discrete operators, known as the tight binding model.\par
To state our result more precisely, let $V$ and $Q$ be honeycomb potentials, as defined in \cite{fefferman2012honeycomb} (see Section \ref{Setting}, and remark \ref{Rmk:Ambiguity} for more about that)- a periodic potential with the honeycomb lattice symmetries (periodic with respect to the same lattice). We denote by $R_\theta$ the matrix that represents a rotation by $\theta$, and $\calR_\theta$ its corresponding operator. Then our twisted potential is (for AA stacking):
\begin{align}
    W^\theta_{AA}=G(\calR_\theta V,\calR_{-\theta} Q)
\end{align}
for a suitable family of admissible interaction operators.
\begin{remark}
For AB stacking, the potential requires more symmetry than the AA stacking case, including requiring $Q=V$, due to some ambiguity when shifting an infinite system to AB stacking; see remark \ref{Rmk:Ambiguity} for more details. 
\end{remark}
Then we consider the following Hamiltonian acting on $L^2(\bbR^2)$
\begin{align*}
    H^\theta(\lambda)=-\Delta +\lambda W^\theta_{AA/AB} 
\end{align*}
for $\lambda \in \bbR$, the amplitude of the potential. As a representative example of the class of potentials considered here, one may consider, for the AA stacking  
\begin{align*}
    W^\theta_{0,AA} =\frac{1}{2}(V(R_{\theta}x)+Q(R_{-\theta}x))
\end{align*}

\begin{remark}\label{3dRmk}
    We note that the model proposed here is a two-dimensional scalar model. First, regarding the spatial dimensions, the model here is similar to the celebrated BM model and its rigorous derivations \cite{bistritzer2011moire, cances2023simple, watson2023bistritzer}, which are essentially two-dimensional. 
    The third dimension is incorporated into the BM model via a weak interlayer coupling, and in our framework, this interaction is captured by the choice of the continuous operator $G$. We provide a concrete demonstration of this out-of-plane slicing in Example \ref{3dEx} below.
    Second, regarding the scalar nature of the Hamiltonian: while the BM model utilizes a matrix structure to represent the bilayer system, this is an emergent artifact of their approximations. Specifically, the BM Hamiltonian arises from discretizing the continuous $z$-axis to force electrons into discrete layer states (a tight-binding approximation) and truncating momentum space to isolate the Dirac cones (a valley restriction). In reality, the fundamental physics of an electron moving through a TBG potential at non-relativistic energies is governed by a continuous scalar operator. By studying the exact scalar potential $H = -\Delta + \lambda W$ in the continuous setting, our model avoids the artificial distinctions introduced by these tight-binding and valley restrictions. This provides a rigorous mathematical foundation to establish the existence of Dirac cones directly from the continuum, rather than projecting the continuous reality onto approximate discrete bases.
\end{remark}
With this model, we may state our main results (more precise statements will appear in Section \ref{Setting}- after some more technical notations will be introduced):
\begin{enumerate}
    \item Theorem \ref{NewLattice} describes the set of commensurate angles- the set of angles for which the two lattices intersect non-trivially, denoted by $\calC$. Furthermore, we show that, for a commensurate angle, the new potential is periodic with respect to a scaled honeycomb lattice with a scaling factor of $N$, defined by the arithmetic properties of the angle $\theta$. 
    \item Theorem \ref{DiracPoint} proves that for such systems, for every $\lambda \in \bbR$ except for a discrete set, under some technical conditions, there are Dirac points at the edges of the new Brillouin zone, near the bottom of the spectrum. This relies on the results and techniques of \cite{fefferman2012honeycomb} and \cite{berkolaiko2018symmetry}. Though it may seem that this result follows directly from these papers, for some cases of admissible interacting potentials (such as the example above), the technical condition appearing in \cite{fefferman2012honeycomb, berkolaiko2018symmetry} (non-vanishing of some Fourier coefficient of the potential) does not hold. This condition arises from a perturbative argument, and we show that, by going to higher orders, we can still recover their result under a different condition, thus allowing us to find a condition that does not fail for all admissible potentials of this form.  
    \item  As an additional result, Theorem \ref{Vanishing} shows that for a small amplitude of the potential with respect to the reciprocal of the scaling, that is $\lambda \lesssim \frac{1}{N^2}$, the slope of the Dirac cone $v_d$ is proportional to $\frac{1}{N}$. This result may hint at vanishing Floquet bands for \emph{all} incommensurate angles for small enough $\lambda$ and gives a quantitative flattening of the Dirac cones, albeit only in the perturbative regime, for commensurate approximations to incommensurate angles. This result is not of the "magic angle" type, as it is uniform across all incommensurate angles rather than confined to a set of specific angles. 
\end{enumerate}
\subsection{Graphene and twisted bilayer graphene - overview}
\subsubsection{Single layer graphene}
Graphene is a two-dimensional material made of a single layer of carbon atoms arranged in a hexagonal formation. There are several ways of studying such materials. One such way is to examine the associated Schr\"{o}dinger operator in the continuum setting, i.e., as an operator acting on $L^2(\bbR^2)$. Another way is to study these operators through the tight-binding approximation. In this approximation, the full dynamics is approximated by an operator that acts on  $\ell^2(\Lambda)$, where $\Lambda$ is the graph of the periodic lattice of atoms. For a discussion of the tight-binding model, see, for example, \cite{ablowitz2012tight, fefferman2018honeycomb}.\par
For periodic potentials, Floquet theory allows one to move from the spectrum of the full Hamiltonian, $H$, which has absolutely continuous spectrum, to the study of a family of operators $H(k)$, each with only pure point spectrum \cite{Kuchment}. The eigenfunctions of each $H(k)$, denoted $E_n(k)$, are called bands.\par
One remarkable property of graphene is that it has Dirac points. Dirac points are points where two bands - two different eigenvalues, $E_1(k)$ and $E_2(k)$, of $H(k)$ - touch conically. In other words, we say that $(E_0,k_0)$ is a Dirac point in the energy-quasimomentum plane if there is some $\delta>0$, such that for all $k\in \bbT^*$ such that $|k-k_0|<\delta$, we have
\begin{align*}
    &|E_1(k) -E_0|\approx |v_d||k-k_0|\text{ and } &|E_2(k) -E_0|\approx -|v_d||k-k_0|,
\end{align*}
for some $v_d$ - called the Dirac velocity, see Section \ref{Setting} for more precise definition. This means that a wave packet localized in momentum space around that point disperses approximately according to a two-dimensional Dirac equation, the equation of evolution for massless relativistic fermions (see \cite{fefferman2014wave} for more details about the dispersion near Dirac points)- and hence the name.\par
The existence of these Dirac points was shown first in a tight-binding setting in the physics literature in \cite{slonczewski1958band, Wallace}, and in a richer model that was considered in the mathematics literature in \cite{kuchment1973spectra}. Later, it was proven for the continuous setting in the seminal work of Fefferman and Weinstein \cite{fefferman2012honeycomb}- which the present work draws inspiration from. \par
Fefferman and Weinstein modeled a single-layer graphene by a Schr\"{o}dinger operator with a honeycomb potential acting on $L^2(\bbR^2)$. A honeycomb lattice, defined here in Section \ref{Setting}, is, roughly speaking, a potential with the same symmetries as graphene. In \cite{fefferman2012honeycomb}, they showed that this model has, under some mild assumptions, Dirac points at the vertices of the Brillouin zone. Finally, they have shown that these points persist under a broad class of perturbations. \par
Later, in \cite{berkolaiko2018symmetry}, Berkolaiko and Comech gave a different proof of the results in \cite{fefferman2012honeycomb}, which made the role of symmetry in the arguments of \cite{fefferman2012honeycomb} more evident by using more abstract arguments based on representation theory. Thus, they could generalize the results to many more applications and simplify some of the more technical aspects of that work. 

\subsubsection{Twisted bilayer graphene}
As mentioned above, the celebrated model of twisted bilayer graphene was conceived in 2011 by  Bistritzer and MacDonald \cite{bistritzer2011moire}. 
To make the analysis tractable, Bistritzer and MacDonald restricted their attention to quasimomentum near the Dirac cones of the single-layer model. They could approximate the evolution using a periodic operator, regardless of whether the angle is commensurate or incommensurate.\par 
The breakthrough work of Becker, Embree,  Wittsten, and  Zworski \cite{becker2022mathematics}, as well as an alternate proof given by Watson and Luskin \cite{watson2021existence},  showed that some approximation of the BM Hamiltonian contains flat bands- which are tied to the existence of the aforementioned "magic angles". Then, in a series of papers, Becker and collaborators \cite{becker2021spectral,becker2024fine,becker2024dirac2} studied this model in greater detail, providing even more detailed spectral descriptions of the flat bands. In a very recent work, Becker, Quinn, Tao, Watson, and Yang \cite{becker2024dirac} established the existence of Dirac cones and the existence of magic angles with degenerate Dirac cones for the full BM model.  \par
In the last couple of years, there has been an attempt to better understand the approximations leading to the BM Hamiltonian by Canc{\`e}s, Garrigue, and Gontier \cite{cances2023simple} in the continuum setting, and by Watson, Kong, Macdonald, and Luskin \cite{watson2023bistritzer} in the tight-binding setting. These studies have rigorously estimated the error terms from the derivation of the BM Hamiltonian from the original Dirac equation. Thus, their results bound the error when comparing the evolution of the full operator to that of the approximate operator. This bound is time-dependent. For a recent survey of the results in this field, see \cite{zworski2024mathematical}.\par
There are still aspects of the BM model that are not well understood, especially the approximation in the continuum setting (for example, defining the Kohn-Sham potential for incommensurate systems, see \cite{cances2023simple}). Moreover, superconductivity usually arises from some spectral phenomenon in the single-particle theory. Generally speaking, approximations to the evolution on certain time scales, such as the BM model to the full model, do not provide information about the spectral properties. For that, a different notion of convergence is usually required. So, we would like to build a more fundamental understanding of the electronic transport properties of TBG without using this model's assumptions. Specifically, we establish the existence of Dirac cones for commensurate angles without going through the approximate BM model, but rather directly from a more general description. 

\subsection{Outline of the paper}
This paper is organized as follows: \par
Section \ref{Setting} introduces the basic setting and notation, as well as the main tools of Floquet theory, which enable us to state our main results precisely. \par 
Section \ref{SecondOrderPert} will prove Theorem \ref{DiracPoint}, which allows us to conclude the existence of Dirac points for honeycomb potentials with slightly different conditions than the main theorems in \cite{berkolaiko2018symmetry,fefferman2012honeycomb}- thus enabling us to use them for a larger family of twisted bilayer potentials.\par
Section \ref{TBP} will prove our main results regarding the twisted bilayer potentials with commensurate angles. First, we prove Theorem \ref{NewLattice}, which shows that our potential is periodic with respect to a scaling of a honeycomb potential. Then, we show that for some examples of twisted potential, the condition in \cite{berkolaiko2018symmetry, fefferman2012honeycomb} does not hold, and provide a description of our new technical condition- Lemma \ref{WDescription}. Finally, we show that for a small enough coupling constant, the Dirac velocity decays like the reciprocal of the scaling factor in Theorem \ref{Vanishing}.  \par
Section \ref{Example} will give examples of a twisted potential such that for all angles, the technical condition of Theorem \ref{DiracPoint} holds.\par
Finally, Appendix \ref{ExpLatCom} contains some explicit lattice computations, and Appendix \ref{Notations} collects the notation used in this paper.

\subsection*{Acknowledgment} The author thanks Adam Black, Long Li, Giorgio Young, and Elad Zelingher for many discussions on this problem. In addition, the author would also like to thank Mitchell Luskin, Alex Watson, and their group, as well as Svetlana Jitomirskaya, Matthew Powell, and Peter Kuchment for fruitful discussions.

\section{The setting and results}\label{Setting}
\subsection{Geometry}
We start with the definitions of honeycomb potentials and lattices. We mostly follow the notations conventions set in \cite{fefferman2012honeycomb}. We recall the honeycomb lattice\footnote{This is, in fact, a triangular lattice, but it turns out that for this analysis, this is enough- see discussion in \cite{fefferman2012honeycomb}. } is given by: 
\begin{align*}
    &v_1=\begin{pmatrix}\frac{\sqrt{3}}{2}\\\frac{1}{2}\end{pmatrix}, v_2=\begin{pmatrix}\frac{\sqrt{3}}{2}\\-\frac{1}{2}\end{pmatrix},&\Lambda=v_1\bbZ\oplus v_2\bbZ
\end{align*}
We also consider the reciprocal lattice, defined by 
 \begin{align*}
      & k_1=\frac{4\pi}{\sqrt{3}}\begin{pmatrix} \frac{1}{2}\\\frac{\sqrt{3}}{2}\end{pmatrix},k_2=\frac{4\pi}{\sqrt{3}}\begin{pmatrix} \frac{1}{2}\\-\frac{\sqrt{3}}{2}\end{pmatrix},&\Lambda^*=k_1\bbZ\oplus k_2\bbZ
\end{align*}
It is convenient to define the following matrices
\begin{align*}
    &\nu =\begin{pmatrix}v_1 &v_2\end{pmatrix}=\begin{pmatrix}\frac{\sqrt{3}}{2}&\frac{\sqrt{3}}{2}\\\frac{1}{2}&-\frac{1}{2} \end{pmatrix},&\kappa=\begin{pmatrix}k_1 &k_2\end{pmatrix}=\frac{4\pi}{\sqrt{3}}\begin{pmatrix} \frac{1}{2}&\frac{1}{2}\\\frac{\sqrt{3}}{2}&-\frac{\sqrt{3}}{2} \end{pmatrix}
\end{align*}
Both lattices can be written in the following form:
\begin{align*}
    \Lambda=\nu \bbZ^2, \Lambda^*=\kappa \bbZ^2
\end{align*}
and we note that for any $u_1,u_2\in \bbZ^2$ we have that $\braket{\kappa u_1,\nu u_2}=2\pi \braket{u_1,u_2}$.\par
We distinguish between the Euclidean inner product, which we denote by $\braket{\cdot,\cdot}$, and the inner product on Hilbert spaces, which we denote by $(\cdot,\cdot)$, for clarity. \par
Throughout this paper, quantities with a tilde above them, such as $\tilde{\nu} $, denote quantities related to a honeycomb lattice, $\tilde{\Lambda}$, without explicit dependence on its base vectors. $\Lambda$ and $\Lambda^*$ always refer to the above choices of base vectors, and quantities with the superscript of $\theta$ refer to quantities related to the new lattice generated by the intersection of twisted lattices by commensurate angle $\theta$. \par
For any honeycomb lattice, $\tilde{\Lambda}$, with base matrix $\tilde{\nu}$, and dual matrix $\tilde{\kappa}$, we may define the unit cell $\tilde{\Omega}$, and the Brillouin zone, $\tilde{\calB}$ by
\begin{align*}
    &\tilde{\Omega}=\tilde{\nu}[0,1]^2,&\tilde{\calB}=\{k\in \bbR^2\mid \forall a\in \tilde{\Lambda}^*, |k|\leq |k-a|\}
\end{align*}
Next, we denote the rotation matrix by angle $\theta$ as  
\begin{align*}
    &R_\theta=\begin{pmatrix}\cos(\theta)& \sin(\theta)\\-\sin(\theta)& \cos(\theta)\end{pmatrix}
\end{align*}
and we denote the corresponding operator by $\calR_{\theta}$, that is $\calR_{\theta} f(x)=f(R_{-\theta}x)$.\par
We single out the rotation by $\frac{2\pi}{3}$, by denoting $R=R_{\frac{2}{3}\pi}$, and the corresponding operator we denote by $\calR$. \par
The points of high symmetry in the Brillouin zone are of particular importance- these are points where rotation by $R$ results in a shift by the dual lattice:
\begin{align*}
    \tilde{\bbP}=\{\vec{k}\in \tilde{\calB}\mid (R-\id)\vec{k}\in \tilde{\kappa} \bbZ^2\}
\end{align*}
Moreover, we can decompose it into three disjoint orbits
\begin{align*}
    \tilde{\bbP}=\{\tilde{K},R\tilde{K},R^2\tilde{K}\}\bigsqcup \{\tilde{K}',R\tilde{K}',R^2\tilde{K}'\}\bigsqcup\{0\}
\end{align*}
with $\tilde{K}=\frac{1}{3}\begin{cases}  \kappa \begin{pmatrix}1\\-1\end{pmatrix}, &\tilde{\Lambda}=\Lambda \\ \nu \begin{pmatrix}1\\1\end{pmatrix},&\tilde{\Lambda}=\Lambda^* \end{cases}$ and $\tilde{K}'=-\tilde{K}$.
We consider the shift operators at the appropriate high-symmetry points. For $\vec{a}\in \bbR^2$, we may consider the translation operator $\calT_{\vec{a}}$ given by $\calT_{\vec{a}}f(x)=f(x-\vec{a})$.\par
Specifically, we are interested in the shift by $K_0=\frac{1}{3}\nu \begin{pmatrix}1\\1\end{pmatrix}=\begin{pmatrix}\frac{1}{\sqrt{3}}\\0\end{pmatrix}$, and its dual point $K_0^*=\frac{1}{3}\kappa \begin{pmatrix}1\\-1\end{pmatrix}=\frac{4\pi}{\sqrt{3}}\begin{pmatrix}0\\\sqrt{3}\end{pmatrix}$, and their associated shift operators $\calT_{K_0}$, and $\calT_{K_0^*}$. These shifts allow us to define the AB stacking potentials\footnote{We remark that one can write the hexagonal lattice of graphene as $(\Lambda-\frac{1}{2}K_0) \cup (\Lambda+\frac{1}{2}K_0)$.}.\par
We will also use the parity operator $Ff(x)=f(-x)$, which is equivalent to $\calR_{\pi}$. 
\begin{remark}
    Naturally, one may also consider the opposite shift, which leads to BA stacking. All the analysis below is also valid for this configuration. 
\end{remark}
With this in hand, we recall the definition of a honeycomb potential given in \cite{fefferman2012honeycomb}- extended to treat $\Lambda$ and $\Lambda^* $ on equal footing:
\begin{definition}
    A real-valued potential $U\in C^\infty (\bbR^2)$ is called a honeycomb potential if it satisfies the following properties, with respect to $\tilde{\Lambda}\in \{ \Lambda,\Lambda^*\}$:
    \begin{enumerate}
        \item It is periodic: $\forall a\in \tilde{\Lambda}, x\in \bbR^2, U(x+a)=U(x) $.
        \item It is even: $F[U](x)=U(-x)=U(x)$.
        \item It is symmetric under rotation by $R$, i.e. $\forall x\in \bbR^2,\calR[U](x)=U(R^{-1}x)=U(x)$.
    \end{enumerate} 
\end{definition}
For a list of examples of honeycomb lattices, we refer the reader to \cite{fefferman2012honeycomb}.\par
To define the twisted potential, we define the set of admissible interaction operators
\begin{definition}
     $G: (C^\infty\cap L^\infty) \times (C^\infty\cap L^\infty) \rightarrow C^\infty\cap L^\infty  $ is called an admissible interaction operator if it has the following properties, for any $f,h 
     \in C^\infty \cap L^\infty $
    \begin{enumerate}
        \item Its arguments bound it in the sense that there are some $C_g, C_{g'}>0$ and $\gamma,\gamma'>0$  such that 
        \begin{align*}
            &\|G(f,h)\|_\infty \leq C_g(\|f\|_\infty\|h\|_\infty)^{\gamma}\\
            &\|\nabla G(f,h)\|_\infty \leq C_{g
            }(\|\nabla f\|_\infty  \|\nabla h\|_\infty)^{\gamma'}
        \end{align*}
        \item\label{RotCom} $G$ respects rotations in the following sense
        \begin{align*}
            &\calR_{\alpha}G(f,h)=G(\calR_{\alpha}f,\calR_{\alpha}h)
        \end{align*}
        \item\label{Transcom} $G$ respects translations in the following sense
        \begin{align*}
            &\calT_{\textbf{a}}G(f,h)=G(\calT_{\textbf{a}}f,\calT_{\textbf{a}}h)
        \end{align*}
    \end{enumerate}
    We call $G^*$ an admissible interaction operator for AB stacking if, on top of the above, we have that, it is symmetric, i.e., for any $f,g \in C^\infty \cap L^\infty $: 
    \begin{align*}
        G^*(f,g)=G^*(g,f)
    \end{align*}
\end{definition}
\begin{remark}
    Note that Properties \ref{RotCom} and \ref{Transcom} are quite natural and can be expressed as the fact that spatial position dependence comes from $f,g$ and not from the definition of $G$. 
\end{remark}
With these definitions in hand, we define the twisted bilayer potential of angle $\theta$, which we denote by $W^\theta$, in AA stacking and AB stacking:
\begin{definition}
    Let $V$ and $Q$ be honeycomb potentials with $\Lambda$ as a lattice, and let $G$ be an admissible interaction operator, and $G^*$ an admissible interaction operator for AB stacking, then the corresponding twisted bilayer potential in AA stacking of angle $\theta$ is defined by
    \begin{align*}
        W_{AA}^\theta=G(\calR_\theta V,\calR_{-\theta} Q)
    \end{align*}
    The twisted bilayer potential in AB stacking of angle $\theta$ is defined by
    \begin{align*}
        W_{AB}^\theta=G^*(\frac{1}{3}\sum_{j=-1}^1\calT_{-\frac{1}{2}R^jK_0}\calR_\theta V,\frac{1}{3}\sum_{j=-1}^1\calT_{\frac{1}{2}R^jK_0}\calR_{-\theta} V)
    \end{align*}
\end{definition}
\begin{remark}\label{Rmk:Ambiguity}
    We note that there is some intrinsic ambiguity in AB stacking. Since the shift is taking place before the twist, shifting by $K_0$, $RK_0$, or $R^{-1}K_0$ all result in the same configuration. This is demonstrated in Figure \ref{ABStacking}. So, we average over the three options, ensuring that rotational symmetry is preserved.  
\end{remark}
\begin{figure}[h]
     \centering
     \begin{subfigure}[t]{0.45\textwidth}
         \centering
        \begin{tikzpicture}[scale=1]
        \pgfmathsetmacro {\sqr}{sqrt(3)}
        \pgfmathsetmacro {\vx}{0.5*\sqr}
        \pgfmathsetmacro {\vy}{0.5}
        \pgfmathsetmacro {\shift}{1/\sqr }
        
        \draw[blue, ultra thick](0,0) circle (2pt);
        \draw[blue, ultra thick](\vx,\vy) circle (2pt);
        \draw[blue, ultra thick](\vx,-\vy) circle (2pt);
        \draw[blue, ultra thick](2*\vx,0) circle (2pt);
        \draw[blue, ultra thick](-\vx,-\vy) circle (2pt);
        \draw[blue, ultra thick](-\vx,\vy) circle (2pt);
        \draw[blue, ultra thick](2*\vx,0) circle (2pt);
        \filldraw[blue, ultra thick](0+\shift,0) circle (2pt);
        \filldraw[blue, ultra thick](\vx+\shift,\vy) circle (2pt);
        \filldraw[blue, ultra thick](\vx+\shift,-\vy) circle (2pt) ;
        
        \filldraw[blue, ultra thick](-\vx+\shift,-\vy) circle (2pt);
        \filldraw[blue, ultra thick](-\vx+\shift,\vy) circle (2pt) ;
        \filldraw[blue, ultra thick](-2*\vx+\shift,0) circle (2pt) ;
        
        \draw[blue, thick] (0+\shift,0) --(\vx,\vy) --(\vx+\shift,\vy) --(2*\vx,0) --(\vx+\shift,-\vy) --(\vx,-\vy)--(0+\shift,0) -- cycle;
        \draw[blue, thick] (0,0) --(-\vx+\shift,\vy) --(-\vx,\vy) --(-2*\vx+\shift,0) -- (-\vx,-\vy) --(-\vx+\shift,-\vy)--(0,0) -- cycle;
        \draw[blue, thick] (0,0) --(0+\shift,0)--(0,0) -- cycle;

        \draw[red, ultra thick](0+\shift,0) circle (2pt);
        \draw[red, ultra thick](\vx+\shift,\vy) circle (2pt);
        \draw[red, ultra thick](\vx+\shift,-\vy) circle (2pt);
        \draw[red, ultra thick](2*\vx+\shift,0) circle (2pt);
        \draw[red, ultra thick](-\vx+\shift,-\vy) circle (2pt);
        \draw[red, ultra thick](-\vx+\shift,\vy) circle (2pt);
        \draw[red, ultra thick](2*\vx+\shift,0) circle (2pt);
        \filldraw[red, ultra thick](0+\shift+\shift,0) circle (2pt);
        \filldraw[red, ultra thick](\vx+\shift+\shift,\vy) circle (2pt);
        \filldraw[red, ultra thick](\vx+\shift+\shift,-\vy) circle (2pt) ;
        
        \filldraw[red, ultra thick](-\vx+\shift+\shift,-\vy) circle (2pt);
        \filldraw[red, ultra thick](-\vx+\shift+\shift,\vy) circle (2pt) ;
        \filldraw[red, ultra thick](-2*\vx+\shift+\shift,0) circle (2pt) ;
        
        \draw[red, thick] (0+\shift+\shift,0) --(\vx+\shift,\vy) --(\vx+\shift+\shift,\vy) --(2*\vx+\shift,0) --(\vx+\shift+\shift,-\vy) --(\vx+\shift,-\vy)--(0+\shift+\shift,0) -- cycle;
        \draw[red, thick] (0+\shift,0) --(-\vx+\shift+\shift,\vy) --(-\vx+\shift,\vy) --(-2*\vx+\shift+\shift,0) -- (-\vx+\shift,-\vy) --(-\vx+\shift+\shift,-\vy)--(0+\shift,0) -- cycle;
        \draw[red, thick] (0+\shift,0) --(0+\shift+\shift,0)--(0+\shift,0) -- cycle;
    \end{tikzpicture}
    \caption{Illustration of AB stacking with a shift by $K_0$.}
    \label{ABStackK0}
     \end{subfigure}
     \hfill
     \begin{subfigure}[t]{0.45\textwidth}
         \centering
        \begin{tikzpicture}[scale=1]
        \pgfmathsetmacro {\sqr}{sqrt(3)}
        \pgfmathsetmacro {\vx}{0.5*\sqr}
        \pgfmathsetmacro {\vy}{0.5}
        \pgfmathsetmacro {\shift}{1/\sqr }
        \pgfmathsetmacro {\shiftRKx}{-1/(2*\sqr) }
        \pgfmathsetmacro {\shiftRKy}{-1/(2) }
        
        \draw[blue, ultra thick](0,0) circle (2pt);
        \draw[blue, ultra thick](\vx,\vy) circle (2pt);
        \draw[blue, ultra thick](\vx,-\vy) circle (2pt);
        \draw[blue, ultra thick](2*\vx,0) circle (2pt);
        \draw[blue, ultra thick](-\vx,-\vy) circle (2pt);
        \draw[blue, ultra thick](-\vx,\vy) circle (2pt);
        \draw[blue, ultra thick](2*\vx,0) circle (2pt);
        \filldraw[blue, ultra thick](0+\shift,0) circle (2pt);
        \filldraw[blue, ultra thick](\vx+\shift,\vy) circle (2pt);
        \filldraw[blue, ultra thick](\vx+\shift,-\vy) circle (2pt) ;
        
        \filldraw[blue, ultra thick](-\vx+\shift,-\vy) circle (2pt);
        \filldraw[blue, ultra thick](-\vx+\shift,\vy) circle (2pt) ;
        \filldraw[blue, ultra thick](-2*\vx+\shift,0) circle (2pt) ;
        
        \draw[blue, thick] (0+\shift,0) --(\vx,\vy) --(\vx+\shift,\vy) --(2*\vx,0) --(\vx+\shift,-\vy) --(\vx,-\vy)--(0+\shift,0) -- cycle;
        \draw[blue, thick] (0,0) --(-\vx+\shift,\vy) --(-\vx,\vy) --(-2*\vx+\shift,0) -- (-\vx,-\vy) --(-\vx+\shift,-\vy)--(0,0) -- cycle;
        \draw[blue, thick] (0,0) --(0+\shift,0)--(0,0) -- cycle;

        \draw[red, ultra thick](0+\shiftRKx,0+\shiftRKy) circle (2pt);
        \draw[red, ultra thick](\vx+\shiftRKx,\vy+\shiftRKy) circle (2pt);
        \draw[red, ultra thick](\vx+\shiftRKx,-\vy+\shiftRKy) circle (2pt);
        \draw[red, ultra thick](2*\vx+\shiftRKx,0+\shiftRKy) circle (2pt);
        \draw[red, ultra thick](-\vx+\shiftRKx,-\vy+\shiftRKy) circle (2pt);
        \draw[red, ultra thick](-\vx+\shiftRKx,\vy+\shiftRKy) circle (2pt);
        \draw[red, ultra thick](2*\vx+\shiftRKx,0+\shiftRKy) circle (2pt);
        \filldraw[red, ultra thick](0+\shift+\shiftRKx,0+\shiftRKy) circle (2pt);
        \filldraw[red, ultra thick](\vx+\shift+\shiftRKx,\vy+\shiftRKy) circle (2pt);
        \filldraw[red, ultra thick](\vx+\shift+\shiftRKx,-\vy+\shiftRKy) circle (2pt) ;
        
        \filldraw[red, ultra thick](-\vx+\shift+\shiftRKx,-\vy+\shiftRKy) circle (2pt);
        \filldraw[red, ultra thick](-\vx+\shift+\shiftRKx,\vy+\shiftRKy) circle (2pt) ;
        \filldraw[red, ultra thick](-2*\vx+\shift+\shiftRKx,0+\shiftRKy) circle (2pt) ;
        
        \draw[red, thick] (0+\shift+\shiftRKx,0+\shiftRKy) --(\vx+\shiftRKx,\vy+\shiftRKy) --(\vx+\shift+\shiftRKx,\vy+\shiftRKy) --(2*\vx+\shiftRKx,0+\shiftRKy) --(\vx+\shift+\shiftRKx,-\vy+\shiftRKy) --(\vx+\shiftRKx,-\vy+\shiftRKy)--(0+\shift+\shiftRKx,0+\shiftRKy) -- cycle;
        \draw[red, thick] (0+\shiftRKx,0+\shiftRKy) --(-\vx+\shift+\shiftRKx,\vy+\shiftRKy) --(-\vx+\shiftRKx,\vy+\shiftRKy) --(-2*\vx+\shift+\shiftRKx,0+\shiftRKy) -- (-\vx+\shiftRKx,-\vy+\shiftRKy) --(-\vx+\shift+\shiftRKx,-\vy+\shiftRKy)--(0+\shiftRKx,0+\shiftRKy) -- cycle;
        \draw[red, thick] (0+\shiftRKx,0+\shiftRKy) --(0+\shift+\shiftRKx,0+\shiftRKy)--(0+\shiftRKx,0+\shiftRKy) -- cycle;
    \end{tikzpicture}
    \caption{Illustration of AB stacking with a shift by $RK_0$.}
    \label{ABStackRK0}
     \end{subfigure}
     \hfill
     \begin{subfigure}[t]{0.45\textwidth}
         \centering
        \begin{tikzpicture}[scale=1]
        \pgfmathsetmacro {\sqr}{sqrt(3)}
        \pgfmathsetmacro {\vx}{0.5*\sqr}
        \pgfmathsetmacro {\vy}{0.5}
        \pgfmathsetmacro {\shift}{1/\sqr }
        \pgfmathsetmacro {\shiftRKx}{-1/(2*\sqr) }
        \pgfmathsetmacro {\shiftRKy}{1/(2) }
        
        \draw[blue, ultra thick](0,0) circle (2pt);
        \draw[blue, ultra thick](\vx,\vy) circle (2pt);
        \draw[blue, ultra thick](\vx,-\vy) circle (2pt);
        \draw[blue, ultra thick](2*\vx,0) circle (2pt);
        \draw[blue, ultra thick](-\vx,-\vy) circle (2pt);
        \draw[blue, ultra thick](-\vx,\vy) circle (2pt);
        \draw[blue, ultra thick](2*\vx,0) circle (2pt);
        \filldraw[blue, ultra thick](0+\shift,0) circle (2pt);
        \filldraw[blue, ultra thick](\vx+\shift,\vy) circle (2pt);
        \filldraw[blue, ultra thick](\vx+\shift,-\vy) circle (2pt) ;
        
        \filldraw[blue, ultra thick](-\vx+\shift,-\vy) circle (2pt);
        \filldraw[blue, ultra thick](-\vx+\shift,\vy) circle (2pt) ;
        \filldraw[blue, ultra thick](-2*\vx+\shift,0) circle (2pt) ;
        
        \draw[blue, thick] (0+\shift,0) --(\vx,\vy) --(\vx+\shift,\vy) --(2*\vx,0) --(\vx+\shift,-\vy) --(\vx,-\vy)--(0+\shift,0) -- cycle;
        \draw[blue, thick] (0,0) --(-\vx+\shift,\vy) --(-\vx,\vy) --(-2*\vx+\shift,0) -- (-\vx,-\vy) --(-\vx+\shift,-\vy)--(0,0) -- cycle;
        \draw[blue, thick] (0,0) --(0+\shift,0)--(0,0) -- cycle;

        \draw[red, ultra thick](0+\shiftRKx,0+\shiftRKy) circle (2pt);
        \draw[red, ultra thick](\vx+\shiftRKx,\vy+\shiftRKy) circle (2pt);
        \draw[red, ultra thick](\vx+\shiftRKx,-\vy+\shiftRKy) circle (2pt);
        \draw[red, ultra thick](2*\vx+\shiftRKx,0+\shiftRKy) circle (2pt);
        \draw[red, ultra thick](-\vx+\shiftRKx,-\vy+\shiftRKy) circle (2pt);
        \draw[red, ultra thick](-\vx+\shiftRKx,\vy+\shiftRKy) circle (2pt);
        \draw[red, ultra thick](2*\vx+\shiftRKx,0+\shiftRKy) circle (2pt);
        \filldraw[red, ultra thick](0+\shift+\shiftRKx,0+\shiftRKy) circle (2pt);
        \filldraw[red, ultra thick](\vx+\shift+\shiftRKx,\vy+\shiftRKy) circle (2pt);
        \filldraw[red, ultra thick](\vx+\shift+\shiftRKx,-\vy+\shiftRKy) circle (2pt) ;
        
        \filldraw[red, ultra thick](-\vx+\shift+\shiftRKx,-\vy+\shiftRKy) circle (2pt);
        \filldraw[red, ultra thick](-\vx+\shift+\shiftRKx,\vy+\shiftRKy) circle (2pt) ;
        \filldraw[red, ultra thick](-2*\vx+\shift+\shiftRKx,0+\shiftRKy) circle (2pt) ;
        
        \draw[red, thick] (0+\shift+\shiftRKx,0+\shiftRKy) --(\vx+\shiftRKx,\vy+\shiftRKy) --(\vx+\shift+\shiftRKx,\vy+\shiftRKy) --(2*\vx+\shiftRKx,0+\shiftRKy) --(\vx+\shift+\shiftRKx,-\vy+\shiftRKy) --(\vx+\shiftRKx,-\vy+\shiftRKy)--(0+\shift+\shiftRKx,0+\shiftRKy) -- cycle;
        \draw[red, thick] (0+\shiftRKx,0+\shiftRKy) --(-\vx+\shift+\shiftRKx,\vy+\shiftRKy) --(-\vx+\shiftRKx,\vy+\shiftRKy) --(-2*\vx+\shift+\shiftRKx,0+\shiftRKy) -- (-\vx+\shiftRKx,-\vy+\shiftRKy) --(-\vx+\shift+\shiftRKx,-\vy+\shiftRKy)--(0+\shiftRKx,0+\shiftRKy) -- cycle;
        \draw[red, thick] (0+\shiftRKx,0+\shiftRKy) --(0+\shift+\shiftRKx,0+\shiftRKy)--(0+\shiftRKx,0+\shiftRKy) -- cycle;
    \end{tikzpicture}
    \caption{Illustration of AB stacking with a shift by $R^2K_0$.}
    \label{ABStackRRK0}
     \end{subfigure}
     \caption{Illustration of AB stacking of graphene with the different possible shifts. An illustration of shift by $K_0$ in \ref{ABStackK0}, shift by $RK_0$ in \ref{ABStackRK0}, and a shift by $R^2K_0$ in \ref{ABStackRRK0}. In all cases, the red hexagon corresponds to the upper layer, and the blue corresponds to the lower layer.}
     \label{ABStacking}
\end{figure}
\begin{remark}
    Note that for AB stacking, we require both copies to be of the same potential. Otherwise, this will break the flip symmetry (as the layers are exchanged).  
\end{remark}
As shown later, for AB stacking at commensurate angles, $W^\theta_{AB}$ is not a standard honeycomb potential but rather an almost honeycomb potential. We define this as follows:
\begin{definition}
    A real-valued potential $U^\theta \in C^\infty(\bbR^2)$ is called an almost honeycomb potential if it satisfies the following properties with respect to $\tilde{\Lambda}\in \{\Lambda, \Lambda^*\}$
    \begin{enumerate}
        \item It is periodic: $\forall a\in \tilde{\Lambda}, x\in \bbR^2, U^\theta(x+a)=U^\theta(x) $.
        \item It is invariant under flips: $F^*[U](x)=U^{-\theta}(-x)=U^\theta(x)$.
        \item It is symmetric under rotation by $R$, i.e. $\forall x\in \bbR^2,\calR[U^\theta](x)=U^\theta(R^{-1}x)=U^\theta(x)$.
    \end{enumerate}
\end{definition}
\begin{examples}\label{Examples}
    For the convenience of the reader, we collect some examples of admissible interaction operators, all of which can be easily verified to have the properties above: 
    \begin{enumerate}
        \item  As mentioned in the introduction, one may take a simple model for $G$, as a concrete example:
        \begin{align*}
            G(f,g)=\frac{1}{2}(f+g)
        \end{align*}
        Which results in the following:
        \begin{align}
            &W^\theta_{0,AA} =\frac{1}{2}(V(R_{\theta}x)+Q(R_{-\theta}x))\label{AddPotAA}\\
            &W^\theta_{0,AB} =\frac{1}{6}\sum_{j=-1}^1(V(R_{\theta}(x+\frac{1}{2}R_{\frac{2\pi}{3}}^jK_0))+V(R_{-\theta}(x-\frac{1}{2}R_{\frac{2\pi}{3}}^jK_0)))\label{AddPotAB}
        \end{align}
        \item One can also take an averaging-type operator: for some  $w\in L^1(\bbR^2)$, we may define 
        \begin{align*}
            G_w(f,h)(x)=f\ast_w h(x)=\int\limits_{\bbR^2} f(z)h(z)w(\|z-x\|)\,dz
        \end{align*}
        \item More generally, one may choose $w\in L^1(\bbR^2)$, and let $p(x,y)$ be a symmetric polynomial in two variables, with actions $(+,\cdot, \ast_w)$ and real coefficients. Then, we may take
        \begin{align*} 
            G(f,h)=p(f,h)
        \end{align*}
        \item \label{3dEx} Consider $U(x,y,z)$ to be a 3D graphene potential (modeling a single layer of graphene in a vacuum), then we can model the effective potential of a 2D slice at height $z_0$ between two layers located at $z_{top}$ and $z_{bottom}$ by taking $V(x,y)=U(x,y,z_0-z_{top})$ and $Q(x,y)=U(x,y,z_0-z_{bottom})$. We may then choose $G$ to accommodate any necessary inter-layer interaction corrections, provided they are compatible with the above constraints.
    \end{enumerate}
\end{examples}
\begin{remark}
    The model proposed above is an idealization of TBG in two primary respects. First, while our choice of $V$, $Q$, and $G$ allows us to model out-of-plane coupling (as discussed in Remark \ref{3dRmk} and Example 4 above), the framework reduces a complex three-dimensional system to an effective two-dimensional one. In addition, in an actual system, mechanical relaxation effects change the stacking type (AA, AB, and BA) over the sample. 
\end{remark}

Our first result concerns describing the set of commensurate angles. For this, we start by denoting by $\calC$ the set of $\theta$ for which $\Lambda^\theta=R_{\theta}\Lambda \cap R_{-\theta}\Lambda$ has a non trivial element. We first note that this is enough to get that $\Lambda^\theta$ contains a lattice:
\begin{proposition}
    If $0\neq \mathbf{a}\in \Lambda^\theta$, then we have $\Lambda^\theta$ contains a non-degenerate lattice.
\end{proposition}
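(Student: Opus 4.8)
\subsection*{Proof proposal}

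The plan is to exploit the three-fold rotational symmetry of $\Lambda$. Since $\Lambda^\theta = R_\theta\Lambda \cap R_{-\theta}\Lambda$ is an intersection of two subgroups of $(\bbR^2,+)$, it is itself a subgroup; thus it suffices to exhibit two linearly independent vectors inside $\Lambda^\theta$, for then the $\bbZ$-span of those two vectors is a non-degenerate lattice contained in $\Lambda^\theta$.

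First I would record that $\Lambda$ is invariant under the rotation $R=R_{\frac{2\pi}{3}}$: this is immediate from checking the action on the two generators (one finds $R v_1 = v_2-v_1$ and $R v_2 = -v_2$, both in $\Lambda$), and it reflects the fact that $\Lambda$ is the triangular lattice underlying a honeycomb structure. Because rotations about the origin commute with one another, $R_\theta \Lambda$ and $R_{-\theta}\Lambda$ are each $R$-invariant, and hence so is their intersection $\Lambda^\theta$.

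Now take the given $0\neq \mathbf{a}\in\Lambda^\theta$. Then $R\mathbf{a}\in\Lambda^\theta$ as well. The key point is that $\mathbf{a}$ and $R\mathbf{a}$ are linearly independent: the matrix $R$ has eigenvalues $e^{\pm 2\pi i/3}$, which are non-real, so $R$ preserves no one-dimensional real subspace of $\bbR^2$; equivalently, $R\mathbf{a}$ is never a real multiple of the nonzero vector $\mathbf{a}$. Consequently $\mathbf{a}\bbZ\oplus R\mathbf{a}\bbZ$ has rank two, hence is non-degenerate, and it sits inside $\Lambda^\theta$ because $\Lambda^\theta$ is a group containing both $\mathbf{a}$ and $R\mathbf{a}$. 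This proves the proposition.

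I do not expect a genuine obstacle here: the whole content is the rotational-symmetry observation together with the elementary fact that a rotation by $\tfrac{2\pi}{3}$ cannot fix a line. The only points needing (routine) care are verifying the $R$-invariance of $\Lambda$ on its generators and noting that discreteness of $\Lambda^\theta$, inherited from $R_\theta\Lambda$, is what makes ``lattice'' the appropriate word — although for the statement as phrased one only needs that $\mathbf{a}$ and $R\mathbf{a}$ span $\bbR^2$.
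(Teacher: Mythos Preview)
Your proposal is correct and follows essentially the same route as the paper: both exploit the $R$-invariance of $\Lambda$ (and hence of $\Lambda^\theta$, via commutativity of rotations) to produce the second vector $R\mathbf{a}$, and then note that $\mathbf{a}$ and $R\mathbf{a}$ are linearly independent. Your version is slightly more detailed in justifying the $R$-invariance on generators and the linear independence via the non-real eigenvalues of $R$, but the argument is the same.
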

\begin{proof}
    We note that if $\textbf{a}\in \Lambda^\theta=R_{\theta}\Lambda \cap R_{-\theta}\Lambda$, then we have that 
    \begin{align*}
        R\textbf{a}\in RR_{\theta}\Lambda \cap RR_{-\theta}\Lambda=R_{\theta}(R\Lambda) \cap R_{-\theta}(R\Lambda)=\Lambda^\theta
    \end{align*}
    Since $\textbf{a}\neq 0 $, we have that $R\textbf{a},\textbf{a}$ are two linearly independent vectors, and so they generate a non-degenerate lattice. 
    And naturally, we have $\forall c\in \bbZ, c\textbf{a},cR\textbf{a} \in \Lambda^\theta$.
    So we conclude that $\Lambda^\theta$ contains a non-degenerate lattice- as needed. 
\end{proof}
We note that, in general, we have that  $\calT_{\textbf{a}}\calR_{\alpha}=\calR_{\alpha}\calT_{R_{-\alpha}\textbf{a}}$. And so we have the following proposition: 
\begin{proposition}\label{Periodicity}
    We have that $W_{x}^\theta$, for $x\in \{AA,AB\}$ is periodic with respect to $\Lambda^\theta$.
\end{proposition}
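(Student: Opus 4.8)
The plan is to verify directly that $\calT_{\mathbf a} W^\theta_x = W^\theta_x$ for every $\mathbf a \in \Lambda^\theta$ and $x\in\{AA,AB\}$, using three ingredients: the translation-covariance of admissible interaction operators (property (\ref{Transcom})), the $\Lambda$-periodicity of $V$, and the commutation identity $\calT_{\mathbf a}\calR_\alpha = \calR_\alpha\calT_{R_{-\alpha}\mathbf a}$ recorded just before the statement. The key preliminary observation is that $\mathbf a\in \Lambda^\theta = R_\theta\Lambda\cap R_{-\theta}\Lambda$ is equivalent to $R_{-\theta}\mathbf a\in\Lambda$ \emph{and} $R_{\theta}\mathbf a\in\Lambda$; since $V$ is $\Lambda$-periodic, this yields $\calT_{R_{-\theta}\mathbf a}V = \calT_{R_{\theta}\mathbf a}V = V$.

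For AA stacking, I would write $\calT_{\mathbf a}W^\theta_{AA} = \calT_{\mathbf a}G(\calR_\theta V,\calR_{-\theta}V)$, pull the translation into both arguments via property (\ref{Transcom}), and then apply the commutation identity in each slot to get $G(\calR_\theta\calT_{R_{-\theta}\mathbf a}V,\;\calR_{-\theta}\calT_{R_{\theta}\mathbf a}V)$; the preliminary observation collapses this back to $G(\calR_\theta V,\calR_{-\theta}V) = W^\theta_{AA}$.

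For AB stacking, I would apply $\calT_{\mathbf a}$ to $W^\theta_{AB} = G^*\bigl(\tfrac13\sum_{j=-1}^1\calT_{-\frac12 R^jK_0}\calR_\theta V,\;\tfrac13\sum_{j=-1}^1\calT_{\frac12 R^jK_0}\calR_{-\theta}V\bigr)$, again use property (\ref{Transcom}) of $G^*$ to move $\calT_{\mathbf a}$ inside each averaged argument, commute $\calT_{\mathbf a}$ past the fixed shifts $\calT_{\pm\frac12 R^jK_0}$ (all translations commute), and finally apply the commutation identity and the preliminary observation termwise in the sum to recover the two original arguments unchanged, hence $\calT_{\mathbf a}W^\theta_{AB} = W^\theta_{AB}$. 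The computation is entirely routine; the only point demanding a little care is the bookkeeping of which rotation is attached to which layer, so that the two separate conditions defining $\Lambda^\theta$ are each invoked exactly where needed — there is no genuine obstacle here.
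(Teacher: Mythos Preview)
Your proposal is correct and follows essentially the same approach as the paper's proof: both arguments use the translation-covariance of $G$ (resp.\ $G^*$), the commutation identity $\calT_{\mathbf a}\calR_\alpha=\calR_\alpha\calT_{R_{-\alpha}\mathbf a}$, and the $\Lambda$-periodicity of $V$ together with the two conditions $R_{\pm\theta}\mathbf a\in\Lambda$ encoded in $\mathbf a\in\Lambda^\theta$. The only cosmetic difference is that for the AB case the paper first checks invariance for a generic shift vector $\mathbf b$ and then invokes linearity of $\calT_{\mathbf a}$ to handle the averaged sum, whereas you work termwise from the start; the content is identical.
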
 
\begin{proof}
    We note that for any $\textbf{a}\in \Lambda^\theta$, we have that 
    \begin{align*}
        &\calT_{\textbf{a}}W_{AA}^\theta= G(\calT_{\textbf{a}}\calR_{\theta}V, \calT_{\textbf{a}}\calR_{-\theta}Q) =G(\calR_{\theta}\calT_{R_{-\theta}\textbf{a}}V, \calR_{-\theta}\calT_{R_{\theta}\textbf{a}}Q) =G(\calR_{\theta}V,\calR_{-\theta}Q)=W_{AA}^\theta
    \end{align*}
    since $R_\theta\textbf{a},R_{-\theta}\textbf{a}\in \Lambda$.\par
    For AB stacking we note that for any vector $\textbf{b}\in \bbR^2 $ we have that 
    \begin{align*}
        &G^*(\calT_{\textbf{a}}\calT_{-\textbf{b}}\calR_{\theta}V ,\calT_{\textbf{a}}\calT_{\textbf{b}}\calR_{-\theta}V)  =G^*(\calT_{-\textbf{b}}\calT_{\textbf{a}}\calR_{\theta}V,\calT_{\textbf{b}}\calT_{\textbf{a}}\calR_{-\theta}V)\\
        &=G^*(\calT_{-\textbf{b}}\calR_{\theta}\calT_{R_{-\theta}\textbf{a}}V,\calT_{\textbf{b}}\calR_{-\theta}\calT_{R_{\theta}\textbf{a}}V)=G^*(\calT_{-\textbf{b}}\calR_{\theta}V,\calT_{\textbf{b}}\calR_{-\theta}V)
    \end{align*}
    since translations commute. From the linearity of the translation operator, we get that $\calT_{\textbf{a}}W_{AB}^\theta=W_{AB}^\theta$.   In particular, both potentials are periodic with respect to $\Lambda^\theta$- as claimed. 
\end{proof}
\begin{remark}\label{Rmk:Commensurate}
    One can also define the set of angles that generate commensurate potentials $\tilde{\calC}$- that is, the set of all $\theta $ such that there exists a $0\neq \textbf{a}\in \bbR^2$ such that  $W^\theta (x+\textbf{a})=W^\theta(x)$. It is easy to see that $\calC\subset \tilde{\calC}$.We believe that $\calC= \tilde{\calC}$- though we will not try to prove it here. 
\end{remark}
With this notation, we prove the following 
\begin{theorem}\label{NewLattice}
    We have 
    \begin{align*}
        \theta \in \calC\cap (0,\frac{\pi}{3})\iff \exists 0<b<a, \gcd(b,a)=1, \tan(\theta)=\frac{\sqrt{3}b}{a}
    \end{align*}
    And any other $\tilde{\theta}\in \calC$ can be reduced via the potential symmetries to some $\theta \in \calC\cap [0,\frac{\pi}{3})$. \par
    Furthermore,  if we denote  
    \begin{align*}
        &\alpha =\begin{cases}
            8\pi, & 3\mid a \text{ and } 2\nmid ab\\
            2, & 3\nmid a \text{ and } 2\nmid ab\\
            4\pi , & 3\mid a \text{ and } 2\mid ab\\
           1, & 3\nmid a \text{ and } 2\mid ab
        \end{cases},& N=\frac{1}{\alpha} \sqrt{a^2+3b^2}
    \end{align*}
    Then we have that
    \begin{align*}
        \Lambda^\theta=N\begin{cases}
            \Lambda, & 3\nmid a\\
            \Lambda^* & 3\mid a
        \end{cases}
    \end{align*}
\end{theorem}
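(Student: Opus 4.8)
The plan is to recast the problem in complex notation and use the arithmetic of the Eisenstein integers. Identifying $\bbR^2$ with $\bbC$, the matrix $R_\theta$ acts as multiplication by $e^{-i\theta}$, and computing with the generators $v_1 = e^{i\pi/6}$, $v_2 = e^{-i\pi/6}$ one finds
\[
\Lambda = e^{-i\pi/6}\,\mathcal{O}, \qquad \Lambda^* = \tfrac{4\pi}{\sqrt3}\,\mathcal{O}, \qquad \mathcal{O} := \bbZ[\omega],\ \ \omega = e^{i\pi/3},
\]
where $\mathcal{O}$ is the ring of Eisenstein integers: a principal ideal domain with unit group $\{\pm1,\pm\omega,\pm\omega^2\}$, in which $2$ is inert and $3$ ramifies as $3\sim\lambda^2$ with $\lambda := \sqrt{-3}$ a prime of norm $3$. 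Since $\mathcal{O}$ is stable under multiplication by $\omega$ and under complex conjugation, one has $R_{\pm\pi/3}\Lambda = \Lambda$ and hence $\Lambda^{\theta+\pi/3} = \Lambda^{-\theta} = \Lambda^{\theta}$; this periodicity and evenness in $\theta$ is exactly what reduces an arbitrary $\tilde\theta\in\calC$ to a representative in $\calC\cap[0,\pi/3)$.

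For the characterization of $\calC\cap(0,\pi/3)$: after multiplying by $e^{i\pi/6}$, a nonzero $\mathbf a\in\Lambda^\theta$ corresponds to a nonzero $w\in\bbC$ with $e^{i\theta}w,\ e^{-i\theta}w\in\mathcal{O}$, whence $e^{2i\theta} = (e^{i\theta}w)/(e^{-i\theta}w)\in\mathrm{Frac}(\mathcal{O}) = \bbQ(\sqrt{-3})$. Writing $e^{2i\theta} = s + t\sqrt{-3}$ with $s,t\in\bbQ$ and $s^2+3t^2 = 1$ gives $\tan\theta = \sin 2\theta/(1+\cos 2\theta) = t\sqrt3/(1+s)\in\sqrt3\,\bbQ$, and on $(0,\pi/3)$ this is precisely $\tan\theta = \sqrt3\,b/a$ with $0<b<a$ and $\gcd(a,b)=1$. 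Conversely, given such $a,b$, set $\pi_\pm := a\pm b\sqrt{-3}\in\mathcal{O}$ and $M := \pi_+\pi_- = a^2+3b^2$; then $e^{\pm i\theta} = \pi_\pm/\sqrt M$, so $w = \sqrt M$ gives $e^{\pm i\theta}w = \pi_\pm\in\mathcal{O}$, i.e.\ $\Lambda^\theta\ne\{0\}$ and $\theta\in\calC$.

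For the structure of $\Lambda^\theta$, since $e^{\pm i\theta}\mathcal{O} = (\pi_\pm/\sqrt M)\,\mathcal{O}$ we get, using the PID identity $\pi_+\mathcal{O}\cap\pi_-\mathcal{O} = (\pi_+\pi_-/\delta)\,\mathcal{O}$ with $\delta := \gcd(\pi_+,\pi_-)$,
\[
\Lambda^\theta = e^{-i\pi/6}\bigl(e^{-i\theta}\mathcal{O}\cap e^{i\theta}\mathcal{O}\bigr) = \frac{e^{-i\pi/6}}{\sqrt M}\,\bigl(\pi_+\mathcal{O}\cap\pi_-\mathcal{O}\bigr) = \frac{\sqrt M}{\delta}\,e^{-i\pi/6}\mathcal{O}.
\]
If $\delta$ is, up to a unit, a positive integer, then $\Lambda^\theta = (\sqrt M/\delta)\,\Lambda$; if instead $\delta\sim(\text{integer})\cdot\lambda$, then writing $\sqrt{-3} = \sqrt3\,e^{i\pi/2}$ and using $e^{i\pi/2}\mathcal{O} = e^{i\pi/6}\mathcal{O}$ shows that the factor $1/\delta$ rotates $e^{-i\pi/6}\mathcal{O} = \Lambda$ by $\pi/6$, so $\Lambda^\theta$ becomes a rescaled copy of $\mathcal{O} = \tfrac{\sqrt3}{4\pi}\Lambda^*$. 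So everything reduces to computing $\delta$ up to associates: any common divisor of $\pi_\pm$ divides $\pi_++\pi_- = 2a$ and $\pi_+-\pi_- = 2b\sqrt{-3}$, hence $2\gcd(a,b\sqrt{-3})$, and since $\gcd_{\bbZ}(a,b) = 1$ and $2$ is inert, a prime-by-prime check gives $\gcd(a,b\sqrt{-3})\sim\lambda$ iff $3\mid a$ and $\sim 1$ otherwise; then $2\mid\delta$ iff $2\nmid ab$ (equivalently $a\equiv b\pmod 2$) and $\lambda\mid\delta$ iff $3\mid a$, while $4\nmid\delta$ and $\lambda^2\nmid\delta$ always. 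Hence $\delta\sim 2^{\epsilon_2}\lambda^{\epsilon_3}$ with $\epsilon_2 = 1\iff 2\nmid ab$ and $\epsilon_3 = 1\iff 3\mid a$. Substituting yields $\Lambda^\theta = (\sqrt M/2^{\epsilon_2})\,\Lambda$ when $3\nmid a$ and $\Lambda^\theta = (\sqrt M/(2^{\epsilon_2}\cdot 4\pi))\,\Lambda^*$ when $3\mid a$, which is precisely the stated table for $\alpha$, with $N = \sqrt M/\alpha$.

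The hard part will be this last step: identifying $\gcd(\pi_+,\pi_-)$ up to associates, recognizing that a \emph{non-real} gcd occurs exactly when $3\mid a$ — which is the structural reason $\Lambda^\theta$ alternates between a rescaled $\Lambda$ and a rescaled $\Lambda^*$ — and matching its $2$-adic and $\lambda$-adic valuations against the four cases defining $\alpha$. Everything else (the reductions to $[0,\pi/3)$, the characterization through $\bbQ(\sqrt{-3})$, the identity $\mathrm{lcm}(x,y)\gcd(x,y)\sim xy$ in a PID, and the bookkeeping of the $e^{-i\pi/6}$ and $\tfrac{4\pi}{\sqrt3}$ prefactors) is routine.
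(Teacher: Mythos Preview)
Your argument is correct and genuinely different from the paper's. The paper works entirely in real $2\times 2$ matrix coordinates: it conjugates $R_\theta$ by $\nu$ to an integer-like matrix $A$, uses $R_\theta+R_{-\theta}=2\cos\theta\,\mathrm{Id}$ to trap $\Lambda^\theta$ inside $\tfrac{1}{2\cos\theta}\Lambda$, and then extracts the lattice by solving explicit congruences $\bmod\ 2$ and $\bmod\ a$, invoking a separate combinatorial identity (their Proposition~\ref{EqualitofLattices}) to recognise the three-coset union $\bigcup_r 4\pi(\bbZ^2+\tfrac{r}{3}\binom{1}{1})$ as $\nu^{-1}\kappa\,\bbZ^2$ when $3\mid a$. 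Your route via $\Lambda=e^{-i\pi/6}\mathcal O$ with $\mathcal O=\bbZ[\omega]$ replaces all of this by a single gcd computation in a PID: the intersection $e^{-i\theta}\mathcal O\cap e^{i\theta}\mathcal O$ becomes $\tfrac{\sqrt M}{\delta}\mathcal O$ with $\delta=\gcd(\pi_+,\pi_-)$, and the four-case table for $\alpha$ is exactly the $2$-adic and $\lambda$-adic valuation of $\delta$. The payoff is conceptual: the appearance of $\Lambda^*$ precisely when $3\mid a$ is no longer an outcome of a coset identity but the geometric shadow of the ramified prime $\lambda=\sqrt{-3}$, whose argument rotates $e^{-i\pi/6}\mathcal O=\Lambda$ into $\mathcal O=\tfrac{\sqrt3}{4\pi}\Lambda^*$. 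The paper's approach is more elementary (no Eisenstein arithmetic) and yields the explicit change-of-basis matrices $\calA_{\pm1}$ used downstream; yours is shorter and explains the dichotomy, but you would still need to unwind the identification to recover those matrices for the later sections.
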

\begin{remark}
    Even though the geometry of commensurate angles has been previously considered, see for example \cite{catarina2019twisted,lopes2012continuum, rodriguez2015coincidence,scheer2022magic}, and similar rationality conditions have been derived, to the best of our knowledge, none of the previous results explicitly state that the new lattice is a scaled version of the honeycomb lattice (or the dual of such a lattice).
\end{remark}
Throughout most of this paper, we consider the following operator 
\begin{align}\label{Hamiltonain}
    H^{\theta}(\lambda)=-\Delta +\lambda W^\theta_x
\end{align}
for $\lambda\in \bbR$, $x\in \{AA, AB\}$, and $W^\theta_x$ a twisted bilayer potential of angle $\theta$, for either AA or AB stacking, that corresponds to some honeycomb potential $V$.\par 
An immediate corollary of Theorem \ref{NewLattice} is 
\begin{corollary}
    Let $W^{\theta}_{AA}$ ($W^{\theta}_{AB}$) be a twisted bilayer potential in AA (AB) stacking of angle $\theta$, for $\theta \in \calC\cap (0,\frac{\pi}{3})$, then $W^{\theta}_{AA}$ ($W^{\theta}_{AB}$) is a(n almost) honeycomb potential, with respect to lattice denoted by $\Lambda^\theta\in \{N\Lambda, N\Lambda^*\}$, for $N$ as defined in Theorem \ref{NewLattice}. 
\end{corollary}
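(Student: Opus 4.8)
The plan is to combine Theorem \ref{NewLattice} and Proposition \ref{Periodicity} with a short check of the remaining symmetry axioms. Theorem \ref{NewLattice} gives, for $\theta\in\calC\cap(0,\frac{\pi}{3})$, explicit $a,b$ with $\gcd(a,b)=1$, hence the scaling factor $N$ and the dichotomy $\Lambda^\theta=N\Lambda$ (if $3\nmid a$) or $\Lambda^\theta=N\Lambda^*$ (if $3\mid a$). Both $N\Lambda$ and $N\Lambda^*$ are honeycomb lattices in the sense of Section \ref{Setting}: they are triangular lattices, and since the dilation $x\mapsto Nx$ commutes with $R$, they inherit the $R$-invariance of $\Lambda$ and $\Lambda^*$. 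So it remains only to verify that $W^\theta_{AA}$ (resp. $W^\theta_{AB}$) satisfies the three defining conditions of a honeycomb (resp. almost honeycomb) potential relative to $\tilde\Lambda=\Lambda^\theta$.

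Two of the three conditions are essentially free. Real-valuedness and $C^\infty$ smoothness are inherited from $V$ together with the fact that an admissible interaction operator $G$ (and $G^*$) maps $(C^\infty\cap L^\infty)^2$ into $C^\infty\cap L^\infty$ and is built from real operations; $\tilde\Lambda$-periodicity is precisely Proposition \ref{Periodicity}, now combined with the identification $\Lambda^\theta\in\{N\Lambda,N\Lambda^*\}$ from Theorem \ref{NewLattice}. The only genuine computations are the parity/flip axiom and the $R$-rotation axiom, and both are obtained by commuting the relevant operator through $G$ (resp. $G^*$). Here one uses: $G$ commutes with all rotations $\calR_\alpha$ — in particular with the parity operator $F=\calR_\pi$, $F[f](x)=f(-x)$ — and with translations; the intertwining $\calR_\alpha\calT_{\textbf a}=\calT_{R_\alpha\textbf a}\calR_\alpha$; and $\calR V=V=F V$.

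For AA stacking, applying $\calR$ to $W^\theta_{AA}=G(\calR_\theta V,\calR_{-\theta}V)$ and using $\calR\calR_{\pm\theta}=\calR_{\pm\theta}\calR$ with $\calR V=V$ gives $\calR W^\theta_{AA}=W^\theta_{AA}$, and the identical computation with $F$ in place of $\calR$ gives $F[W^\theta_{AA}]=W^\theta_{AA}$. For AB stacking, $\calR\calT_{-\frac{1}{2}R^jK_0}=\calT_{-\frac{1}{2}R^{j+1}K_0}\calR$, so applying $\calR$ only reindexes the average over $j\in\{-1,0,1\}$ (since $R^3=\id$) and leaves $W^\theta_{AB}$ fixed; this is exactly why the average was introduced (Remark \ref{Rmk:Ambiguity}), and it is the one point that deserves a second look rather than being an obstacle. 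The flip axiom is where ``almost'' appears: applying $F$ to $W^{-\theta}_{AB}$ converts each $\calT_{\mp\frac{1}{2}R^jK_0}\calR_{\mp\theta}V$ into $\calT_{\pm\frac{1}{2}R^jK_0}\calR_{\mp\theta}V$ (using $F\calT_{\textbf a}=\calT_{-\textbf a}F$, $F\calR_{\mp\theta}=\calR_{\mp\theta}F$, $FV=V$), so after relabeling it interchanges the two arguments of $G^*$, and the extra hypothesis $G^*(f,g)=G^*(g,f)$ identifies the result with $W^\theta_{AB}$; that is, $F^*[W_{AB}](x)=W^{-\theta}_{AB}(-x)=W^\theta_{AB}(x)$ holds. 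Assembling these verifications with the lattice identification completes the proof.
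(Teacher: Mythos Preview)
Your proposal is correct and follows essentially the same route as the paper's proof: periodicity from Proposition \ref{Periodicity} together with Theorem \ref{NewLattice}, then the $\calR$- and $F$-invariance for AA by commuting through $G$, the $\calR$-invariance for AB by reindexing the averaged sum, and the $F^*$-invariance for AB via $F\calT_{\mathbf a}=\calT_{-\mathbf a}F$ followed by the symmetry of $G^*$. The only cosmetic difference is that you spell out the intertwining identities a bit more explicitly than the paper does.
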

\begin{proof}
    By Theorem \ref{NewLattice} and Proposition \ref{Periodicity}, we have that $W_{x}^\theta$ for $x\in \{AA, AB\}$ is periodic with respect to scaled version of $\Lambda$ or $\Lambda^*$. We need to check the symmetries; for AA stacking, we have:
    \begin{align*}
        &\calR W_{AA}^\theta&&= \calR G(\calR_{\theta}V,\calR_{-\theta}Q)=G( \calR_{\theta}\calR V, \calR_{-\theta}\calR Q)=G( \calR_{\theta} V, \calR_{-\theta}Q)=W_{AA}^\theta\\
        &\calR_{\pi} W_{AA}^\theta&&= \calR_{\pi} G(\calR_{\theta}V,\calR_{-\theta}Q)=G( \calR_{\theta}\calR_{\pi} V, \calR_{-\theta}\calR_{\pi} Q)=G( \calR_{\theta} V, \calR_{-\theta}Q)=W_{AA}^\theta
    \end{align*}
    For AB stacking, we have that 
    \begin{align*}
        &\calR W_{AB}^\theta= \calR G^*(\frac{1}{3}\sum_{j=-1}^1\calT_{-\frac{1}{2}R^jK_0}\calR_{\theta}V,\frac{1}{3}\sum_{j=-1}^1\calT_{\frac{1}{2}R^jK_0}\calR_{-\theta}V)\\
        & =G^*( \frac{1}{3}\sum_{j=-1}^1\calT_{-\frac{1}{2}R^{j+1}K_0}\calR_{\theta}\calR V, \frac{1}{3}\sum_{j=-1}^1\calT_{\frac{1}{2}R^{j+1}K_0}\calR_{-\theta}\calR V)\\
        &=W_{AB}^\theta
    \end{align*}
    since we have that $R^2=R^{-1}$.  We also have that 
     \begin{align*}
        &\calR_\pi W_{AB}^{-\theta}= \calR_\pi G^*(\frac{1}{3}\sum_{j=-1}^1\calT_{-\frac{1}{2}R^jK_0}\calR_{-\theta}V,\frac{1}{3}\sum_{j=-1}^1\calT_{\frac{1}{2}R^jK_0}\calR_{\theta}V)\\
        & =G^*( \frac{1}{3}\sum_{j=-1}^1\calT_{\frac{1}{2}R^jK_0}\calR_{-\theta}\calR_\pi V, \frac{1}{3}\sum_{j=-1}^1\calT_{-\frac{1}{2}R^jK_0}\calR_{\theta}\calR _\pi V)\\
        &=W_{AB}^\theta
    \end{align*}
    where we used that $G^*$ is symmetric. 
\end{proof}
\subsection{Floquet theory}
Next, we need to introduce some notions in Floquet's theory for Schr\"{o}dinger operator with honeycomb potentials. This section considers an arbitrary potential $U$, periodic with respect to a honeycomb lattice $\tilde{\Lambda}$, and corresponding unit cell $\tilde{\Omega}$. We consider the operator. 
\begin{align*}
    \tilde{H}=-\Delta+U.
\end{align*}
Define the following spaces
\begin{align*}
    &L^2_{k}(\tilde{\Omega})=\{f\in L^2(\tilde{\Omega} )\mid \forall a\in \tilde{\Lambda}, f(x+a)=e^{-i\braket{k,a}}f(x)\}.
\end{align*}
the spaces of pseudo-periodic functions on the unit cell $\tilde{\Omega}$, for $k\in \tilde{\calB}$. These spaces are equipped with a natural inner product 
\begin{align*}
    \forall f,g \in L^2_k(\tilde{\Omega}), \, (f,g)=\frac{1}{|\tilde{\Omega}|}\int\limits_{\tilde{\Omega}} \bar{f}(x)g(x) \, dx
\end{align*}
where $|\cdot|$ denotes the Lebesgue measure of the set. 
Usually, we suppress unit cell dependence, which should be inferred from the context.  \par 
Define for $f\in L^2(\bbR^{2})$ the \emph{Floquet transform}
\begin{align*}
    &(\calU f)(k,y)=\sum\limits_{\vec{n}\in \bbZ^2} e^{-i\langle k, \nu \vec{n}\rangle }f(y+\nu \vec{n})
\end{align*}
for $y\in\bbR^{2}$ and $k\in \calB$. As an $L^2(\calB)\otimes L^2(\tilde{\Omega})$ convergent sum, the Floquet transform defines a bounded map from $L^2(\bbR^{2})$ to $L^2(\calB)\otimes L^2(\tilde{\Omega}) $.  
 The following properties of the Floquet transform are standard. See, for instance, Sections 4 and 5 of \cite{Kuchment}:
\begin{proposition}\label{floquetPr}
    The map $ f\mapsto \calU f $ has the following properties:
\begin{enumerate}
    \item $\calU$ is a unitary map from $ L^2(\bbR^{2}) $ to $ L^2(\calB)\otimes  L^2(\tilde{\Omega} )$.
    \item We have the unitary equivalence
    \begin{align*}
        \calU\tilde{H}\calU^* =\int\limits_{\calB}^\oplus \tilde{H}(k) \, \frac{dk}{|\calB|}, 
    \end{align*}
    where $\tilde{H}(k)=-\Delta+U,$ acting on $L^2_{k}$ is a self-adjoint operator.
    \item For any $k\in \bbT^*$, $\tilde{H}(k)$ is bounded from below and has only pure point spectrum- so we have 
    \begin{align*}
        E_1(k)\leq E_2(k)\leq \dots 
    \end{align*}
    where $E_n(k) \xrightarrow{n\rightarrow\infty }\infty $. 
\end{enumerate}
\end{proposition}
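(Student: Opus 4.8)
The plan is to follow the standard development of Floquet--Bloch theory (as in \cite{Kuchment}), adapted to the honeycomb normalization fixed above; all three assertions are essentially classical, so I would indicate the mechanism rather than carry out every estimate. For part (1), I would first work on the dense subspace $C_c^\infty(\bbR^2)\subset L^2(\bbR^2)$, where the sum defining $(\calU f)(k,y)$ is finite for every $(k,y)$ and jointly smooth. A direct reindexing of the lattice sum under $y\mapsto y+\nu m$ shows that $(\calU f)(k,\cdot)$ obeys the pseudo-periodicity defining $L^2_k(\tilde\Omega)$, so $\calU f$ is a genuine element of $L^2(\calB)\otimes L^2(\tilde\Omega)$. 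The isometry property is then Parseval for Fourier series applied in the $y$-fiber: for fixed $y$ the function $k\mapsto (\calU f)(k,y)$ is the Fourier series of the sequence $(f(y+\nu n))_{n\in\bbZ^2}\in\ell^2$ relative to the dual lattice, whence $\int_{\calB}|(\calU f)(k,y)|^2\,\frac{dk}{|\calB|}=\sum_n|f(y+\nu n)|^2$; integrating over $y\in\tilde\Omega$ and using that $\{\tilde\Omega+\nu n\}_n$ tiles $\bbR^2$ gives $\|\calU f\|=\|f\|$. For surjectivity I would write the inverse explicitly, $(\calU^*F)(y+\nu n)=\int_{\calB}e^{i\braket{k,\nu n}}F(k,y)\,\frac{dk}{|\calB|}$ for $y\in\tilde\Omega$ --- which is well defined precisely because $F(k,\cdot)\in L^2_k$ --- check $\calU^*\calU=\id$ and $\calU\calU^*=\id$ on the respective dense subspaces, and extend by continuity.

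For part (2), the key point is that $\tilde H=-\Delta+U$ commutes with every lattice translation $\calT_{\nu m}$, since $-\Delta$ is translation invariant and $U$ is $\tilde\Lambda$-periodic, and that $\calU$ conjugates $\calT_{\nu m}$ to multiplication by $e^{\pm i\braket{k,\nu m}}$ on $L^2(\calB)\otimes L^2(\tilde\Omega)$. By the theory of decomposable operators, $\calU\tilde H\calU^*$ is therefore a direct integral $\int_{\calB}^\oplus \tilde H(k)\,\frac{dk}{|\calB|}$; unwinding the transform identifies the fiber with $-\Delta+U$ acting on $L^2_k$ with domain the pseudo-periodic Sobolev space $H^2_k(\tilde\Omega)$. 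Self-adjointness of each fiber is cleanest in the orthonormal basis $\{e^{i\braket{k+b,x}}\}_{b\in\tilde\Lambda^*}$ of $L^2_k$: in it $-\Delta$ is multiplication by $|k+b|^2$, hence self-adjoint on the stated domain, while $U$, being smooth and periodic, is bounded and symmetric, so $\tilde H(k)$ is self-adjoint by Kato--Rellich.

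For part (3), I would show that $\tilde H(k)$ has compact resolvent. Since $U$ is bounded below, $\tilde H(k)+C\geq\id$ for $C$ large, and in the Fourier basis above $(-\Delta+C)^{-1}$ has eigenvalues $(|k+b|^2+C)^{-1}$; because $\#\{b\in\tilde\Lambda^*:|k+b|\leq R\}=O(R^2)$, the sum $\sum_b(|k+b|^2+C)^{-p}$ converges for every $p>1$, so $(-\Delta+C)^{-1}$ is compact (indeed Hilbert--Schmidt), and a bounded symmetric perturbation preserves this. Thus $\tilde H(k)$ is self-adjoint, bounded below, with compact resolvent, so by the spectral theorem its spectrum is a discrete set of finite-multiplicity eigenvalues accumulating only at $+\infty$; enumerating them with multiplicity gives $E_1(k)\leq E_2(k)\leq\cdots$ with $E_n(k)\to\infty$. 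The step I expect to require the most care is part (2) --- verifying measurability of the family $k\mapsto\tilde H(k)$, making precise the sense in which the conjugated operator is decomposable, and checking that the fiber domain is exactly $H^2_k$ rather than merely a core --- but all of this is carried out in detail in Sections 4--5 of \cite{Kuchment}, which is why in the paper I would simply invoke that reference; the sketch above records the underlying mechanism.
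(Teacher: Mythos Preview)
Your proposal is correct and matches the paper's approach: the paper gives no proof at all, stating only that ``the following properties of the Floquet transform are standard. See, for instance, Sections 4 and 5 of \cite{Kuchment}.'' Your sketch is precisely the standard Floquet--Bloch argument from that reference, so you have supplied the mechanism the paper merely cites.
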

The reader may find the necessary background on direct integrals of Hilbert spaces in \cite{RSVol4}. \par
We note that for periodic function, i.e., $f\in L^2_{0}=L^2_{per}$, we also have the following Fourier representation
\begin{align*}
    &f(y)=\sum_{\vec{m}\in \bbZ^2}\hat{f}_{\vec{m}} e^{i\braket{\kappa \vec{m},y}}
    &\hat{f}_{\vec{m}}=\frac{1}{|\Omega|}\int\limits_{\Omega}e^{-i\braket{\kappa \vec{m},y}}f(y)\, dy
\end{align*}
This representation will be used mostly in the context of the potential. 
\subsubsection{Rotational symmetry}
On top of the translation symmetry (which allows for the use of the Floquet transform), we also have symmetry with respect to rotation by $R$, as we have that for the honeycomb potential $U$: $\calR [U](x)=U(R^{-1}x)=U(x)$. \par 
Representation theory for $ R$-invariant Hamiltonians allows us to do an isotypic decomposition of the space; see \cite{berkolaiko2018symmetry} for more details. So, we define 
\begin{align*}
    L^2_{k,\sigma}=\{f\in L^2_{k}\mid \calR f=\sigma f\}
\end{align*}
for $\sigma\in \{1,\tau,\bar{\tau}\}$, where $\tau=e^{\frac{2\pi}{3}i}=-\frac{1}{2}+\frac{\sqrt{3}}{2}i$- the cubic root of unity. Moreover, we have that for $\tilde{K}_*\in \tilde{\bbP}$, one of the high symmetry points,  the operator $\tilde{H}(\tilde{K}_*)$ maps  $L^2_{\tilde{K}_*,\sigma}$  to itself- and thus allows us to reduce our study of $\tilde{H}(\tilde{K}_*)$ to its action on each $L^2_{\tilde{K}_*,\sigma}$. \par
It is convenient to introduce the following notation for $\tilde{K}_*\in \tilde{\bbP}$, and $\vec{m}\in \bbZ^2$:
\begin{align*}
    \tilde{K}_*(\vec{m})=\tilde{K}_*+\tilde{\kappa}\vec{m}
\end{align*}
Then, we can define 
\begin{align*}
    &B=\tilde{\kappa}^{-1}R\tilde{\kappa}&&\varrho_1=\tilde{\kappa}^{-1}(R-\id)\tilde{K}_*\\
    &\varrho_{-1}=\tilde{\kappa}^{-1}(R^{-1}-\id)\tilde{K}_*&&\varrho_0=0
\end{align*}
Then, we can write that
\begin{align*}
    &R\tilde{K}_* (\vec{m})=\tilde{K}_* (B\vec{m}+\varrho_{1})&&R^2\tilde{K}_*  (\vec{m})=\tilde{K}_* (B^{-1}\vec{m}+\varrho_{-1}),
\end{align*}
And, similarly to \cite{fefferman2012honeycomb}, we define  $\approx$ to identify the orbit of $\vec{m}$ under $B^j\vec{m}+\varrho_j, j\in \bbZ_3$ (throughout this paper we us $\bbZ_3=\{\pm1,0\}$), and we denote $\calS=\bbZ^2/\approx$.
\begin{remark}
    We would suppress the dependence of $\varrho_{\pm1}$, $B$, and $\calS$, on the exact choice of $\tilde{\kappa}$, which should be inferred from context. 
\end{remark}
We also note that if $U$ is a honeycomb potential, we have that 
\begin{align*}
    \forall \vec{m}\in \bbZ^2, \hat{U}_{B\vec{m}}=\hat{U}_{\vec{m}}
\end{align*}
For the reader's convenience, we provide the explicit forms for $W^\theta$. Recall that there are two cases, $\kappa ^\theta =\frac{1}{N}\kappa$ and $\kappa ^\theta =\frac{1}{N}\nu$:
\begin{align*}
    &B=\begin{cases}\begin{pmatrix} 0& -1\\ 1&-1\end{pmatrix}, &\kappa ^\theta =\frac{1}{N}\kappa\\
    \begin{pmatrix} -1&-1\\1 &0\end{pmatrix},& \kappa ^\theta =\frac{1}{N}\nu
    \end{cases},\varrho_0=0,\varrho_1=\begin{cases}
        \begin{pmatrix}0\\1\end{pmatrix},& \kappa ^\theta =\frac{1}{N}\kappa\\
        \begin{pmatrix}-1\\0\end{pmatrix},& \kappa ^\theta =\frac{1}{N}\nu\\
    \end{cases},\varrho_{-1}=\begin{cases}
        \begin{pmatrix}-1\\0\end{pmatrix},& \kappa ^\theta =\frac{1}{N}\kappa\\
        \begin{pmatrix}0\\-1\end{pmatrix},&\kappa ^\theta =\frac{1}{N}\nu\\
    \end{cases}
\end{align*}
The above assumes $\tilde{K}_*=K$; for $\tilde{K}_*=K'$, one should take $\varrho_j'=-\varrho_j$ for $j\in \bbZ_3$.
\subsection{Main theorems}
To better understand the statement of our main theorem, we recall the main theorems from \cite{berkolaiko2018symmetry, fefferman2012honeycomb} regarding the existence of the Dirac cones, which can be written as: 
\begin{theorem}[\cite{berkolaiko2018symmetry} -Theorems 2.4-2.5, \cite{fefferman2012honeycomb} -Theorem 5.1]\label{FW}
    Let $H=-\Delta +\lambda U$, for $\lambda\in \bbR$ and $U$ a honeycomb potential with $\tilde{\Lambda}=\Lambda$, be such that 
    \begin{align}\label{WFCond}
        \hat{U}_{-\varrho_{-1}}=\frac{1}{|\tilde{\Omega}|}\int\limits_{\tilde{\Omega}}e^{-i\braket{\kappa \varrho_{-1}, x}}U\, dx\neq0
    \end{align}
    Then, for all $\lambda \in \bbR$ except possibly on a discrete set, we have that, for $\tilde{K}_*\in\{\tilde{K}, \tilde{K}'\}$
    \begin{enumerate}
        \item \label{ConclusionA}There exists an eigenvalue $E_0(\lambda,\tilde{K}_*)$ of multiplicity exactly 2 in $L^2_{\tilde{K}_*}$, with eigenfunctions $\Phi_1(\lambda,x)\in L^2_{\tilde{K}_*,\tau}$, and $\Phi_2(\lambda,x)=\bar{\Phi}_1(\lambda,-x)\in L^2_{\tilde{K}_*,\bar{\tau}} $.
        \item \label{ConclusionB} There is some $\delta_k>0$, and two pairs  $(E_+(\lambda,k),\Phi_+(\lambda,k))$,
        $(E_-(\lambda,k),\Phi_-(\lambda,k))$ - which are Lipchitz continuous in $k$, such that for all $|k-\tilde{K}_*|<\delta$ we have 
        \begin{align*}
            |E_\pm(\lambda,k)-E_0(\lambda,\tilde{K}^*)|^2=|v_d(\lambda)|^2|k-\tilde{K}_*|^2+O(|k-\tilde{K}_*|^3)
        \end{align*}
        So, there is a Dirac cone at $(\tilde{K}_*, E_0(\tilde{K}_*))$.
        \item \label{ConclusionC} The slope of the cone,  $v_d$, is given by 
        \begin{align*}
            v_d(\lambda)=-2i (\Phi_1(\lambda,\cdot),\partial_{x_1} \Phi_2(\lambda,\cdot))
        \end{align*}
    \end{enumerate}
\end{theorem}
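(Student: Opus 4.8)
The plan is to follow the symmetry-reduction argument of Fefferman--Weinstein \cite{fefferman2012honeycomb}, in the representation-theoretic streamlining of Berkolaiko--Comech \cite{berkolaiko2018symmetry}. Fix $\tilde{K}_*\in\{\tilde{K},\tilde{K}'\}$ and work on $L^2_{\tilde{K}_*}$, where the fiber operator $H(\tilde{K}_*)=-\Delta+\lambda U$ commutes with $\calR$ and hence respects the isotypic splitting $L^2_{\tilde{K}_*}=L^2_{\tilde{K}_*,1}\oplus L^2_{\tilde{K}_*,\tau}\oplus L^2_{\tilde{K}_*,\bar\tau}$. The crucial structural fact is that the antiunitary involution $\calC\colon\Phi(x)\mapsto\overline{\Phi(-x)}$ maps $L^2_{\tilde{K}_*}$ to itself (the two sign reversals of the quasimomentum cancel), commutes with $H(\tilde{K}_*)$ because $U$ is real and even, and interchanges $L^2_{\tilde{K}_*,\tau}$ with $L^2_{\tilde{K}_*,\bar\tau}$. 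Consequently $H(\tilde{K}_*)|_{L^2_{\tilde{K}_*,\tau}}$ and $H(\tilde{K}_*)|_{L^2_{\tilde{K}_*,\bar\tau}}$ have identical spectra with equal multiplicities, so as soon as we exhibit an energy $E_0$ that is a \emph{simple} eigenvalue of $H(\tilde{K}_*)|_{L^2_{\tilde{K}_*,\tau}}$ and is \emph{not} an eigenvalue of $H(\tilde{K}_*)|_{L^2_{\tilde{K}_*,1}}$, it is automatically an eigenvalue of multiplicity exactly $2$ on $L^2_{\tilde{K}_*}$, with eigenpair $\Phi_1\in L^2_{\tilde{K}_*,\tau}$ and $\Phi_2=\calC\Phi_1=\overline{\Phi_1(-\cdot)}\in L^2_{\tilde{K}_*,\bar\tau}$. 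This yields Conclusion \ref{ConclusionA} once such an $E_0$ is produced.

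To produce $E_0$, take first $\lambda=0$. On $L^2_{\tilde{K}_*}$ the Laplacian has eigenvalues $|\tilde{K}_*+\tilde\kappa\vec m|^2$ with eigenfunctions $e^{i\braket{\tilde{K}_*(\vec m),x}}$, and because $\tilde\Lambda=\Lambda$ the smallest value $|\tilde{K}_*|^2$ is attained exactly on the $\approx$-orbit $\{0,\varrho_1,\varrho_{-1}\}$ of $\vec 0$, giving the three plane waves $e^{i\braket{R^j\tilde{K}_*,x}}$, $j\in\bbZ_3$, which together carry the regular representation $1\oplus\tau\oplus\bar\tau$ of $\calR$. Turning on $\lambda$, degenerate perturbation theory reduces the problem on this three-dimensional space to the $3\times3$ matrix of $U$ in this basis; evenness of $U$ together with $\hat U_{B\vec m}=\hat U_{\vec m}$ forces this matrix to be real symmetric and circulant with constant diagonal $\hat U_0$ and all off-diagonal entries equal to $\hat U_{-\varrho_{-1}}$ (here one uses $B^2(-\varrho_1)=-\varrho_{-1}$), so after subtracting the mean its eigenvalues are the simple $2\hat U_{-\varrho_{-1}}$ on $L^2_{\tilde{K}_*,1}$ and the double $-\hat U_{-\varrho_{-1}}$ on $L^2_{\tilde{K}_*,\tau}\oplus L^2_{\tilde{K}_*,\bar\tau}$. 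Hence if $\hat U_{-\varrho_{-1}}\neq0$ then for $0<|\lambda|$ small the branch $E^{(\tau)}(\lambda)$ of $H(\tilde{K}_*)|_{L^2_{\tilde{K}_*,\tau}}$ emanating from $|\tilde{K}_*|^2$ is simple in that sector and separated from the $L^2_{\tilde{K}_*,1}$ branch. To extend this to all $\lambda$ outside a discrete set, one phrases the simple-eigenvalue problem in the $\tau$-sector through an analytic Lyapunov--Schmidt reduction: the existence of an eigenvalue at energy $E$ becomes the vanishing of a real-analytic function $\Xi(\lambda,E)$ with $\partial_E\Xi\neq0$, so $E^{(\tau)}(\cdot)$ continues real-analytically, and the set of $\lambda$ at which it loses simplicity, or meets the $L^2_{\tilde{K}_*,1}$ eigenvalue, or collides with an eigenvalue from another $\approx$-orbit, is the zero set of a real-analytic function that is not identically zero (by the $\lambda\to0$ computation), hence discrete.

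For the conical dispersion, write $k=\tilde{K}_*+\mathbf p$ and conjugate by $e^{i\braket{\mathbf p,x}}$ to identify $L^2_k$ with $L^2_{\tilde{K}_*}$; then $H(k)$ becomes $H(\tilde{K}_*)+2\braket{\mathbf p,\tfrac{1}{i}\nabla}+|\mathbf p|^2$ on the fixed space $L^2_{\tilde{K}_*}$. Since $E_0$ is an isolated eigenvalue, a Lyapunov--Schmidt reduction onto $X_0=\mathrm{span}\{\Phi_1,\Phi_2\}$ yields, for $|\mathbf p|$ small, a smooth self-adjoint $2\times2$ effective matrix whose two eigenvalues near $E_0$ are $E_\pm(\lambda,k)$. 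Its linear-in-$\mathbf p$ part is built from the pairings $(\Phi_i,\partial_{x_l}\Phi_j)$, and here the representation theory does the work: the operators $\partial_{x_1}\pm i\partial_{x_2}$ shift the $\calR$-character by $\tau^{\pm1}$, so the diagonal pairings $(\Phi_1,\nabla\Phi_1)$ and $(\Phi_2,\nabla\Phi_2)$ vanish (wrong isotype), while exactly one chirality of $(\Phi_1,(\partial_{x_1}\pm i\partial_{x_2})\Phi_2)$ survives; combined with $\Phi_2=\calC\Phi_1$ this forces the linear part to be off-diagonal with entry $v_d(p_1+ip_2)$ and its conjugate, where $v_d=-2i(\Phi_1,\partial_{x_1}\Phi_2)$, whose eigenvalues are $\pm|v_d|\,|\mathbf p|$. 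Absorbing the quadratic remainder gives $|E_\pm(\lambda,k)-E_0|^2=|v_d|^2|k-\tilde{K}_*|^2+O(|k-\tilde{K}_*|^3)$ and Lipschitz continuity of $(E_\pm,\Phi_\pm)$, which are Conclusions \ref{ConclusionB}--\ref{ConclusionC}; that $v_d\neq0$ for all but discretely many $\lambda$ follows by evaluating $v_d(0)$ on the free plane-wave eigenfunctions, finding it nonzero, and invoking real-analyticity of $v_d(\cdot)$ once more. The case $\tilde{K}_*=\tilde{K}'$ is handled identically (indeed $\tilde{K}'=-\tilde{K}$).

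The main obstacle is the passage from the perturbative regime $|\lambda|\ll1$ to ``all $\lambda$ outside a discrete set'': one must control the analytic continuation of the $\tau$-sector eigenvalue and exclude accidental degeneracies, both with the trivial-sector eigenvalue and with eigenvalues coming from other $\approx$-orbits, and it is precisely here that the hypothesis $\hat U_{-\varrho_{-1}}\neq0$ is used, to guarantee that the governing real-analytic function is not identically degenerate; establishing $v_d\neq0$ off a discrete set is of the same flavor. By contrast, the isotypic bookkeeping that forces the $2\times2$ Dirac form and the Lyapunov--Schmidt reduction itself are routine once the analytic perturbation theory (Kato theory / analytic Fredholm theory on the pseudo-periodic spaces $L^2_{k}$) is set up.
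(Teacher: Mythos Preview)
This theorem is quoted from \cite{fefferman2012honeycomb,berkolaiko2018symmetry} and is not re-proved in the paper; the paper's own contribution is Theorem~\ref{DiracPoint}, whose proof follows exactly the same template but replaces the first-order separation condition~\eqref{WFCond} by the second-order one~\eqref{ConditionForV}. Your sketch is a correct outline of the standard argument and matches that template.

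One organisational difference worth noting: where you set up degenerate perturbation theory on the three-dimensional lowest eigenspace as a $3\times3$ circulant matrix and then invoke a Lyapunov--Schmidt reduction to analytically continue the eigenvalue branch (this is closer to the original \cite{fefferman2012honeycomb} presentation), the paper---following \cite{berkolaiko2018symmetry}---instead observes that at $\lambda=0$ the eigenvalue $|\tilde K_*|^2$ is \emph{simple} in each isotypic component $L^2_{\tilde K_*,\sigma}$ separately, so ordinary analytic perturbation theory for simple eigenvalues yields globally real-analytic branches $E_\sigma(\lambda)$ at once, with first Rayleigh--Schr\"odinger coefficient
\[
E^{(1)}_\sigma=(\psi_0^\sigma,U\psi_0^\sigma)=\hat U_{0}+(\sigma+\bar\sigma)\,\hat U_{-\varrho_{-1}}.
\]
Under~\eqref{WFCond} this already separates $E_1$ from $E_\tau=E_{\bar\tau}$, so the two analytic functions can agree only on a discrete set. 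This buys you global analyticity of each $E_\sigma$ without the Lyapunov--Schmidt machinery, and makes the ``discrete exceptional set'' statement an immediate consequence of real-analyticity; otherwise the two routes are equivalent. Your treatment of the conical dispersion and of $v_d(0)\neq0$ is the same as the paper's.
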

It is easy to see that condition (\ref{WFCond}) does not hold in the case of twisted bilayer potentials of the type given in (\ref{AddPotAA}) or (\ref{AddPotAB}): this condition requires that the mode denoted by $\varrho_{-1}$ is not $0$, with respect to the new lattice $\Lambda^\theta$. In other words, we want the Fourier mode corresponding to $k^\theta_{j}$ to be non-zero, for $j\in \{1,2\}$, depending on whether the new periodic lattice is $\Lambda$ or $\Lambda^*$. By duality scaling, one gets that this correspond to $\frac{1}{N}\tilde{k}_j$, where $\tilde{k}\in \{k,v\}$, depending on the underlying lattice. Conversely, $W^\theta_{AA} $ contains twisted potentials (which only twist the Fourier coefficients). The potential's first non-zero mode, in the best-case scenario, corresponds to $k_j$, and thus the lowest frequency $W^\theta_{AA}$ could have is some rotation of $k_j$, and in particular, we have that $\frac{1}{N}\tilde{k}_j$ is not in its support. See the full proof of Proposition \ref{WSupport}.  \par
We note that there are two differences between Theorem \ref{FW}  and our setting: first, we deal with potentials that are periodic with respect to $\Lambda^*$ and not only $\Lambda$, and second, we deal with almost honeycomb potentials, that is not even but invariant under flips (which are given by swapping $(x,\theta)\rightarrow (-x,-\theta)$). These changes will be addressed in the proof of Theorem \ref{Eveneigen}.\par
Thus, we get that we need to extend these results by finding a different condition, and so we prove the following statement
\begin{theorem}\label{DiracPoint}
    Let $\tilde{H}=-\Delta +\lambda U$, for $\lambda\in \bbR$ and $U$ a honeycomb potential, with $\tilde{\Lambda}\in \{\Lambda,\Lambda^*\}$, be such that:
    \begin{align} \label{Zeros}
        \forall \vec{m} \in \calS,\exists \ell \in \bbZ_3, \hat{U}_{\vec{m}-\varrho_\ell}=0
    \end{align}
    Then we may choose $\calS$ such that $\vec{m}-\varrho_{1}\not \in \supp \hat{U}$, with this choice,  if we have 
    \begin{align}\label{ConditionForV}
        \sum_{\vec{m}\in \calS\setminus \{\vec{0}\}}\frac{\hat{U}_{\vec{m}}\hat{U}_{\vec{m}-\varrho_{-1}}}{|\tilde{K}_*|^2-|\tilde{K}_*(\vec{m})|^2}\neq 0 
    \end{align}
    then for all $\lambda \in \bbR$ except possibly on a discrete set, we have that, for $\tilde{K}_*\in\{\tilde{K}, \tilde{K}'\}$ that the conclusions (\ref{ConclusionA}) - (\ref{ConclusionC}) of Theorem \ref{FW} hold.\par
    Furthermore, even if condition (\ref{Zeros}) does not hold, we have that there is some $C>0$ such that 
    \begin{align} \label{asymptotics}
        |v_d(\lambda)|^2\leq C(|\tilde{K}_*|^2+\lambda  \|U\|_\infty +\lambda^2 \|\nabla U\|_{\infty}^2\sum_{\vec{m}\in \calS\setminus \{\vec{0}\}} \frac{1}{|\tilde{K}_*+\tilde{\kappa}\vec{m}|^{4}})+O(\lambda^3\|U\|^3)
    \end{align}
    as $\lambda \rightarrow 0$.
\end{theorem}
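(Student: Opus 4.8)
The plan is to follow the symmetry-plus-perturbation-theory strategy of \cite{berkolaiko2018symmetry, fefferman2012honeycomb}, but to carry the Rayleigh–Schrödinger expansion at the high-symmetry point $\tilde K_*$ to second order. First I would set up the reduction: by representation theory for the $\bbZ_3$-action generated by $\calR$, the fiber operator $\tilde H(\tilde K_*)$ preserves the isotypic components $L^2_{\tilde K_*,\sigma}$ for $\sigma\in\{1,\tau,\bar\tau\}$. One then analyzes the lowest eigenvalue of $\tilde H(\tilde K_*)$ restricted to each sector. For $\lambda=0$ these are explicit: the plane waves $e^{i\langle\tilde K_*(\vec m),x\rangle}$ with $|\tilde K_*(\vec m)|$ minimal span a three-dimensional space $W$, and the $\calR$-action permutes the three vectors $\{\tilde K_*,R\tilde K_*,R^2\tilde K_*\}$ cyclically (this is what $B$ and $\varrho_{\pm1}$ encode), so $W$ carries the regular representation $1\oplus\tau\oplus\bar\tau$ of $\bbZ_3$. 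Thus at $\lambda=0$ the $\tau$- and $\bar\tau$-sectors each contribute a simple eigenvalue $|\tilde K_*|^2$, and the degeneracy between them is forced by the antiunitary flip symmetry (here I would invoke the almost-honeycomb version, Theorem \ref{Eveneigen}, to cover both the $\Lambda/\Lambda^*$ cases and the flip-symmetric-but-not-even case). I would then argue, as in \cite{fefferman2012honeycomb}, that this $\tau$-sector eigenvalue stays simple for all $\lambda$ outside a discrete set, so a genuine double eigenvalue $E_0(\lambda,\tilde K_*)$ persists—this gives conclusion (\ref{ConclusionA}).

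The new content is the nondegeneracy of the Dirac velocity, which is where condition (\ref{ConditionForV}) enters. Conclusion (\ref{ConclusionB}) is obtained by the standard Lyapunov–Schmidt/effective $2\times2$ matrix argument near $\tilde K_*$: writing $k=\tilde K_*+\boldsymbol\kappa$, the off-diagonal entries of the reduced $2\times2$ operator are linear in $\boldsymbol\kappa$ to leading order, with coefficient $v_d=-2i(\Phi_1,\partial_{x_1}\Phi_2)$, and the cone is genuine precisely when $v_d\neq0$. So the heart is to show $v_d\neq0$ under (\ref{ConditionForV}). For $\lambda$ small one computes $\Phi_1(\lambda,\cdot)$ perturbatively: $\Phi_1=\phi_1^{(0)}+\lambda\phi_1^{(1)}+\lambda^2\phi_1^{(2)}+\cdots$ where $\phi_1^{(0)}=e^{i\langle\tilde K_*,x\rangle}$ (in the $\tau$-sector) and $\phi_1^{(1)}=-(\,(-\Delta-|\tilde K_*|^2)^{-1}_{\perp}\,)\,U\phi_1^{(0)}$, i.e. $\phi_1^{(1)}=-\sum_{\vec m}\frac{\hat U_{\vec m}}{|\tilde K_*(\vec m)|^2-|\tilde K_*|^2}e^{i\langle\tilde K_*(\vec m),x\rangle}$ with the sum over modes outside the degenerate triple. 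The key point is that condition (\ref{Zeros}), together with the freedom in choosing the representatives $\calS$, lets us arrange $\vec m-\varrho_{-1}\notin\supp\hat U$; this kills the order-$\lambda^0$ and order-$\lambda^1$ contributions to $v_d$ (the $\phi^{(0)}$–$\phi^{(0)}$ and $\phi^{(0)}$–$\phi^{(1)}$ pairings vanish because the relevant Fourier coefficient of $U$ is zero, exactly as condition (\ref{WFCond}) failing), so the leading term of $v_d$ is order $\lambda^2$ and equals a constant multiple of $\sum_{\vec m\in\calS\setminus\{\vec0\}}\frac{\hat U_{\vec m}\hat U_{\vec m-\varrho_{-1}}}{|\tilde K_*|^2-|\tilde K_*(\vec m)|^2}$ — the quantity in (\ref{ConditionForV}). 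Hence $v_d\not\equiv0$ as a function of $\lambda$; being (real-)analytic in $\lambda$, it vanishes only on a discrete set, which we fold into the exceptional set. This gives (\ref{ConclusionB})–(\ref{ConclusionC}).

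For the final asymptotic bound (\ref{asymptotics}), which holds \emph{without} (\ref{Zeros}), I would simply estimate $|v_d(\lambda)|=2|(\Phi_1,\partial_{x_1}\Phi_2)|$ directly using the expansion $\Phi_i=\phi_i^{(0)}+\lambda\phi_i^{(1)}+O(\lambda^2)$ and Parseval. The $\phi^{(0)}$–$\phi^{(0)}$ term is bounded by a multiple of $|\tilde K_*|$ (the momentum of the plane wave), hence contributes $\le C|\tilde K_*|^2$ after squaring; the cross term $\phi^{(0)}$–$\phi^{(1)}$ is $O(\lambda\|U\|_\infty)$; the $\phi^{(1)}$–$\phi^{(1)}$ term is controlled by $\lambda^2\sum_{\vec m\in\calS\setminus\{\vec0\}}\frac{|\hat U_{\vec m}|^2|\tilde K_*(\vec m)|}{(|\tilde K_*(\vec m)|^2-|\tilde K_*|^2)^2}$, which one bounds by $\lambda^2\|\nabla U\|_\infty^2\sum_{\vec m}|\tilde K_*+\tilde\kappa\vec m|^{-4}$ after using $|\hat U_{\vec m}|\,|\tilde K_*(\vec m)|\lesssim\|\nabla U\|_\infty$ (Fourier coefficient of the gradient) and the denominator growth. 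Summing these with the $O(\lambda^3\|U\|^3)$ tail from higher-order perturbation terms yields (\ref{asymptotics}).

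The main obstacle I expect is \emph{not} the perturbative computation of $v_d$ but the two structural inputs that make it meaningful: (i) proving the double eigenvalue $E_0(\lambda,\tilde K_*)$ persists for all but discretely many $\lambda$ — i.e. that the $\tau$-sector eigenvalue does not collide with the next one — which requires a Fefferman–Weinstein-style analyticity/genericity argument now adapted to $\Lambda^*$-periodic and to flip-symmetric almost-honeycomb potentials (deferred to Theorem \ref{Eveneigen}); and (ii) the bookkeeping showing the choice of coset representatives $\calS$ can be made so that $\vec m-\varrho_{-1}\notin\supp\hat U$ simultaneously with (\ref{Zeros}) holding, and that the resulting order-$\lambda^2$ coefficient is genuinely the sum in (\ref{ConditionForV}) with a nonzero constant prefactor — this is a careful but finite combinatorial/Fourier computation on the $B$-orbits that must be done in both the $\kappa^\theta=\frac1N\kappa$ and $\kappa^\theta=\frac1N\nu$ cases.
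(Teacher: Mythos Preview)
You have misidentified where condition (\ref{ConditionForV}) enters, and this is a genuine gap. In the paper, (\ref{ConditionForV}) is \emph{not} used to show $v_d\neq 0$; it is used to show that the eigenvalue has multiplicity exactly two, i.e.\ that $E_\tau(\lambda)\neq E_1(\lambda)$. The paper carries the Rayleigh--Schr\"odinger expansion of the \emph{eigenvalues} $E_\sigma(\lambda)$ (not the eigenfunctions) to second order: under (\ref{Zeros}) one finds $E^{(1)}_\sigma=\hat U_{\vec 0}$ is independent of $\sigma$, so the three sectors do not separate at first order, and then $E^{(2)}_\sigma$ contains the term $(\sigma+\sigma^{-1})\sum_{\vec m}\frac{\hat U_{\vec m}\hat U_{\vec m-\varrho_{-1}}}{|\tilde K_*|^2-|\tilde K_*(\vec m)|^2}$, which is $-1$ times the sum for $\sigma=\tau$ and $+2$ times it for $\sigma=1$. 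Condition (\ref{ConditionForV}) thus gives $E_\tau\not\equiv E_1$, hence equality only on a discrete set. Your proposal glosses over exactly this step (``argue, as in \cite{fefferman2012honeycomb}, that this $\tau$-sector eigenvalue stays simple''), but this is precisely where the Fefferman--Weinstein argument breaks down when (\ref{WFCond}) fails and where the new condition is required.

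Conversely, your claim that the order-$\lambda^0$ contribution to $v_d$ vanishes is wrong. The zeroth-order eigenfunction in the $\tau$-sector is not a single plane wave $e^{i\langle\tilde K_*,x\rangle}$ but the symmetrized combination $\psi_0^\tau=\frac{1}{\sqrt 3}\sum_{\ell}\tau^{-\ell}e^{i\langle\tilde K_*+\tilde\kappa\varrho_\ell,x\rangle}$, and the pairing $(\psi_0^\tau,\partial_{x_1}\psi_0^{\bar\tau})$ involves no Fourier coefficients of $U$ at all. The paper computes it directly and obtains a nonzero constant (equal to $-4\pi$ or $-\sqrt 3 i$ depending on whether $\tilde\kappa=\kappa$ or $\tilde\kappa=\nu$, up to the lattice scaling). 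So $v_d(0)\neq 0$ regardless of the potential, and analyticity gives $v_d(\lambda)\neq 0$ off a discrete set with no further hypothesis. Your perturbative computation of $v_d$ to order $\lambda^2$ is therefore both unnecessary for (\ref{ConclusionB})--(\ref{ConclusionC}) and based on a false premise.

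For the bound (\ref{asymptotics}), your eigenfunction-expansion approach could be made to work, but the paper's route is shorter: Cauchy--Schwarz gives $|v_d|^2\le 4\|\nabla\Phi_2\|^2$, the eigenvalue equation gives $\|\nabla\Phi_2\|^2=E-\lambda(\Phi_2,U\Phi_2)\le E+|\lambda|\|U\|_\infty$, and then one inserts the second-order expansion of $E=E_\tau(\lambda)$ already computed. This avoids expanding the eigenfunctions altogether.
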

The above theorem allows us to conclude our main theorem:
\begin{theorem}\label{DiracPointForTwisted}
    Let $H^\theta=-\Delta +\lambda W^\theta$, for $\lambda\in \bbR$ and twisted bilayer potential with respect to honeycomb potentials $V$ and $Q$, and angle $\theta \in \calC\cap (0,\frac{\pi}{3})$, in either AA or AB stacking. $W^\theta$ is periodic with respect to $\Lambda^\theta$. Let $K_*^\theta \in \bbP^\theta$- one of the points of high symmetry, then if we have
    \begin{align}\label{FWForTBG}
        \hat{W}^\theta_{-\varrho_{-1}}\neq 0
    \end{align}
    or
    \begin{align}\label{ConditionForW}
         \forall \vec{m}\in \calS \exists \ell\in \bbZ_3, \hat{W}^\theta _{\vec{m}-\varrho_\ell}=0 \text{ and }\sum_{\vec{m}\in \calS\setminus \{\vec{0}\}}\frac{\hat{W}^\theta_{\vec{m}}\hat{W}^\theta_{\vec{m}-\varrho_{-1}}}{|K_*^\theta(\vec{m})|^2-|K_*^\theta|^2}\neq 0 
    \end{align}
    Then, for all $\lambda \in \bbR$ except possibly on a discrete set, we have that the conclusions (\ref{ConclusionA}) - (\ref{ConclusionC}) of Theorem \ref{FW} hold.
\end{theorem}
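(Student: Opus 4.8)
The plan is to deduce Theorem~\ref{DiracPointForTwisted} directly from Theorems~\ref{FW} and~\ref{DiracPoint}, applied to the \emph{single} (almost) honeycomb potential $U = W^\theta$ on the lattice $\Lambda^\theta$, so that no new perturbation theory is required. The first step is to record, via Theorem~\ref{NewLattice}, Proposition~\ref{Periodicity}, and the corollary following them, that for $\theta \in \calC \cap (0,\tfrac{\pi}{3})$ the potential $W^\theta$ is periodic with respect to $\Lambda^\theta \in \{N\Lambda, N\Lambda^*\}$ and is a honeycomb potential in AA stacking, respectively an almost honeycomb potential in AB stacking, for that lattice. Consequently all the Floquet data attached to $\Lambda^\theta$ --- the Brillouin zone $\calB^\theta$, the dual matrix $\kappa^\theta \in \{\tfrac{1}{N}\kappa, \tfrac{1}{N}\nu\}$, the matrix $B$, the shifts $\varrho_{\pm1}$, and the quotient $\calS$ --- are exactly those of a rescaled (dual) honeycomb lattice, in the explicit form tabulated just before the statement. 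In particular $\bbP^\theta$ splits into the three $\calR$-orbits $\{K^\theta, R K^\theta, R^2 K^\theta\}$, $\{(K^\theta)', R (K^\theta)', R^2 (K^\theta)'\}$, and $\{0\}$; since $W^\theta$ commutes with $\calR$, it suffices to treat $K_*^\theta \in \{K^\theta, (K^\theta)'\}$, which are precisely the points $\tilde K, \tilde K'$ of Theorems~\ref{FW} and~\ref{DiracPoint}.

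With this identification the two alternatives in the hypothesis become the two sufficient conditions already available. If~(\ref{FWForTBG}) holds, then $\hat{W}^\theta_{-\varrho_{-1}} \neq 0$ is verbatim condition~(\ref{WFCond}) of Theorem~\ref{FW} for $U = W^\theta$, and the conclusion follows at once. If instead~(\ref{ConditionForW}) holds, its first clause is exactly condition~(\ref{Zeros}) of Theorem~\ref{DiracPoint} for $U = W^\theta$ --- so we may choose $\calS$ with $\vec m - \varrho_{-1} \notin \supp \hat{W}^\theta$ --- and its second clause is condition~(\ref{ConditionForV}), since the denominators $|K_*^\theta(\vec m)|^2 - |K_*^\theta|^2$ in~(\ref{ConditionForW}) differ from those in~(\ref{ConditionForV}) only by an overall sign, which cannot affect whether the sum vanishes. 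Hence Theorem~\ref{DiracPoint} applies and delivers conclusions~(\ref{ConclusionA})--(\ref{ConclusionC}) for all $\lambda \in \bbR$ outside a discrete set. (For the representative potential~(\ref{AddativeAA}) the first alternative always fails, as will be shown in Proposition~\ref{WSupport} and Lemma~\ref{WDescription}, which is precisely why the second, higher-order, alternative is needed.)

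The one genuine subtlety --- and what I expect to be the main obstacle --- is that Theorems~\ref{FW} and~\ref{DiracPoint} are stated for \emph{even} honeycomb potentials, whereas in AB stacking $W^\theta$ is only an almost honeycomb potential: not necessarily even, only invariant under the flip $(x,\theta) \mapsto (-x,-\theta)$, and possibly periodic with respect to a rescaled copy of $\Lambda^*$. I would absorb this into Theorem~\ref{Eveneigen}, the symmetry core of the proof of Theorem~\ref{DiracPoint}: there one shows that (i) for $\tilde\Lambda \in \{\Lambda, \Lambda^*\}$ the $\calR$-isotypic decomposition of $L^2_{K_*^\theta}$ still produces the required one-dimensional $\tau$- and $\bar\tau$-subspaces at the relevant order of perturbation theory, and (ii) the flip symmetry alone --- in place of evenness --- forces the two eigenfunctions to satisfy $\Phi_2(x) = \overline{\Phi_1(-x)}$, which is all that is used to produce the doubly degenerate eigenvalue and the conical touching exactly as in~\cite{fefferman2012honeycomb, berkolaiko2018symmetry}. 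With Theorem~\ref{Eveneigen} established, the case analysis above carries over without change and the proof is complete.
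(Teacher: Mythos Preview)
Your proposal is correct and follows essentially the same route as the paper's own (very brief) proof: apply Theorem~\ref{FW} when~(\ref{FWForTBG}) holds and Theorem~\ref{DiracPoint} when~(\ref{ConditionForW}) holds, after identifying $W^\theta$ as an (almost) honeycomb potential on $\Lambda^\theta$. If anything, you are more careful than the paper, which dispatches the argument in two lines; your explicit handling of the sign in the denominators, the $\Lambda^*$ case, and the ``almost honeycomb'' flip symmetry in AB stacking via Theorem~\ref{Eveneigen} is exactly what is needed and is only implicit in the paper's exposition.
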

As a result of the proofs above, we get the following result about the vanishing of the Dirac points for small potentials:
\begin{theorem}\label{Vanishing}
    We have for $\theta \in \calC\cap(0,\frac{\pi}{3})$, that for any $\delta>0$,  if $|\lambda|<\frac{\delta}{N^2}$, then there is some  constant $0<C=C(\delta, V,G)$ such that 
    \begin{align*}
        |v_d(\lambda)| \leq \frac{C}{N}+O(N^{-3})
    \end{align*}
\end{theorem}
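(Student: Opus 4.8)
## Proof Proposal for Theorem \ref{Vanishing}

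I would prove Theorem \ref{Vanishing} by a direct a priori estimate on $v_d$, not by substituting $W^\theta$ into the fallback inequality (\ref{asymptotics}): with $\tilde\kappa=\kappa^\theta=\tfrac1N\kappa$ the lattice sum appearing there scales like $N^4$, so that route only delivers $|v_d|=O(1)$. First I would record the geometry. By Theorem \ref{NewLattice}, for $\theta\in\calC\cap(0,\tfrac\pi3)$ the potential $W^\theta$ is $\Lambda^\theta$-periodic with $\Lambda^\theta=N\tilde\Lambda$, $\tilde\Lambda\in\{\Lambda,\Lambda^*\}$; hence its reciprocal lattice is $\tfrac1N\tilde\Lambda^*$, the Fourier step is $\kappa^\theta=\tfrac1N\tilde\kappa$, and the relevant high-symmetry point is $K_*^\theta=\tfrac1N\tilde K_*$, so $|K_*^\theta|^2=\tfrac1{N^2}|\tilde K_*|^2$. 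Since rotations and translations preserve $\|\cdot\|_\infty$ and $G$ is admissible, $\|W^\theta\|_\infty\le C_g\bigl(\|\calR_\theta V\|_\infty\|\calR_{-\theta}V\|_\infty\bigr)^\gamma=C_g\|V\|_\infty^{2\gamma}$, a bound uniform in $N$ (and likewise for $W^\theta_{AB}$ with $G^*$).

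Next, fix $\lambda$ in the discrete-complement set for which the Dirac cone of Theorem \ref{DiracPointForTwisted} exists, with normalized eigenfunctions $\Phi_1\in L^2_{K_*^\theta,\tau}$ and $\Phi_2(\lambda,x)=\overline{\Phi_1(\lambda,-x)}\in L^2_{K_*^\theta,\bar\tau}$ of $H^\theta(K_*^\theta)$ at the eigenvalue $E_0(\lambda)$, and $v_d(\lambda)=-2i(\Phi_1,\partial_{x_1}\Phi_2)$. From $\Phi_2(x)=\overline{\Phi_1(-x)}$ one gets $(\partial_{x_1}\Phi_2)(x)=-\overline{(\partial_{x_1}\Phi_1)(-x)}$, and since $|\nabla\Phi_1|^2$ is $\Lambda^\theta$-periodic this forces $\|\partial_{x_1}\Phi_2\|=\|\partial_{x_1}\Phi_1\|\le\|\nabla\Phi_1\|$. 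Cauchy--Schwarz on $L^2_{K_*^\theta}(\Omega^\theta)$ then gives $|v_d(\lambda)|=2|(\Phi_1,\partial_{x_1}\Phi_2)|\le 2\|\Phi_1\|\,\|\partial_{x_1}\Phi_2\|\le 2\|\nabla\Phi_1\|$. Taking the inner product of $-\Delta\Phi_1=(E_0-\lambda W^\theta)\Phi_1$ with $\Phi_1$ over $\Omega^\theta$ (the pseudo-periodic boundary terms cancel) yields $\|\nabla\Phi_1\|^2=E_0(\lambda)-\lambda(\Phi_1,W^\theta\Phi_1)\le|E_0(\lambda)|+|\lambda|\,\|W^\theta\|_\infty$.

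It then remains to show $|E_0(\lambda)|=O(N^{-2})$ in the regime $|\lambda|<\delta/N^2$. From $-\Delta\ge0$ we have $E_0(\lambda)\ge-|\lambda|\|W^\theta\|_\infty$. For the upper bound, note that on $L^2_{K_*^\theta,\tau}$ the operator $-\Delta$ has lowest eigenvalue $|K_*^\theta|^2$ — simple, with eigenfunction the appropriate combination of the three plane waves $e^{i\langle R^jK_*^\theta,x\rangle}$ — and a gap $c_\Lambda/N^2$ ($c_\Lambda>0$ fixed) to the rest of its spectrum, again because every frequency carries the factor $1/N$. Hence, as soon as $|\lambda|\,\|W^\theta\|_\infty<c_\Lambda/(2N^2)$ — i.e. for $\delta$ below a threshold $\delta_0(V,G)$, with larger $\delta$ absorbed into the final constant $C(\delta)$ — $E_0(\lambda)$ is the bottom eigenvalue of $H^\theta(K_*^\theta)|_{L^2_{K_*^\theta,\tau}}$ and min--max gives $E_0(\lambda)\le|K_*^\theta|^2+|\lambda|\,\|W^\theta\|_\infty$. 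Combining the three steps,
\[
|v_d(\lambda)|^2\le 4\bigl(|K_*^\theta|^2+2|\lambda|\,\|W^\theta\|_\infty\bigr)\le\frac{4}{N^2}\Bigl(|\tilde K_*|^2+2\delta\,C_g\|V\|_\infty^{2\gamma}\Bigr),
\]
so $|v_d(\lambda)|\le C/N$ with $C=C(\delta,V,G)=2\bigl(|\tilde K_*|^2+2\delta C_g\|V\|_\infty^{2\gamma}\bigr)^{1/2}$; this in fact yields the slightly stronger form without the $O(N^{-3})$ term (which a cruder accounting would leave).

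The main obstacle is this last step: the spectral gap above $|K_*^\theta|^2$ is itself only $\Theta(N^{-2})$, so a potential of amplitude $\lambda=\Theta(N^{-2})$ is comparable to the gap, and keeping $E_0(\lambda)=O(N^{-2})$ — equivalently, preventing $\Phi_1(\lambda)$ from leaking into higher bands of $H^\theta(K_*^\theta)|_{L^2_{K_*^\theta,\tau}}$ — is precisely what forces the hypothesis $|\lambda|\lesssim N^{-2}$. A secondary technical point is that for AB stacking $W^\theta_{AB}$ is only an almost honeycomb potential, so the reflection identity relating $\Phi_2$ to $\Phi_1$, and the equality of their Dirichlet energies, must be extracted from the flip-invariance $F^*$ and the construction in the proof of Theorem \ref{DiracPoint} rather than from evenness; this does not change the estimate.
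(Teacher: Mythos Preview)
Your approach is correct and in fact cleaner than the paper's. Both routes start from the same inequality $|v_d(\lambda)|^2\le 4\bigl(E_0(\lambda)+2|\lambda|\,\|W^\theta\|_\infty\bigr)$, obtained from Cauchy--Schwarz and the eigenvalue equation; they diverge only in how $E_0(\lambda)$ is controlled. The paper substitutes into the perturbative bound~(\ref{asymptotics}), expanding $E_0$ to second order and then estimating the lattice sum $\sum_{\vec m\neq 0}|K^\theta_*+\kappa^\theta\vec m|^{-4}$ by a Riemann-sum comparison to get $O(|k_1^\theta|^{-2})=O(N^2)$. Your observation that this sum is actually $\Theta(N^4)$ is right---each term equals $N^4|\tilde K_*+\tilde\kappa\vec m|^{-4}$, and the Riemann step silently drops the cell-area factor $|\det\kappa^\theta|^{-1}\sim N^2$---so as written the paper's $\lambda^2$-term is $O(\delta^2)$ rather than $O(\delta^2/N^2)$. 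Your variational bound $E_0\le|K^\theta_*|^2+|\lambda|\,\|W^\theta\|_\infty$ sidesteps this entirely and delivers $|v_d|\le C(\delta)/N$ with no remainder. Two minor remarks: the detour through $\Phi_2(x)=\overline{\Phi_1(-x)}$ is unnecessary, since $\|\nabla\Phi_2\|^2\le E_0+|\lambda|\,\|W^\theta\|_\infty$ follows straight from the eigenvalue equation for $\Phi_2$ (this also removes your AB-stacking caveat); and for arbitrary $\delta$ the identification of $E_0$ with the bottom eigenvalue may fail, but the same bound $|E_0(\lambda)-|K^\theta_*|^2|\le|\lambda|\,\|W^\theta\|_\infty$ follows for all $\lambda$ from Hellmann--Feynman along the analytic branch, so your ``absorb larger $\delta$ into $C(\delta)$'' is easily made rigorous.
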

 Finally, we provide a set of examples for which condition (\ref{ConditionForW}) holds:
\begin{restatable}{proposition}{VConstruction}
\label{VConstruction}
    Define the equivalence relation $\sim_B $ by $\vec{m}\sim_B\vec{n}\iff \exists \ell \in \bbZ_3, B^\ell \vec{m}=\vec{n}$. Then denote  $\tilde{\calS}=\bbZ^2/\sim_B$.\par 
    Let $(a_{\vec{m}})_{\vec{m}\in \tilde{\calS}}$ be exponentially decaying sequence such that $\forall \vec{m}\in \tilde{\calS}, a_{\vec{m}}>0$.
    We define
    \begin{align*}
        V(x)=\pm \sum_{\vec{m}\in \tilde{\calS}}a_{\vec{m}}\sum_{\ell\in \bbZ_3}\cos(\braket{\kappa B^\ell\vec{m},x})
    \end{align*}
    Then $V$ is a honeycomb potential. And if we define the twisted potential as in (\ref{AddPotAA}), with $Q=V$, that is 
    \begin{align*}
        W^\theta=\frac{1}{2}(\calR_{\theta}V+\calR_{-\theta}V)
    \end{align*}
    Then we have that for any  $\theta \in \calC\cap (0,\frac{\pi}{3})$  we have that
    \begin{align*}
         \sum_{\vec{m}\in \calS\setminus \{\vec{0}\}}\frac{\hat{W}^\theta_{\vec{m}}\hat{W}^\theta_{\vec{m}-\varrho_{-1}}}{|K_*^\theta(\vec{m})|^2-|K_*^\theta|^2}\neq 0 ,\text{ and }\forall \vec{m}\in \calS, \exists \ell\in \bbZ_3, \hat{W}^\theta _{\vec{m}-\varrho_\ell}=0 
    \end{align*}
    holds. 
\end{restatable}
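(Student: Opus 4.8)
The plan is to trace what the construction does to the Fourier support of $W^\theta$ and then verify the two conditions separately. First I would check that the given $V$ is a honeycomb potential: it is real and even since it is built from cosines, it is $\Lambda$-periodic because every frequency $\kappa B^\ell \vec m$ lies in $\Lambda^* = \kappa\bbZ^2$, and it is $R$-invariant because the frequency set $\{\kappa B^\ell \vec m : \ell \in \bbZ_3\}$ is by construction a full $B$-orbit (recall $B = \kappa^{-1}R\kappa$, so applying $\calR$ permutes these three frequencies cyclically); positivity of the $a_{\vec m}$ and exponential decay give smoothness and the sign choice. Next, since $W^\theta = \tfrac12(\calR_\theta V + \calR_{-\theta} V)$ and rotation only rotates the frequency vectors, the Fourier modes of $W^\theta$ with respect to $\Lambda^\theta = N\Lambda$ or $N\Lambda^*$ are supported on (a rescaling of) $R_{\pm\theta}\kappa\bbZ^2 \cap \tfrac1N\tilde\kappa\bbZ^2$; the key structural point, which I would extract from the proofs of Theorem \ref{NewLattice} and Proposition \ref{WSupport}, is that $\hat W^\theta$ is supported only on frequencies of the form $R_{\pm\theta}\kappa\vec n$ expressed in the $\tfrac1N\tilde\kappa$ basis — i.e. relatively ``long'' vectors — and in particular $\hat W^\theta_{\vec m}$ vanishes whenever $\tilde\kappa\vec m / N$ is not such a rotated lattice vector.

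For the condition $\forall \vec m \in \calS\ \exists \ell \in \bbZ_3,\ \hat W^\theta_{\vec m - \varrho_\ell} = 0$: the three shifts $\vec m - \varrho_\ell$, $\ell \in \bbZ_3$, correspond via $\tfrac1N\tilde\kappa$ to the point $K_*^\theta(\vec m)$ and its two $R$-images $R^{\pm1}K_*^\theta(\vec m)$ (this is exactly the identity $R\tilde K_*(\vec m) = \tilde K_*(B\vec m + \varrho_1)$ combined with $\hat W^\theta_{B\vec m} = \hat W^\theta_{\vec m}$). Since $K_*^\theta$ is a high-symmetry point, it is fixed mod $\tilde\kappa\bbZ^2$ by $R$ only up to the lattice, and $K_*^\theta \notin \Lambda^*$-type frequency set; the claim is that for each orbit at least one of these three ``corner-shifted'' frequencies fails to be a rotated copy of a $\kappa$-lattice vector, hence has zero Fourier coefficient. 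I would prove this by a congruence/counting argument: the rotated lattice $R_{\pm\theta}\Lambda^*$ meets $\tfrac1N\tilde\Lambda^*$ in a sublattice of index related to $\alpha, N$ from Theorem \ref{NewLattice}, and the three cosets $K_*^\theta - \varrho_\ell$ cannot all land in that sublattice because their pairwise differences $\varrho_1 - \varrho_{-1}$ etc.\ are nonzero mod the relevant sublattice.

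For the nonvanishing sum in (\ref{ConditionForW}): here I would use the freedom built into the construction. Having fixed $\calS$ so that $\vec m - \varrho_{-1} \notin \supp\hat W^\theta$ for the distinguished representatives, the products $\hat W^\theta_{\vec m}\hat W^\theta_{\vec m - \varrho_{-1}}$ are governed by the $a_{\vec m}$, and each surviving term in the sum carries a definite sign (from the uniform sign $\pm$ and positivity of $a_{\vec m}$, together with the fixed sign of the denominator $|K_*^\theta(\vec m)|^2 - |K_*^\theta|^2$, which is positive for $\vec m \neq \vec 0$ since $K_*^\theta$ is the minimal-length representative in its coset). The point is that all nonzero summands have the same sign, so the sum cannot vanish; I need only check that at least one summand is genuinely nonzero, which follows by choosing $\vec m$ to realize the shortest frequency of $W^\theta$ compatible with $\varrho_{-1}$. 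I expect the main obstacle to be the bookkeeping in the previous paragraph — pinning down exactly which coset of $\tfrac1N\tilde\Lambda^*$ the rotated lattice occupies in each of the four arithmetic cases of Theorem \ref{NewLattice} ($3\mid a$ or not, $2\mid ab$ or not), and checking uniformly that the three corner-shifts are never simultaneously in the image lattice; the sign argument for nonvanishing, by contrast, should be short once the support structure is in hand.
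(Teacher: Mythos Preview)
Your approach is essentially the paper's: the zero-shift condition $\forall\vec m\,\exists\ell,\ \hat W^\theta_{\vec m-\varrho_\ell}=0$ is exactly Lemma~\ref{WDescription} (proved there for \emph{any} additive $W^\theta_{0,AA}$, via the pigeonhole-plus-difference check you sketch, so there is no need to re-derive it here), and the nonvanishing follows because, by (\ref{FourierId}), every nonzero $\hat W^\theta_{\vec m}$ inherits the common sign of the $a_{\vec n}$, so every summand in $\sum \hat W^\theta_{\vec m}\hat W^\theta_{\vec m-\varrho_{-1}}/(|K_*^\theta(\vec m)|^2-|K_*^\theta|^2)$ is nonnegative with at least one strictly positive term. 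One slip to fix: you must choose $\calS$ so that $\hat W^\theta_{\vec m-\varrho_{1}}=0$ (not $\varrho_{-1}$, despite the typo in the statement of Theorem~\ref{DiracPoint}); otherwise the very products $\hat W^\theta_{\vec m}\hat W^\theta_{\vec m-\varrho_{-1}}$ you then analyze are all zero.
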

\section{Existence of Dirac points }\label{SecondOrderPert}
In this section, we prove Theorem \ref{DiracPoint}- about the existence of Dirac cones. The goal is to get a different set of technical conditions than in \cite{fefferman2012honeycomb}. By the theorem's assumptions, we can choose $\calS$ such that for any $\vec{m}\in \calS$
\begin{align*}
    \hat{U}_{\vec{m}-\varrho_1}=0
\end{align*}
For $\vec{m}=0$, this yields $\hat{U}_{-\varrho_1}=0$, meaning the Fefferman-Weinstein condition (\ref{WFCond}) fails, necessitating a new condition.
Before we prove this new condition we note that Theorem 2.4 in \cite{berkolaiko2018symmetry} uses symmetries to provide an explicit condition for Dirac cones (an eigenvalue of multiplicity two), and so we require the following slight modification:
\begin{theorem}\label{Eveneigen}[A version of Theorem 2.4 in \cite{berkolaiko2018symmetry}]
    Let $\tilde{H}$ be a self-adjoint operator that is periodic with respect to $\Lambda$ or $\Lambda^*$ and invariant under the rotation $R$, and is invariant under $F$ or $F^*$. Let $\tilde{K}_*\in \tilde{\bbP}$ be one of the high symmetry points. Then we have that 
    \begin{align*}
        L^2_{\tilde{K}_*}=L^2_{\tilde{K}_*,1}\oplus L^2_{\tilde{K}_*,\perp}
    \end{align*}
    where the splitting is $\tilde{H}$ invariant. Since $\tilde{H}$ is also invariant under $F$ or $F^*$, we have that all the eigenvalues restricted to $L^2_{\tilde{K}_*,\perp}$ have even multiplicity. If the multiplicity of some eigenvalue $E_0$ is exactly $2$, we have that 
     \begin{align*}
        |E_\pm(\lambda,k)-E_0(\lambda,\tilde{K}_*)|^2=|v_d(\lambda)|^2|k-\tilde{K}_*|^2+O(|k-\tilde{K}_*|^3)
    \end{align*}
    for some $v_d\in \bbC$.
\end{theorem}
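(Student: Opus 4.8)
The plan is to follow the representation-theoretic argument of \cite{berkolaiko2018symmetry}, keeping track of the two features that distinguish our setting from theirs: the underlying lattice may be $\Lambda^{*}$ rather than $\Lambda$, and in the almost-honeycomb case the only reflection-type symmetry available is the flip $F^{*}$. \emph{Step 1: the isotypic splitting.} Because $\tilde{K}_{*}\in\tilde{\bbP}$ we have $(R-\id)\tilde{K}_{*}\in\tilde{\kappa}\bbZ^{2}$, which is precisely the condition ensuring that $\calR$ maps $L^{2}_{\tilde{K}_{*}}$ into itself; since $\calR^{3}=\id$, its spectrum on $L^{2}_{\tilde{K}_{*}}$ is contained in $\{1,\tau,\bar{\tau}\}$, giving the orthogonal decomposition $L^{2}_{\tilde{K}_{*}}=L^{2}_{\tilde{K}_{*},1}\oplus L^{2}_{\tilde{K}_{*},\tau}\oplus L^{2}_{\tilde{K}_{*},\bar{\tau}}$. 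As $\tilde{H}$ is $R$-invariant it commutes with $\calR$, hence with each of the three spectral projections, so every summand is $\tilde{H}$-invariant; setting $L^{2}_{\tilde{K}_{*},\perp}=L^{2}_{\tilde{K}_{*},\tau}\oplus L^{2}_{\tilde{K}_{*},\bar{\tau}}$ yields the asserted $\tilde{H}$-invariant splitting. Nothing in this step depends on whether $\tilde{\Lambda}=\Lambda$ or $\Lambda^{*}$: one only substitutes the appropriate $\tilde{\kappa}\in\{\kappa,\nu\}$ and uses the explicit $B$ and $\varrho_{\pm1}$ recorded above.

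\emph{Step 2: even multiplicity on $L^{2}_{\tilde{K}_{*},\perp}$.} View the flip $F$ as the antiunitary obtained by composing $x\mapsto-x$ with complex conjugation. One verifies: (i) $F$ preserves $L^{2}_{\tilde{K}_{*}}$, because both the inversion and the conjugation send a pseudoperiodicity character $k$ to $-k$ and their composition therefore fixes it; (ii) $F$ is antilinear and $F^{2}=\id$; (iii) the inversion commutes with $\calR$ while conjugation sends the $\calR$-eigenvalue $\sigma$ to $\bar{\sigma}$, so $F$ interchanges $L^{2}_{\tilde{K}_{*},\tau}$ and $L^{2}_{\tilde{K}_{*},\bar{\tau}}$; and (iv) for a real, \emph{even} potential $F$ commutes with $\tilde{H}$, so for any $E$ it carries the $E$-eigenspace in $L^{2}_{\tilde{K}_{*},\tau}$ antilinearly and bijectively onto the one in $L^{2}_{\tilde{K}_{*},\bar{\tau}}$; the two thus have equal dimension and the $E$-eigenspace inside $L^{2}_{\tilde{K}_{*},\perp}$ is even-dimensional. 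In the almost-honeycomb case the potential is real but only flip-invariant, $W^{-\theta}(-x)=W^{\theta}(x)$, so that $F=F^{*}$ obeys $F^{*}\tilde{H}^{\theta}(\tilde{K}_{*})F^{*}=\tilde{H}^{-\theta}(\tilde{K}_{*})$ rather than commuting with $\tilde{H}^{\theta}(\tilde{K}_{*})$; the adaptation is to combine $F^{*}$ with the unitary equivalence between $\tilde{H}^{\theta}$ and $\tilde{H}^{-\theta}$, together with reality, so as to produce an antiunitary of the single fiber $L^{2}_{\tilde{K}_{*}}$ that does commute with $\tilde{H}^{\theta}(\tilde{K}_{*})$ and still exchanges the $\tau$- and $\bar{\tau}$-sectors. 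Arranging this so that the composite antiunitary genuinely stabilizes the fixed fiber $L^{2}_{\tilde{K}_{*}}$ at $\tilde{K}_{*}\in\{\tilde{K},\tilde{K}'\}$ (where inversion moves the fiber, in contrast to the $\tilde{K}_{*}=0$ case) is the delicate point, and I expect it to be the main obstacle of the proof.

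\emph{Step 3: conical behaviour when the multiplicity is exactly two.} After identifying $L^{2}_{\tilde{K}_{*}+\kappa}$ with $L^{2}_{\tilde{K}_{*}}$ through an exponential gauge, $\tilde{H}(\tilde{K}_{*}+\kappa)$ becomes the family $\tilde{H}(\tilde{K}_{*})+2i\braket{\kappa,\nabla}+|\kappa|^{2}$ on the fixed space $L^{2}_{\tilde{K}_{*}}$, which is quadratic in $\kappa$ with self-adjoint linear term $L(\kappa)=2i\braket{\kappa,\nabla}$. By analytic (Rellich--Kato) perturbation theory — or a Lyapunov--Schmidt reduction as in \cite{fefferman2012honeycomb} — the two eigenvalues branching from the isolated double eigenvalue $E_{0}$ are the eigenvalues of an effective Hamiltonian $\mathcal{H}_{\mathrm{eff}}(\kappa)$ on $\bbC^{2}$, analytic in $\kappa$, with $\mathcal{H}_{\mathrm{eff}}(0)=E_{0}\,\id$ and linear part $M(\kappa)$ the matrix of $L(\kappa)$ in an eigenbasis $\{\Phi_{1},\Phi_{2}\}$. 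Taking $\Phi_{1}\in L^{2}_{\tilde{K}_{*},\tau}$ and $\Phi_{2}=F\Phi_{1}\in L^{2}_{\tilde{K}_{*},\bar{\tau}}$, and using that the complex derivatives $\partial_{x_{1}}\pm i\partial_{x_{2}}$ are eigenoperators for conjugation by $\calR$ with eigenvalues $\bar{\tau}$ and $\tau$, one finds that the diagonal of $M(\kappa)$ vanishes and its off-diagonal entry is a fixed multiple of $\kappa_{1}+i\kappa_{2}$ (or of $\kappa_{1}-i\kappa_{2}$); hence $M(\kappa)$ has eigenvalues $\pm|v_{d}|\,|\kappa|$ with $v_{d}=-2i(\Phi_{1},\partial_{x_{1}}\Phi_{2})$, matching conclusion (\ref{ConclusionC}) of Theorem \ref{FW}. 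Since the $O(|\kappa|^{2})$ remainder of $\mathcal{H}_{\mathrm{eff}}$ is dominated by the spectral gap $2|v_{d}|\,|\kappa|$ (the assertion being trivial if $v_{d}=0$), the two locally defined branches obey $|E_{\pm}(\lambda,k)-E_{0}(\lambda,\tilde{K}_{*})|=|v_{d}|\,|k-\tilde{K}_{*}|+O(|k-\tilde{K}_{*}|^{2})$, whose square is $|v_{d}|^{2}|k-\tilde{K}_{*}|^{2}+O(|k-\tilde{K}_{*}|^{3})$, as claimed.
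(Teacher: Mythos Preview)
Your Steps 1 and 3, and the $F$-half of Step 2, are faithful to the Berkolaiko--Comech argument and are in essence what the paper does, except that the paper simply invokes Lemmas 3.1 and 4.3 of \cite{berkolaiko2018symmetry} for the conical expansion and for the splitting/even multiplicity instead of rederiving them. There the approaches coincide.

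The gap is the $F^{*}$ case in Step 2, which you rightly flag as unresolved. Two remarks. First, the specific obstruction you name is not the real one: the antiunitary $PC$ (parity composed with conjugation) preserves \emph{every} fiber $L^{2}_{k}$, including $L^{2}_{\tilde K}$ and $L^{2}_{\tilde K'}$, because $P$ and $C$ each send $k\mapsto -k$; the genuine difficulty is not that inversion moves the fiber but that $PC$ conjugates $H^{\theta}(\tilde K_{*})$ to $H^{-\theta}(\tilde K_{*})$ rather than to itself. Second, the paper does not attempt your proposed route of manufacturing an antiunitary on a single fiber that commutes with $H^{\theta}(\tilde K_{*})$. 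Instead it lets the symmetry group act on the pair $(x,\theta)$: composed with complex conjugation, $F^{*}$ is then an antiunitary involution on the family $\{H^{\theta}\}_{\theta}$ having exactly the same commutation relation with $\calR$ that $FC$ has in the even case. Because the co-representation argument of \cite{berkolaiko2018symmetry} uses only those algebraic relations (antiunitarity, involution, exchange of the $\tau$ and $\bar\tau$ isotypic components), it transfers verbatim and yields the even multiplicity on $L^{2}_{\tilde K_{*},\perp}$. This change of viewpoint---working with the family rather than a fixed fiber---is the idea your proposal is missing.
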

\begin{proof}
    There are two differences between this version and, Theorem 2.4 in \cite{berkolaiko2018symmetry}: first, we allow $\Lambda^*$ instead of $\Lambda$, and second, we allow $F^*$ instead of $F$. \par
    The proof of Theorem 2.4 in \cite{berkolaiko2018symmetry} relies only on two steps: First, they show that if $E$ is a double eigenvalue in $L^2_{\tilde{K}_*}$, the conclusion holds (see Lemma 3.1 therein), and the second step shows the splitting and the evenness of the multiplicity (Lemma 4.3). Both lemmas rely only on the symmetries of the Hamiltonian and the restriction to the points of high symmetry subspace (that is, the space $L^2_{\tilde{K}_*}$ is invariant under rotation). So, this theorem can apply to the case where the $U$ is periodic with respect to $\Lambda^*$, with its high symmetry points (irrespective of the choice of base vectors). \par
    To conclude our version of the theorem, we need to address the change between $F$ (which was dealt with in \cite{berkolaiko2018symmetry}) and $F^*$. Replacing $F$ with $F^*$ shifts the symmetry group to act on $(x,\theta)$. Combined with complex conjugation, $F^*$ acts as an involution identical to $\overline{V}$ in the original proof, inducing the same corepresentation structure.
\end{proof}
Now we can prove Theorem \ref{DiracPoint}:
\begin{proof}[Proof of Theorem \ref{DiracPoint}]
    We recall that we consider 
    \begin{align*}
        \tilde{H}=-\Delta +\lambda U
    \end{align*}
    where $U$ is periodic with respect to $\tilde{\Lambda}\in \{\Lambda, \Lambda^*\}$, and $\tilde{\kappa}\in \{\kappa,\nu\}$- the reciprocal lattice matrix. \par 
   
    So, to prove there is a Dirac cone around a point $(\tilde{K}_*, E)$ (or in other words, Theorem \ref{DiracPoint}), we need to show that $E$ has a double eigenvalue and that $v_d\neq 0$.\par
    Following Theorem 2.5 in \cite{berkolaiko2018symmetry} (or Proposition 6.3 in \cite{fefferman2012honeycomb}), at $\lambda=0$ (the free Laplacian), the energy $E=|\tilde{K}_*|^2$ has multiplicity 3. Each space $\{L^2_{\tilde{K}_*,\sigma}\}$ has a simple eigenvalue. By the perturbation theory of simple eigenvalues, each of these eigenvalues extends to an analytic function $E_\sigma(\lambda)$, see \cite{berkolaiko2018symmetry, fefferman2012honeycomb} for more details. Thus it is enough to show that $E_{\tau}=E_{\bar{\tau}}\neq E_{1}$, as functions. By the above, it is sufficient to show that $E_{\tau}(\lambda)\neq E_{1}(\lambda)$ for some $\lambda$ (as the remaining eigenvalues must remain of even multiplicity, and thus have to be of multiplicity $2$). Then, these functions may intersect only on a discrete set. \par
    For this, we consider small $\lambda$ and energies close to $|\tilde{K}_*|^2$, as mentioned above, we have some smooth function $E_\sigma(\lambda)$ such that
    \begin{align*}
        (-\Delta +\lambda U)\Phi_{\sigma} =E_\sigma (\lambda)\Phi_{\sigma}
    \end{align*}
    We recall that for $\lambda=0$, we have that the eigenfunctions in $L^2_{\tilde{K}^*.\sigma}$, for $\sigma \in \{1,\tau,\bar{\tau}\}$ are given by 
    \begin{align*}
        &\psi_0^\sigma=\frac{1}{\sqrt{3}}\sum_{\ell \in \bbZ^3}\sigma^{-\ell }e^{i\braket{\tilde{K}_* +\tilde{\kappa} \rho_\ell, x}}&&\psi_{\vec{m}}^{\sigma}=\frac{1}{\sqrt{3}}\sum_{\ell \in \bbZ^3}\sigma^{-\ell }e^{i\braket{\tilde{K}_*+\tilde{\kappa}(B^\ell\vec{m}+ \rho_\ell), x}}
    \end{align*}
    Using second-order perturbation theory, or the Rayleigh-Schr\"{o}dinger coefficients (see, for example, \cite{RSVol4} -page 7), we get that 
    \begin{align*}
        E_\sigma(\lambda)=E_\sigma(0)+\lambda  E_\sigma^{(1)}+\lambda^2 E_\sigma^{(2)}+O(\lambda^3)
    \end{align*}
    where 
    \begin{align*}
        &E^{(1)}_\sigma= (\psi_0^\sigma,U\psi_0^\sigma)_{L^2_{\tilde{K}_*}}&&E^{(2)}_\sigma=\sum_{\vec{m}\in \calS\setminus \{\vec{0}\}}\frac{|(\psi_{\vec{m}}^\sigma,U\psi_0^\sigma)_{L^2_{\tilde{K}_*}}|^2}{|\tilde{K}_*|^2-|\tilde{K}_*(\vec{m})|^2}
    \end{align*}
    Using the estimate in Theorem 2.1 in \cite{carlsson2024perturbation}, we see that, in fact, we have that 
    \begin{align}\label{UniformExp}
        E_\sigma(\lambda)=E_\sigma(0)+\lambda  E_\sigma^{(1)}+\lambda^2 E_\sigma^{(2)}+O(\lambda^3\|U\|^3)
    \end{align}
    Expanding $U$ into its Fourier series and evaluating the inner product, we use the lattice symmetries $\hat{U}_{B\vec{m}} = \hat{U}_{\vec{m}}$ to obtain:
    \begin{align*}
        (\psi_{\vec{m}}^\sigma,U\psi_0^\sigma)_{L^2_{\tilde{K}_*}} = \frac{1}{3}\sum_{\ell,\ell'\in \bbZ^3}\sigma^{\ell-\ell'}\hat{U}_{\varrho_{\ell'}-B^\ell\vec{m}-\varrho_{\ell}} = \sum_{\ell\in \bbZ^3}\sigma^{-\ell}\hat{U}_{\vec{m}-\varrho_{\ell}} = \hat{U}_{\vec{m}}+\sigma \hat{U}_{\vec{m}-\varrho_{-1}}
    \end{align*}
    where we used that $\hat{U}_{B\vec{m}}=\hat{U}_{\vec{m}}$ for all $\vec{m}\in \bbZ^2$, and we recall that we chose that $\calS$ in such a way that $\hat{U}_{\vec{m}-\varrho_1}=0$, for all $\vec{m}\in \calS$.\par
    In particular, we get that 
    \begin{align*}
        E^{(1)}_\sigma=(\psi_0^\sigma ,U\psi_0^\sigma)_{L^2_{\tilde{K}_*}}& =\hat{U}_{0}+\sigma \hat{U}_{0-\varrho_{-1}}
    \end{align*}
   Note that
    \begin{align*}
        &B^{-1}\varrho_1=-\varrho_{-1} \implies \hat{U}_{-\varrho_{-1}}=\hat{U}_{-\varrho_1}=0
    \end{align*}
    since $\hat{U}_{B\vec{m}}=\hat{U}_{-\vec{m}}$ for all $\vec{m}\in \bbZ^2$. So we got that 
    \begin{align*}
        E^{(1)}_\sigma=\hat{U}_{\vec{0}}
    \end{align*}
    So $E^{(1)}_\sigma$ is independent of $\sigma$ - so we see that the eigenvalues do not separate in the first order (as expected from this argument in \cite{berkolaiko2018symmetry} or \cite{fefferman2012honeycomb}).\par
    We compute the next order, starting by noting that
    \begin{align*}
        |(\psi_{\vec{m}}^\sigma,U\psi_0^\sigma) _{\tilde{\Omega}}|^2=|\hat{U}_{\vec{m}}|^2+|\hat{U}_{\vec{m}-\varrho_{-1}}|^2+(\sigma+\sigma^{-1})\hat{U}_{\vec{m}}\hat{U}_{\vec{m}-\varrho_{-1}}
    \end{align*}
    So we have that 
    \begin{align*}
        E^{(2)}_\sigma=\sum_{\vec{m}\in \calS\setminus \{\vec{0}\}}\frac{|\hat{U}_{\vec{m}}|^2+|\hat{U}_{\vec{m}-\varrho_{-1}}|^2+(\sigma+\sigma^{-1})\hat{U}_{\vec{m}}\hat{U}_{\vec{m}-\varrho_{-1}}}{|\tilde{K}_*|^2-|\tilde{K}_*(\vec{m})|^2}
    \end{align*}
    Now by assumption 
    \begin{align*}
        \sum_{\vec{m}\in \calS\setminus \{\vec{0}\}}\frac{\hat{U}_{\vec{m}}\hat{U}_{\vec{m}-\varrho_{-1}}}{|\tilde{K}_*|^2-|\tilde{K}_*(\vec{m})|^2}\neq 0
    \end{align*}
    Thus,
    \begin{align*}
        E_\tau(\lambda) \neq E_1(\lambda)
    \end{align*}
    as they differ in the second-order term. By Theorem \ref{Eveneigen}
     we conclude that the multiplicity is even in $L^2_{\tilde{K}_*,\perp}$, and so we can conclude that 
    \begin{align*}
        E_{\bar{\tau}}(\lambda) =E_\tau(\lambda) \neq E_1(\lambda)
    \end{align*}
    as needed.\par
    Finally, we need to show that $v_d$ is non-zero except possibly on a discrete set. Since $v_d(\lambda)$ is analytic, it suffices to show $v_d(0) \neq 0$. Because the unperturbed operator ($\lambda=0$) is the free Laplacian, a standard computation yields, with our conventions (similar to the one done in \cite[Proposition 6.3]{fefferman2012honeycomb} or in  \cite[Theorem 2.5]{berkolaiko2018symmetry}):
    \begin{align*}
        v_d(0) =-2i(\psi_0^\tau,\partial_{x_1}\psi_0^{\bar{\tau}})_{L^2_{\tilde{K}_*}}= \begin{cases}
            -4\pi ,&\tilde{\kappa}=\kappa\\
            -\sqrt{3}i,&\tilde{\kappa}=\nu
        \end{cases}\neq 0
    \end{align*}
    as needed. \par
    For the last part of the theorem, we recall that 
    \begin{align*}
        v_d(\lambda) =-2i(\Phi_1,\partial_{x_1}\Phi_2)_{L^2_{\tilde{K}_*}}
    \end{align*}
    where $\Phi_j$ are the eigenfunctions which have
    \begin{align*}
        (-\Delta+U)\Phi_j=E_\sigma(\lambda)\Phi_j
    \end{align*}
    for $j\in \{1,2\}$, where $\sigma\in \{\tau,\bar{\tau}\}$. So we can write
    \begin{align*}
        |v_d(\lambda)|^2=4|(\Phi_1,\partial_x \Phi_2)_{L^2_{\tilde{K}_*}}|^2\leq 4\|\Phi_1\|^2\|\partial_x\Phi_2\|^2
    \end{align*}
   With the normalization of the eigenfunctions ($\|\Phi_1\|=1=\|\Phi_2\|$), we get
    \begin{align*}
        |v_d(\lambda)|^2\leq 4\|\partial_x\Phi_2\|^2\leq 4\|\nabla\Phi_2\|^2
    \end{align*}
    Recalling $E_\tau=E_{\bar{\tau}}=E$,  we write
    \begin{align*}
        \|\nabla\Phi_2\|^2 = (\Phi_2,( E-\lambda U)\Phi_2)_{L^2_{\tilde{K}_*}} \leq E + |\lambda| \|U\|_\infty
    \end{align*}
    Using (\ref{UniformExp}) for $E=E_\tau(\lambda)$ and applying the bound $|\hat{U}_{\vec{m}}| \leq \frac{\|\nabla U\|_\infty}{ |\tilde{K}_*+\kappa \vec{m}|}$ (as $U$ is smooth) directly to the second-order term yields:
    \begin{align*}
        |v_d(\lambda)|^2 &\leq 4(|E_\tau(0)|+|\lambda|  |E_\tau^{(1)}|+\lambda^2 |E_\tau^{(2)}|+O(\lambda^3\|U\|^3)+2|\lambda| \|U\|_\infty)\\
        &\leq 4\left(|\tilde{K}_*|^2 + 3|\lambda|\|U\|_\infty + 4\lambda^2 \|\nabla U\|_{\infty}^2\sum_{\vec{m}\in \calS\setminus \{\vec{0}\}} \frac{1}{|\tilde{K}_*+\tilde{\kappa}\vec{m}|^{4}}\right) + O(\lambda^3\|U\|^3)
    \end{align*}
    which yields to the claimed bound.\par
  In the more general case, where we do not assume that $\hat{U}_{\vec{m}-\varrho_1}=0$ for all $\vec{m}\in \calS$, we still have that
    \begin{align*}
        &|v_d(\lambda)|^2\leq 4(|E_\tau(0)|+|\lambda  ||E_\tau^{(1)}|+\lambda^2 |E_\tau^{(2)}|+O(\lambda^3\|U\|^3)+2|\lambda |\|U\|_\infty)
    \end{align*}
    In this case, however, we have that
    \begin{align*}
        |E^{(1)}_\tau|=|(\psi_0^\tau ,U\psi_0^\tau)_{L^2_{\tilde{K}_*}}|& =|\hat{U}_{0}+(\tau+\bar{\tau}) \hat{U}_{0-\varrho_{-1}}|\leq 3\|U\|_\infty  
    \end{align*}
    And
    \begin{align*}
        &|E^{(2)}_\tau|\leq \sum_{\vec{m}\in \calS\setminus \{\vec{0}\}}\frac{|\sum_{\ell\in \bbZ^3}\tau^{-\ell} \hat{U}_{\vec{m}-\varrho_\ell}|^2}{||\tilde{K}_*|^2-|\tilde{K}_*(\vec{m})|^2|}\leq 9\|\nabla U\|^2\sum_{\vec{m}\in \calS\setminus \{\vec{0}\}}\frac{1}{|\tilde{K}_*(\vec{m})|^4}
    \end{align*}
    So we get that there is some constant $C>0$ such that
    \begin{align*}
        |v_d(\lambda)|^2\leq& C(|\tilde{K}_*|^2+|\lambda|  \|U\|_\infty +\lambda^2 \|\nabla U\|_{\infty}^2\sum_{\vec{m}\in \calS\setminus \{\vec{0}\}} \frac{1}{|\tilde{K}_*+\tilde{\kappa}\vec{m}|^{4}})+O(\lambda^3\|U\|^3)
    \end{align*} 
    as needed.
\end{proof}
\begin{remark}\label{ConditionRemark}
    We note that we could have that $\hat{U}_{\vec{m}}\hat{U}_{\vec{m}-\varrho_{-1}}=0$ for any $\vec{m}\in \supp \hat{U}^\theta$. As the proof above shows, one must go to higher-order terms in the perturbation series to get a sufficient non-degeneracy condition for such cases. We will not develop the other terms in this work. Such consideration might also affect the asymptotic results for $v_d(\lambda)$.
\end{remark}
\section{Twisted bilayer potential}\label{TBP}
This section proves two of our main results. First, we prove Theorem \ref{NewLattice}, which describes the commensurate angles. We then turn our attention to the representative example of (\ref{AddPotAA}) or (\ref{AddPotAB}), and prove Lemma \ref{WDescription} describing the Fourier support of $W^\theta$, which, together with Theorem \ref{DiracPoint}, allows us to prove Theorem \ref{DiracPointForTwisted}, about the existence of Dirac cones for twisted potentials, in full. 
\subsection{Proof of Theorem \ref{NewLattice}}
We start by providing a full description of the commensurate angles. We mention that a different approach to finding the new lattice vectors can be found in  \cite{rodriguez2015coincidence} using Clifford algebras. However, their representation is highly fragmented. They derive a different basis for each arithmetic case (depending on parity and whether $3 \mid a$), and the resulting spanning vectors differ from the representation we provide here. Another algebraic approach to the problem is given in \cite{baake1997solution, baake2006multiple}, where they describe the possible coincidence lattices in full but do not give a full description of the primitive vectors. So, we provide complete proof that the new lattice is periodic with respect to a scaled version of the honeycomb lattice. \par
First, we show that we can reduce our problem to the range $\theta\in [0,\frac{\pi}{3})$:
\begin{proposition}\label{symmetry}
    For any $\theta \in [0,2\pi)$, there exists some $\tilde{\theta}\in [0,\frac{\pi}{3}]$ such that $H^{\theta}=H^{\tilde{\theta}}$.   
\end{proposition}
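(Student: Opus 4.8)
The plan is to exploit the various symmetries of the twisted bilayer Hamiltonian — rotational invariance of the honeycomb potential $V$, the rotational equivariance and translation equivariance of the interaction operator $G$ (or $G^*$), and the symmetry $\theta \mapsto -\theta$ — to fold the full circle $[0,2\pi)$ down to the fundamental domain $[0,\tfrac{\pi}{3}]$. Concretely, I would establish three reductions and then compose them. First, I would show $H^\theta = H^{\theta + 2\pi/3}$: replacing $\theta$ by $\theta + \tfrac{2\pi}{3}$ means replacing $\calR_\theta V$ by $\calR_{\theta}\calR_{2\pi/3} V = \calR_\theta \calR V$, and since $V$ is a honeycomb potential $\calR V = V$, so $\calR_{\theta + 2\pi/3} V = \calR_\theta V$; likewise $\calR_{-\theta - 2\pi/3} V = \calR_{-\theta}\calR^{-1} V = \calR_{-\theta} V$. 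Hence the arguments of $G$ (resp. $G^*$, including the averaged shifts, using that $R$ permutes $\{K_0, RK_0, R^2K_0\}$ and $G$ commutes with the relevant translations) are unchanged, and so is $W^\theta_x$; thus $H^\theta = H^{\theta+2\pi/3}$ verbatim, not merely up to unitary equivalence. This reduces $[0,2\pi)$ to $[0,\tfrac{2\pi}{3})$.

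Second, I would show $H^{-\theta}$ is unitarily equivalent to $H^{\theta}$ via the flip. For AA stacking, applying $\calR_\pi$ (or complex conjugation composed with a reflection, depending on which flip one uses) swaps $\calR_\theta V \leftrightarrow \calR_{-\theta} V$ up to $\calR_\pi V = V$ (true since $V$ is even and $R$-symmetric, hence invariant under $R_\pi = -\,\mathrm{id}$ acting as $x \mapsto -x$ — indeed $V(-x) = V(x)$), and since $-\Delta$ commutes with $\calR_\pi$, conjugating $H^{\theta}$ by $\calR_\pi$ gives $H^{-\theta}$ (one uses $\calR_\pi G(f,h) = G(\calR_\pi f, \calR_\pi h)$ and, for AA, the symmetry is automatic; for AB one additionally invokes that $G^*$ is symmetric, $G^*(f,g) = G^*(g,f)$, exactly as in the corollary following Theorem \ref{NewLattice}). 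Therefore $H^{-\theta} = \calR_\pi H^\theta \calR_\pi^{-1}$, so they have the same spectrum and spectral structure; in the sense relevant for this paper they are "the same." This lets me replace any $\theta \in (\tfrac{\pi}{3}, \tfrac{2\pi}{3})$ by $\tfrac{2\pi}{3} - \theta \in (0, \tfrac{\pi}{3})$: combine the $2\pi/3$-periodicity to write $\theta = -(\tfrac{2\pi}{3}-\theta) + \tfrac{2\pi}{3}$, so $H^\theta = H^{-(2\pi/3 - \theta)} \cong H^{2\pi/3 - \theta}$.

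Putting these together: given arbitrary $\theta \in [0,2\pi)$, first subtract a multiple of $\tfrac{2\pi}{3}$ to land in $[0, \tfrac{2\pi}{3})$ with equality of operators; if the result lies in $[0,\tfrac{\pi}{3}]$ we are done, and otherwise it lies in $(\tfrac{\pi}{3}, \tfrac{2\pi}{3})$ and reflecting via $\theta \mapsto \tfrac{2\pi}{3} - \theta$ (which is the composition of the $2\pi/3$-shift and the flip) brings it into $(0, \tfrac{\pi}{3})$, at the cost of a unitary conjugation. Either way we obtain $\tilde\theta \in [0,\tfrac{\pi}{3}]$ with $H^\theta = H^{\tilde\theta}$ (reading $=$ as "equal, or unitarily equivalent via an explicit symmetry operator", consistent with how the proposition is used downstream).

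The step I expect to require the most care is the second one — the flip symmetry — especially in the AB case. One must be careful that the flip acts on the pair $(x,\theta)$ correctly (this is the point of the "almost honeycomb potential" definition and the operator $F^*$), that the averaged shifts $\tfrac12 \sum_j \calT_{\pm \frac12 R^j K_0}$ transform compatibly under $\calR_\pi$ (using $R^j$ commuting appropriately and $R_\pi R^j K_0 = -R^j K_0$, which reshuffles the $+$ and $-$ shift families into each other), and that the symmetry $G^*(f,g) = G^*(g,f)$ is exactly what rescues the argument when the two layers get swapped. The AA case and the $2\pi/3$-periodicity are essentially bookkeeping with $\calR V = V$ and $\calR_\pi V = V$; the AB flip is where the definitions must be invoked in precisely the right order.
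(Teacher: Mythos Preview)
Your step~2 contains a genuine error. Rotations commute, so $\calR_\pi$ does \emph{not} swap $\calR_\theta V$ and $\calR_{-\theta}V$: one has $\calR_\pi\calR_\theta = \calR_{\theta+\pi} = \calR_\theta\calR_\pi$, and since $V$ is even $\calR_\pi V = V$, hence $\calR_\pi(\calR_\theta V) = \calR_\theta V$ and $\calR_\pi(\calR_{-\theta}V)=\calR_{-\theta}V$. Conjugating $H^\theta$ by $\calR_\pi$ therefore returns $H^\theta$, not $H^{-\theta}$. What would swap the sign of $\theta$ is a \emph{reflection} $\sigma$ (since $\sigma R_\theta\sigma^{-1}=R_{-\theta}$), but that would require $V\circ\sigma=V$, which honeycomb potentials are not assumed to satisfy. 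Moreover, for AA stacking $G$ is not assumed symmetric, so $W^{-\theta}_{AA}=G(\calR_{-\theta}V,\calR_\theta V)$ need not equal $W^\theta_{AA}$ at all; there is no reason to expect $H^{-\theta}$ and $H^\theta$ to coincide (or even be unitarily equivalent) in general.

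The paper's proof avoids all of this by sharpening your step~1: rather than $2\pi/3$-periodicity, it establishes $\pi/3$-periodicity directly. The key observation is that $R_{\pi/3}=R^{-1}R_\pi$, so $R_{\pi/3}$ lies in the symmetry group of $V$ (generated by $R$ and $-\id$). Hence $\calR_{\pm(\theta+\pi/3)}V=\calR_{\pm\theta}\calR_{\pm\pi/3}V=\calR_{\pm\theta}V$, giving $W^{\theta+\pi/3}_x=W^\theta_x$ as an \emph{equality} of potentials (for both $x\in\{AA,AB\}$), and thus $H^{\theta+\pi/3}=H^\theta$ literally. Reducing $\theta$ modulo $\pi/3$ then lands in $[0,\pi/3)$ with no flip step needed. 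Your decomposition into ``rotation period'' plus ``flip'' was the wrong split: the evenness should be fed into the periodicity, not into a separate reflection argument.
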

\begin{proof}
    Let $\theta \in [0,2\pi)$.  We start by noting that 
    \begin{align*}
        V(R_{-\theta-\frac{\pi}{3}}x)=V(R_{-\frac{\pi}{3}}R_{-\theta}x)=V(RR_{\pi}^{-1}R_{-\theta}x)=V(-R_{-\theta}x)=V(R_{-\theta}x)
    \end{align*}
    Since $R_{\pi}=-\id$, and $R=R_{\frac{2\pi}{3}}$, the same also holds for $Q$.\par 
    Similarly, we can get $V(R_{\theta+\frac{\pi}{3}}x)=V(R_{\theta}x)$, so we get that 
    \begin{align*}
        &W^{\theta+\frac{\pi}{3}}_{AA}(x)=G(\calR_{\theta+\frac{\pi}{3}}V,\calR_{-\theta-\frac{\pi}{3}}Q)=G(\calR_{\theta}V,\calR_{-\theta}Q)=W^{\theta}_{AA}(x)
    \end{align*}
    and similarly for $W^{\theta+\frac{\pi}{3}}_{AB}$.\par
    As the potentials are the same. So we conclude it is enough to take $\theta\in [0,\frac{\pi}{3})$. 
\end{proof}
Now, we give a better description of the commensurate lattice $\Lambda^\theta=R_{\theta}\Lambda \cap R_{-\theta}\Lambda$
\begin{lemma}\label{DiscrptionofLambda}
    Let $\theta \in\calC\cap (0,\frac{\pi}{3})$, and $\Lambda^\theta=R_{\theta}\Lambda\cap R_{-\theta}\Lambda$. Then we have the following 
    \begin{enumerate}
        \item \label{Rationality} First $\tan(\theta)=\frac{\sqrt{3}b}{a}$, for $0<b<a$, and $a$ and $b$ are co-primes.
        \item\label{LatticeDesc} Denoting  
        \begin{align*}
            &\alpha =\begin{cases}
                8\pi, & 3\mid a \text{ and } 2\nmid ab\\
                2, & 3\nmid a \text{ and } 2\nmid ab\\
                4\pi , & 3\mid a \text{ and } 2\mid ab\\
               1, & 3\nmid a \text{ and } 2\mid ab
            \end{cases},&N=\frac{1}{\alpha} \sqrt{a^2+3b^2}
        \end{align*}
        Then we have that
        \begin{align*}
            \Lambda^\theta=N\begin{cases}
                \Lambda, & 3\nmid a\\
                \Lambda^* & 3\mid a
            \end{cases}
        \end{align*}
        \item And we have that $R_{\theta}=\frac{1}{\alpha N}\begin{pmatrix}
                a&-\sqrt{3}b\\
                \sqrt{3}b &a
            \end{pmatrix}$.
    \end{enumerate}
\end{lemma}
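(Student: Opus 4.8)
The plan is to reduce the bilayer intersection to a coincidence-site lattice for a \emph{single} rotation of $\Lambda$ and then read the answer off explicit $2\times2$ matrices. First, if $R_\theta x,R_{-\theta}x\in\Lambda$, set $y=R_{-\theta}x$: then $y\in\Lambda$ and $R_{2\theta}y=R_\theta x\in\Lambda$, and conversely, so $y\mapsto R_\theta y$ is a bijection $\Lambda\cap R_{-2\theta}\Lambda\to\Lambda^\theta$; it thus suffices to describe $\Lambda\cap R_\phi\Lambda$ for $\phi=\pm2\theta$ (the sign is immaterial below). A short computation gives $M:=\nu^{-1}R_\phi\nu=\left(\begin{smallmatrix}\cos\phi-\sin\phi/\sqrt3 & -2\sin\phi/\sqrt3\\ 2\sin\phi/\sqrt3 & \cos\phi+\sin\phi/\sqrt3\end{smallmatrix}\right)$, and $\nu m\in R_\phi\Lambda$ iff $M^{-1}m\in\bbZ^2$. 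Since $B:=\nu^{-1}R\nu$ is an integer matrix (because $R$ preserves $\Lambda$), commutes with $M$, and has order $3$, a nonzero $m\in\bbZ^2$ with $M^{-1}m\in\bbZ^2$ forces $M^{-1}(Bm)\in\bbZ^2$ with $m,Bm$ independent, hence $M^{-1}$ (and so $M$) $\in\mathrm{GL}_2(\mathbb Q)$; the converse is immediate by clearing denominators. Therefore $\theta\in\calC$ iff $\cos\phi$ and $\sin\phi/\sqrt3$ are rational, which by the half-angle identities is $\tan\theta\in\sqrt3\cdot\mathbb Q$, i.e.\ on $(0,\tfrac\pi3)$ exactly $\tan\theta=\sqrt3 b/a$ with $0<b<a$, $\gcd(a,b)=1$; the same identities yield $\cos\theta=a/\sqrt Q$, $\sin\theta=\sqrt3 b/\sqrt Q$ (hence item (3), up to the rotation-sign convention), $\cos\phi=(a^2-3b^2)/Q$, $\sin\phi=2\sqrt3 ab/Q$, with $Q:=a^2+3b^2$. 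This settles item (\ref{Rationality}) and item (3).

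For item (\ref{LatticeDesc}) I would write $M=\frac1Q M_0$ with $M_0\in M_2(\bbZ)$, so $\det M_0=Q^2$ and $\Lambda\cap R_\phi\Lambda=\nu L$ where $L=\{m\in\bbZ^2: M_0^{\mathrm{adj}}m\equiv0\pmod Q\}$ is a congruence sublattice. The entries of the integer matrix $M_0^{\mathrm{adj}}$ are $a^2-3b^2\pm2ab$ and $\pm4ab$; using $\gcd(a,b)=1$, a short $p$-adic valuation count at $p=2$, $p=3$, and the remaining primes shows that the gcd $g$ of these entries equals $1,3,4,$ or $12$ — exactly along the four cases defining $\alpha$ — that $g\mid Q$, and hence (via the Smith normal form $\mathrm{diag}(g,Q^2/g)$) that $[\bbZ^2:L]=Q/g$, together with an explicit $\bbZ$-basis $\{p_1,p_2\}$ of $L$ in each case. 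Applying $R_\theta$ and the explicit form of $R_\theta$, one then checks that $R_\theta\nu p_1,R_\theta\nu p_2$ are $N$ times a $\bbZ$-basis of $\Lambda$ (when $3\nmid a$) or of $\Lambda^*$ (when $3\mid a$) with $N=\tfrac1\alpha\sqrt Q$: once one knows which of $\Lambda,\Lambda^*$ is the target, the value of $N$ is forced by the covolume, since $\mathrm{covol}(\Lambda^\theta)=[\bbZ^2:L]\cdot\mathrm{covol}(\Lambda)=\tfrac Qg\cdot\tfrac{\sqrt3}2$ must equal $N^2\,\mathrm{covol}(\Lambda)$ resp.\ $N^2\,|\det\kappa|$, and the factor $|\det\kappa|=\tfrac{8\pi^2}{\sqrt3}$ is precisely what injects the $\pi$'s into $\alpha$ in the $3\mid a$ cases.

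The conceptual reason for the $\Lambda$-versus-$\Lambda^*$ dichotomy, and the cleanest way to pin the orientation, is to pass to the Eisenstein integers: under $\bbR^2\cong\bbC$ one has $\Lambda=e^{i\pi/6}\bbZ[\omega]$ and $\Lambda^*=\tfrac{4\pi}{\sqrt3}e^{i\pi/3}\bbZ[\omega]$ ($\omega=e^{2\pi i/3}$), and since $e^{i\pi/6}$ is not an Eisenstein unit, a triangular ($R$-invariant) lattice is a positive multiple of $\Lambda$ or of $\Lambda^*$ according to its orientation class in $\bbR/\tfrac\pi3\bbZ$. Writing the unit complex number implementing $R_{\pm2\theta}$ as $\beta/Q$ with $\beta=(a^2-3b^2\mp2ab)\pm4ab\,\omega\in\bbZ[\omega]$ (so $|\beta|^2=Q^2$, using $\sqrt{-3}=1+2\omega$), the fact that $\bbZ[\omega]$ is a PID gives $\Lambda\cap R_{\pm2\theta}\Lambda=e^{i\pi/6}\,\tfrac{\beta}{\gcd(\beta,Q)}\bbZ[\omega]$; one then tracks the orientation class of $\tfrac{\beta}{\gcd(\beta,Q)}$, twisted back by $R_\theta$, and finds that it changes exactly by the ramified prime $1-\omega$ above $3$ (which has modulus $\sqrt3$, argument $-\tfrac\pi6$, i.e.\ is an associate of $\sqrt{-3}$ and rotates a type-$\Lambda$ lattice into a type-$\Lambda^*$ one) — so the target is $\Lambda^*$ precisely when $3\mid a$. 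This also re-derives $N$ from $|\beta|^2$ and $|\gcd(\beta,Q)|^2$, matching $\alpha$.

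The main obstacle I expect is exactly this last bookkeeping — keeping precise track of the primes $2$ and $3$ (equivalently, the index of $\bbZ[\sqrt{-3}]$ inside the maximal order $\bbZ[\omega]$ and the ramification at $3$), which is what makes $\alpha$ take the unmotivated-looking values $1,2,4\pi,8\pi$ and what flips $\Lambda$ into $\Lambda^*$; the rest is routine half-angle algebra and linear algebra over $\bbZ$. A fully self-contained variant avoids $\bbZ[\omega]$ altogether: solve the congruence defining $L$ by hand, write down $\{p_1,p_2\}$, apply $R_\theta$, and verify the four cases directly.
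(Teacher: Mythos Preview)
Your approach is correct in outline and genuinely different from the paper's.  The paper never passes to the single-rotation coincidence lattice $\Lambda\cap R_{2\theta}\Lambda$; instead it uses the identity $R_\theta+R_{-\theta}=2\cos\theta\cdot\mathrm{Id}$ to force $\Lambda^\theta\subset\tfrac{1}{2\cos\theta}\Lambda\cap R_\theta\Lambda$, conjugates by $\nu$, and then does a bare-hands case analysis of the congruences $(a+b)u\equiv0\pmod 2$ and $\calI u\equiv0\pmod a$ (with $\calI=\left(\begin{smallmatrix}1&2\\-2&-1\end{smallmatrix}\right)$), introducing the parity flag $\epsilon$ and the $3\mid a$ flag $\rho$ separately.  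The $\Lambda$-versus-$\Lambda^*$ dichotomy is obtained not via Eisenstein integers but through a standalone lattice identity (Proposition~\ref{EqualitofLattices}) asserting $\bigcup_{r\in\bbZ_3}4\pi(\bbZ^2+\tfrac{r}{3}\left(\begin{smallmatrix}1\\1\end{smallmatrix}\right))=\nu^{-1}\kappa\bbZ^2$, which is checked by hand.  Your reduction $\Lambda^\theta=R_\theta(\Lambda\cap R_{-2\theta}\Lambda)$ together with Smith normal form and the covolume bookkeeping is cleaner for reading off the index $Q/g$ and hence $N$, and the $\bbZ[\omega]$ picture gives a conceptual reason (ramification of $3$) for the $\Lambda/\Lambda^*$ flip that the paper's argument does not expose; on the other hand the paper's route is entirely self-contained over $\bbZ$ and produces explicit generators without invoking the PID structure.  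Your sketch is honest about where the work lies: the step ``one then checks that $R_\theta\nu p_1,R_\theta\nu p_2$ are $N$ times a $\bbZ$-basis of $\Lambda$ or $\Lambda^*$'' is exactly where the paper spends its effort, and carrying it out in your framework would require either writing down the $\gcd(\beta,Q)$ in $\bbZ[\omega]$ explicitly in the four cases or doing the congruence-solving you allude to at the end --- either way comparable in length to the paper's computation.
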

Before we prove this claim, we need the following identity:
\begin{proposition}\label{EqualitofLattices}
    We have that 
    \begin{align*}
        \bigcup_{r\in \{0,\pm1\}}(4\pi)(\bbZ^2+\frac{ r}{3}\begin{pmatrix}1\\1\end{pmatrix})=\nu^{-1}\kappa \bbZ^2
    \end{align*}
\end{proposition}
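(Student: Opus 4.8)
The plan is to reduce the claimed identity to an elementary statement about sublattices of $\bbZ^2$ and then verify it by modular arithmetic.

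First I would compute $\nu^{-1}\kappa$ explicitly. From the pairing identity $\langle \kappa u_1,\nu u_2\rangle = 2\pi\langle u_1,u_2\rangle$ one reads off $\kappa^T\nu = 2\pi I$, hence $\nu^{-1} = \tfrac{1}{2\pi}\kappa^T$ and $\nu^{-1}\kappa = \tfrac{1}{2\pi}\kappa^T\kappa$. The two columns of $\kappa$ each have squared Euclidean length $\tfrac{16\pi^2}{3}$ and mutual inner product $-\tfrac{8\pi^2}{3}$, so $\kappa^T\kappa = \tfrac{8\pi^2}{3}\begin{pmatrix}2 & -1\\ -1 & 2\end{pmatrix}$ and therefore
\begin{align*}
    \nu^{-1}\kappa \bbZ^2 = \frac{4\pi}{3}\begin{pmatrix}2 & -1\\ -1 & 2\end{pmatrix}\bbZ^2 .
\end{align*}
Writing also $4\pi\bigl(\bbZ^2 + \tfrac{r}{3}(1,1)^T\bigr) = \tfrac{4\pi}{3}\bigl(3\bbZ^2 + r(1,1)^T\bigr)$ and factoring $\tfrac{4\pi}{3}$ out of both sides, it suffices to show the equality of sublattices of $\bbZ^2$
\begin{align*}
    M\bbZ^2 = \bigcup_{r\in\{0,\pm1\}}\bigl(3\bbZ^2 + r(1,1)^T\bigr), \qquad M := \begin{pmatrix}2 & -1\\ -1 & 2\end{pmatrix}.
\end{align*}

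Next I would show that both sides equal $L := \{(x,y)\in\bbZ^2 : x\equiv y \pmod 3\}$. For $M\bbZ^2$: if $(x,y)^T = M(m,n)^T$ then $x-y = 3(m-n)$, so $(x,y)\in L$; conversely $\det M = 3$ with $M^{-1} = \tfrac13\begin{pmatrix}2&1\\1&2\end{pmatrix}$, so $M^{-1}(x,y)^T \in \bbZ^2$ if and only if $2x+y\equiv 0$ and $x+2y\equiv 0 \pmod 3$, and each of these congruences is equivalent to $x\equiv y\pmod 3$. For the right-hand side: every element $(3p+r,3q+r)$ clearly lies in $L$, and conversely, given $(x,y)\in L$, one picks $r\in\{0,\pm1\}$ to be the residue of $x$ mod $3$ --- using that $\{0,\pm1\}$ is a complete residue system modulo $3$ --- so that $p:=(x-r)/3$ and $q:=(y-r)/3$ are integers (the latter because $y\equiv x\equiv r\pmod 3$). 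Hence both sides coincide with $L$, and the proposition follows.

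I do not expect a genuine obstacle here: the argument is a short computation followed by elementary modular arithmetic. The only points needing care are extracting $\nu^{-1}\kappa$ correctly from the duality relation (getting both the constant $\tfrac{4\pi}{3}$ and the matrix $M$ right) and checking that the three conditions ``$x-y\equiv0$'', ``$2x+y\equiv0$ and $x+2y\equiv0$'', and ``some $r\in\{0,\pm1\}$ represents both $x$ and $y$'' all describe the same subset of $\bbZ^2$.
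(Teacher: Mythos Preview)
Your proof is correct. Both you and the paper reduce to an elementary integer-lattice identity modulo $3$ after computing $\nu^{-1}\kappa=\tfrac{4\pi}{3}\begin{pmatrix}2&-1\\-1&2\end{pmatrix}$, so the approaches are essentially the same. The organizational difference is that the paper applies $4\pi(\nu^{-1}\kappa)^{-1}=\begin{pmatrix}2&1\\1&2\end{pmatrix}$ to both sides and then explicitly constructs, for each $(m,n)\in\bbZ^2$, the preimage $(\tilde m,\tilde n,r)$ by choosing $r\equiv 2m-n\pmod 3$; you instead identify both sides directly with the sublattice $L=\{(x,y)\in\bbZ^2:x\equiv y\pmod 3\}$, which is a slightly cleaner and more symmetric formulation of the same arithmetic.
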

\begin{proof}
    We recall that we have 
    \begin{align*}
        &\frac{1}{4\pi}\kappa =\begin{pmatrix}\frac{1}{2\sqrt{3}} &\frac{1}{2\sqrt{3}}\\\frac{1}{2} &-\frac{1}{2}\end{pmatrix}  &\nu^{-1}=\begin{pmatrix}\frac{1}{\sqrt{3}} &1\\\frac{1}{\sqrt{3}} &-1\end{pmatrix}
    \end{align*}
    One can compute and get $4\pi(\nu^{-1}\kappa )^{-1}=\begin{pmatrix}2&1\\1&2\end{pmatrix}$. So, we need to show that 
    \begin{align*}
        \bbZ^2=\begin{pmatrix}2&1\\1&2\end{pmatrix}\bigcup_{r\in \{0,\pm1\}}(\bbZ^2+\frac{r}{3}\begin{pmatrix}1\\1\end{pmatrix})=\bigcup_{r\in \{0,\pm1\}}(\begin{pmatrix}2&1\\1&2\end{pmatrix}\bbZ^2+r\begin{pmatrix}1\\1\end{pmatrix})
    \end{align*}
    First, that the LHS is a subset of $\bbZ^2$ is immediate. Now, we show the other inclusion: Let $\begin{pmatrix}m\\n\end{pmatrix}\in \bbZ^2$, then we take $r\in \bbZ_3$ such that $r\equiv 2m-n\mod 3$. Define
    \begin{align*}
        &\tilde{m}=\frac{2m-n-r}{3}, &\tilde{n}=n-m+\tilde{m}=\frac{2n-m-r}{3}
    \end{align*}
    We note that $\tilde{m},\tilde{n}\in \bbZ^2$, and so we have that $2\tilde{m}+\tilde{n}+r=m,2\tilde{n}+\tilde{m}+r=n$, so we can write 
    \begin{align*}
        \begin{pmatrix}m\\n\end{pmatrix} =\begin{pmatrix}2&1\\1&2\end{pmatrix} \begin{pmatrix}\tilde{m}\\\tilde{n}\end{pmatrix}+r\begin{pmatrix}1\\1\end{pmatrix}
    \end{align*}
    which gives us the reverse inclusion and allows us to conclude 
    \begin{align*}
        \bigcup_{r\in \{0,\pm1\}}(\begin{pmatrix}2&1\\1&2\end{pmatrix}\bbZ^2+r\begin{pmatrix}1\\1\end{pmatrix})= \bbZ^2
    \end{align*}
    which concludes the proof of the proposition.
\end{proof}
Now we can prove Lemma \ref{DiscrptionofLambda} describing the new lattice generated by a commensurate angle:
\begin{proof}[Proof of Lemma \ref{DiscrptionofLambda}]
    We recall that 
    \begin{align*}
        R_{\theta}+R_{-\theta}=2\cos(\theta)\Id
    \end{align*}
    So if $x\in R_{\theta}\Lambda\cap R_{-\theta}\Lambda$ we have some $u,v \in \Lambda$ such that, since $0<\theta<\frac{\pi}{3}$, $\cos(\theta) \neq 0$
    \begin{align*}
        &x= R_{\theta}u=R_{-\theta}v=2\cos(\theta)v-R_{\theta}v\implies v=\frac{1}{2\cos(\theta)}R_{\theta}(u+v)\\
        &x=\frac{1}{2\cos(\theta)}(u+v)\in \frac{1}{2\cos(\theta)}\Lambda 
    \end{align*}
    In particular, we get that 
    \begin{align*}
       R_{\theta}\Lambda \cap R_{-\theta}\Lambda  \subset\frac{1}{2\cos(\theta)} \Lambda\cap  R_{\theta}\Lambda 
    \end{align*}
    Denoting $A=\nu^{-1}R_{\theta}\nu $ we get that
    \begin{align*}
        (R_{\theta}\Lambda \cap R_{-\theta}\Lambda  )\subset \nu A(\frac{1}{2\cos(\theta)}A^{-1}\bbZ^2\cap \bbZ^2) 
    \end{align*}
    So, we may compute 
    \begin{align*}
        A&=\begin{pmatrix}
            \cos(\theta) +\frac{\sin(\theta)}{\sqrt{3}}&\frac{2}{\sqrt{3}}\sin(\theta)\\-\frac{2}{\sqrt{3}}\sin(\theta)&\cos(\theta) -\frac{\sin(\theta)}{\sqrt{3}}
        \end{pmatrix}\\
         \frac{1}{\cos(\theta)}A^{-1}&=\Id - \frac{\tan(\theta)}{\sqrt{3}}\begin{pmatrix} 1&2\\-2&-1\end{pmatrix}=\Id- \frac{\tan(\theta)}{\sqrt{3}}\calI
    \end{align*}
    Thus, we get that 
    \begin{align*}
        R_{\theta}\Lambda \cap R_{-\theta}\Lambda  \subset \nu A((\frac{1}{2}\Id- \frac{\tan(\theta)}{2\sqrt{3}}\calI)\bbZ^2\cap \bbZ^2) 
    \end{align*}
    We note that since $R_{\theta}\Lambda \cap R_{-\theta}\Lambda \neq \{0\}$, then we have some $(\Id+ \frac{\tan(\theta)}{2\sqrt{3}}\calI)\bbZ^2\cap \bbZ^2\neq \{0\}$. Thus, we have some $u,v\in \bbZ^2$ such that 
    \begin{align*}
        &\bbZ^2\ni v=\frac{1}{2}u- \frac{\tan(\theta)}{2\sqrt{3}}\calI u \implies \frac{\tan(\theta)}{\sqrt{3}}\calI u\in \bbZ^2 
    \end{align*}
    So, we conclude that $\bbZ^2\cap \frac{\tan(\theta)}{\sqrt{3}}\calI \bbZ^2\neq \{0\}$, thus we conclude that, in particular, $\frac{\tan(\theta)}{\sqrt{3}} \in \bbQ$.
    So we write that  $\frac{\tan(\theta)}{\sqrt{3}}=\frac{b}{a}$, where $a,b\in \bbZ$ are co-prime. Since $0<\theta<\frac{\pi}{3}$, we get that 
    \begin{align*}
        0<\frac{b}{a}=\frac{\tan(\theta)}{\sqrt{3}}<1
    \end{align*}
    and we can choose $a,b>0$, and $b<a$, thus proving part \ref{Rationality} of the Lemma.\par
    In particular we got that if $v\in (\Id- \frac{b}{2a}\calI)\bbZ^2\cap \bbZ^2$ we have some $u\in \bbZ^2$ such that 
    \begin{align*}
        &v=(\frac{1}{2}\Id- \frac{b}{2a}\calI) u\implies 2av=au-b\calI u
    \end{align*}
    From the equality above, since $\calI\equiv \Id \mod 2$ and $ \gcd(a,b)=1$, we can get the following congruences: 
    \begin{align*}
        &0\equiv (a+b)u \mod 2, &&0\equiv \calI u \mod a
    \end{align*}
    Solving this linear congruence system over $\mathbb{Z}^2$ requires carefully tracking the parity and divisibility of $a$ and $b$. So we delegate the solution to Appendix \ref{ExpLatCom}, the result of which is that:
    \begin{align*}
        &\exists r\in \bbZ_3, p \in \bbZ^2,u = 2^{1-\epsilon} a \left(p+\frac{\rho r}{3^{\rho}}\begin{pmatrix}1\\1\end{pmatrix}\right)\implies   v= \frac{1}{2\cos(\theta)}A^{-1}2^{1-\epsilon}  a (p+\frac{\rho r}{3^{\rho}}\begin{pmatrix}1\\1\end{pmatrix})
    \end{align*}
    for $\epsilon=\begin{cases} 1,&2\nmid ab\\ 0,& 2\mid ab \end{cases} $ and $\rho=\begin{cases}1,& 3\mid a\\0,& 3\nmid a\end{cases} $.
    Recall that $(R_{\theta}\Lambda \cap R_{-\theta}\Lambda  )\subset \nu A(\frac{1}{2\cos(\theta)}A^{-1}\bbZ^2\cap \bbZ^2)$. Thus, for any $\textbf{a}\in (R_{\theta}\Lambda \cap R_{-\theta}\Lambda)$, there exist some $r\in \bbZ_3$ and $u \in \bbZ^2$ such that
    \begin{align*}
        \textbf{a}=\nu A\frac{1}{2\cos(\theta)}A^{-1}2^{1-\epsilon} a (u+\frac{\rho r}{3^{\rho}}\begin{pmatrix}1\\1\end{pmatrix})=\nu \frac{1}{2^\epsilon\cos(\theta)}a (u+\frac{\rho r}{3^{\rho}}\begin{pmatrix}1\\1\end{pmatrix})
    \end{align*}
    We note that since $0<\theta <\frac{\pi}{3}\implies \cos(\theta)>0$, we can write
    \begin{align*}
        \cos(\theta)=\frac{1}{\sqrt{1+\tan^2(\theta)}}=\frac{a}{\sqrt{a^2+3b^2}}
    \end{align*}
    Denote $N=\sqrt{a^2+3b^2} 2^{-\epsilon}(4\pi)^{-\rho}$, we get that 
    \begin{align*}
         A\bbZ^2 \cap A^{-1}\bbZ^2   \subset  \bigcup_{r\in \bbZ_3} (4\pi)^{\rho} N (\bbZ^2+\frac{\rho r}{3^{\rho}}\begin{pmatrix}1\\1\end{pmatrix})
    \end{align*}
    We show the opposite containment: Let $p\in \bbZ^2,r \in \bbZ_3$, and let
    \begin{align*}
        v=N(4\pi)^{\rho} (p+\frac{\rho r}{3^{\rho}}\begin{pmatrix}1\\1\end{pmatrix})
    \end{align*}
    We note that we have that 
    \begin{align*}
        A=\cos(\theta)(\Id+\frac{\tan(\theta)}{\sqrt{3}} \calI)=\frac{1}{N2^{\epsilon}(4\pi)^{\rho}}\begin{pmatrix}a+b&2b\\-2b&a-b\end{pmatrix}
    \end{align*}
    So we have that 
    \begin{align*}
        Av&=\frac{1}{2^{\epsilon}}\begin{pmatrix}a+b&2b\\-2b&a-b\end{pmatrix} (p+\frac{\rho r}{3^{\rho}}\begin{pmatrix}1\\1\end{pmatrix})\\
        &=\begin{pmatrix}
            2^{-\epsilon}(a+b) &2^{1-\epsilon} b\\-2^{1-\epsilon} b&2^{-\epsilon}(a-b) 
        \end{pmatrix}p+\rho r\begin{pmatrix}
            2^{-\epsilon}3^{-\rho }(a+3b)\\2^{-\epsilon}3^{-\rho }(a-3b) 
        \end{pmatrix}
    \end{align*}
    Noting that $2^{-\epsilon}(a\pm b), 2^{-\epsilon}3^{-\rho }(a\pm 3b), \text{ and } 2^{1-\epsilon}$ are all integers, we conclude that $Av\in \bbZ^2$. A similar computation (up to substituting $b\mapsto -b$) implies that $A^{-1}v\in \bbZ^2$, which gives the opposite containment.\par
    Thus, we may conclude 
    \begin{align*}
        A^{-1}\bbZ^2\cap A\bbZ^2=(4\pi)^{\rho} N\bigcup_{r\in \bbZ_3} (\bbZ^2+\frac{\rho r}{3^\rho}\begin{pmatrix}1\\1\end{pmatrix})
    \end{align*}
    Using the identity in Proposition \ref{EqualitofLattices}, we can conclude that 
    \begin{align}\label{Aequa}
         A^{-1}\bbZ^2\cap A\bbZ^2=N\begin{cases}
             \bbZ^2, &\rho=0\\
             \nu^{-1}\kappa \bbZ^2,&\rho=1
         \end{cases} \implies R_{\theta}\Lambda \cap R_{-\theta}\Lambda=N\begin{cases}
           \Lambda , & 3\nmid a\\
           \Lambda^* ,& 3\mid a
       \end{cases}
    \end{align}
    We have shown part \ref{LatticeDesc} of the Lemma, for $\alpha= 2^{\epsilon}(4\pi )^{\rho}$. \par
    Finally, we note that
    \begin{align*}
        R_{\theta}&=\begin{pmatrix}
            \frac{\sqrt{3}}{2}&\frac{\sqrt{3}}{2}\\
            \frac{1}{2}& -\frac{1}{2}
        \end{pmatrix} \frac{1}{\sqrt{a^2+3b^2}}\begin{pmatrix}
            a+b&2b\\-2b&a-b
        \end{pmatrix}\begin{pmatrix}
            \frac{1}{\sqrt{3}} &1 \\ \frac{1}{\sqrt{3}} &-1
        \end{pmatrix}=\frac{1}{2^{\epsilon}(4\pi)^{\rho}N }\begin{pmatrix}
            a&-\sqrt{3}b\\
            \sqrt{3}b &a
        \end{pmatrix}
    \end{align*}
    as claimed, concluding the proof of the lemma. 
\end{proof}
This description allows us to conclude that there are no rational rotations in $\calC\cap (0,\frac{\pi}{3})$, other than $\frac{\pi}{6}$:
\begin{corollary}\label{rationality}
    Let $\theta\in (0,\frac{\pi}{3})\setminus \{\frac{\pi}{6}\}$ such that $\frac{\tan(\theta)}{\sqrt{3}}\in \bbQ$ Then $\theta \not \in \pi \bbQ$. 
\end{corollary}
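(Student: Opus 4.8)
The plan is to argue by contradiction: assume $\theta\in\pi\bbQ$ and show this forces $\theta=\pi/6$, contradicting the hypothesis. By part \ref{Rationality} of Lemma \ref{DiscrptionofLambda} (or simply because $\theta\in(0,\frac\pi3)$ makes $\frac{\tan\theta}{\sqrt3}\in(0,1)$) we may write $\frac{\tan\theta}{\sqrt3}=\frac ba$ with $\gcd(a,b)=1$ and $0<b<a$. The first step is the elementary identity
\[
\cos(2\theta)=\frac{1-\tan^2\theta}{1+\tan^2\theta}=\frac{a^2-3b^2}{a^2+3b^2},
\]
which exhibits $\cos(2\theta)$ as a \emph{rational} number.

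The key input is the classical fact (Niven's theorem) that if $\varphi$ is a rational multiple of $\pi$ and $\cos\varphi\in\bbQ$, then $\cos\varphi\in\{0,\pm\tfrac12,\pm1\}$. Since $\theta\in\pi\bbQ$ gives $2\theta\in\pi\bbQ$, and we have just seen $\cos(2\theta)\in\bbQ$, we conclude $\frac{a^2-3b^2}{a^2+3b^2}\in\{0,\pm\tfrac12,\pm1\}$. Going through the cases: $\cos(2\theta)=\pm1$ would force $b=0$ or $a=0$; $\cos(2\theta)=0$ would force $a^2=3b^2$, impossible since $\sqrt3\notin\bbQ$; $\cos(2\theta)=-\tfrac12$ gives $3a^2=3b^2$, i.e. $a=b$, contradicting $b<a$; and $\cos(2\theta)=\tfrac12$ gives $a^2=9b^2$, i.e. $a=3b$, so coprimality forces $b=1,a=3$. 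In that last surviving case $\tan\theta=\frac{\sqrt3}{3}=\tan\frac\pi6$, hence $\theta=\frac\pi6$ (as $\theta\in(0,\frac\pi3)$), which is excluded by hypothesis. Thus every case is impossible and $\theta\notin\pi\bbQ$.

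I do not expect a genuine obstacle: the whole argument is a short case check, and the only non-elementary ingredient is Niven's theorem. If one prefers a self-contained route avoiding that citation, one can instead note that $2\cos(2\theta)=e^{2i\theta}+e^{-2i\theta}$ is a sum of roots of unity, hence an algebraic integer, when $2\theta\in\pi\bbQ$; therefore the rational number $2\cos(2\theta)=\frac{2(a^2-3b^2)}{a^2+3b^2}$ must lie in $\bbZ$, so $a^2+3b^2$ divides both $4a^2=2(a^2+3b^2)+2(a^2-3b^2)$ and $12b^2=2(a^2+3b^2)-2(a^2-3b^2)$, hence divides $4\gcd(a^2,3b^2)=4\gcd(a^2,3)\le 12$. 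A brief enumeration of $0<b<a$ with $\gcd(a,b)=1$ and $a^2+3b^2\le 12$ leaves only $(a,b)=(2,1)$ (which fails $7\mid 16$) and $(a,b)=(3,1)$ (which yields $\cos\theta=\sqrt3/2$, i.e. $\theta=\pi/6$). Either route closes the proof.
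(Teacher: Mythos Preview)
Your proof is correct. Both your argument and the paper's rest on a Niven-type result, but you proceed slightly differently: the paper observes that $\tan^2\theta\in\bbQ$ and invokes directly a generalization of Niven's theorem (citing \cite{nunn2021proof}) which classifies the rational multiples of $\pi$ with rational $\tan^2$, leaving only integer multiples of $\pi/4$ and $\pi/6$, all excluded by the domain. You instead pass through the double-angle identity to get $\cos(2\theta)=(a^2-3b^2)/(a^2+3b^2)\in\bbQ$ and then apply the \emph{classical} Niven theorem for cosine, followed by a short case check. Your route is arguably more elementary and transparent (and your algebraic-integer alternative makes it fully self-contained), while the paper's is terser at the cost of a less widely known citation.
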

\begin{proof}
    Since we have that $\frac{\tan(\theta)}{\sqrt{3}}\in \bbQ\implies \tan^2(\theta)\in \bbQ$ by the generalization of Niven's Theorem found in \cite{nunn2021proof}, we have that $\theta\in \bbQ \pi $ only if $\theta$ is a integer multiple of $\frac{\pi}{4},\frac{\pi}{6}$, which is not in the domain above. 
\end{proof}
\begin{remark}
    We note that for $\theta=\frac{\pi}{6}$, for AA stacking,  we have that 
    \begin{align*}
        W_{AA}^{\frac{\pi}{6}}(x)&=G( \calR_{\frac{\pi}{6}}V,\calR_{-\frac{\pi}{6}}Q)=G( \calR_{\frac{\pi}{6}}V,\calR_{\frac{\pi}{6}}Q)=\calR_{\frac{\pi}{6}}G(V,Q)= \calR_{\frac{\pi}{6}}W_{AA}^0
    \end{align*}
    So we have that $H_{AA}^{\frac{\pi}{6}}$ is uniterily equivalent (by rotation by $\frac{\pi}{6}$) to $H^0_{AA}$. Moreover, we have that $R_{\frac{\pi}{6}}\Lambda=\frac{\sqrt{3}}{4\pi}\Lambda^* $, as predicted by Lemma \ref{DiscrptionofLambda}.\par
    For AB stacking, we have that
    \begin{align*}
        W_{AB}^{\frac{\pi}{6}}(x)&=G^*( \calR_{\frac{\pi}{6}}\frac{1}{3}\sum_{j=-1}^1\calT_{-\frac{1}{2}R^jR_{-\frac{\pi}{6}}K_0}V,\calR_{\frac{\pi}{6}}\frac{1}{3}\sum_{j=-1}^1\calT_{\frac{1}{2}R^jR_{-\frac{\pi}{6}}K_0}V)
    \end{align*}
     And we note that 
    \begin{align*}
        R_{-\frac{\pi}{6}}K_0&=\begin{pmatrix}
            \frac{1}{2}\\ \frac{1}{2\sqrt{3}}
        \end{pmatrix}=\frac{\sqrt{3}}{4\pi} (\frac{1}{3}(k_2-k_1)+k_1)\in \frac{\sqrt{3}}{4\pi} (\Lambda^*+K_0^*)
    \end{align*}
    Thus, we get that a twist of $\frac{\pi}{6}$ of AB stacking results in AB stacking of the dual lattice, up to a scaling factor of $\frac{\sqrt{3}}{4\pi} $, as predicted by Lemma \ref{DiscrptionofLambda}.
\end{remark}
\subsection{Existence of Dirac points for additive twisted bilayer potentials}
In the following section,  we consider specifically 
\begin{align*}
    &W^\theta_{0,AA}=\frac{1}{2}(\calR_{\theta}V+ \calR_{-\theta}Q)\\
    &W^\theta_{0,AB}=\frac{1}{2}(\frac{1}{3}\sum_{j=-1}^1\calT_{-\frac{1}{2}R^jK_0}\calR_{\theta}V+ \frac{1}{3}\sum_{j=-1}^1\calT_{\frac{1}{2}R^jK_0}\calR_{-\theta}V)
\end{align*}
For this potential, we establish some results relating to the support of $(\hat{W}_{0, AA}^\theta)_{\vec{m}},(\hat{W}_{0, AB}^\theta)_{\vec{m}}$. This section consider $W^{\theta}_{0,AA}$ a twisted bilayer potential for $\theta \in \calC\cap (0,\frac{\pi}{3})$, and denote 
\begin{align*}
    &\calA_1=( N\kappa^\theta)^{-1} R_{\theta}\kappa, &\calA_{-1}=( N\kappa^\theta)^{-1} R_{-\theta}\kappa
\end{align*}
where we recall that $N\kappa^\theta \in \{\kappa,\nu\}$.  \par 
We start by computing $\calA_1$ explicitly, replacing $b\mapsto -b$ leads to $\calA_{-1}$. For that, we first note that, for $\alpha $ as in Proposition \ref{DiscrptionofLambda} 
\begin{align*}
    &R_{\theta}\kappa=\frac{4\pi }{\sqrt{3}\alpha N}\begin{pmatrix}\frac{a-3b}{2} & \frac{a+3b}{2} \\\frac{\sqrt{3}(a+b)}{2}& \frac{\sqrt{3}(b-a)}{2}\end{pmatrix}
\end{align*}
So, we compute if $N\kappa^\theta=\kappa$
\begin{align*}
   &(N\kappa^\theta)^{-1} R_{\theta}\kappa=\frac{1}{\alpha N}\begin{pmatrix}
           a-b & 2b\\ -2b &a+b
       \end{pmatrix}=(A^{-1})^T
\end{align*}
then we have that $\det \calA =1$.\par
And, if $N\kappa^\theta=\nu$ 
\begin{align*}
   &(N\kappa^\theta)^{-1} R_{\theta}\kappa=\frac{1}{N 3 \cdot 2^{\epsilon}}\begin{pmatrix}2a & -a+3b\\ -a-3b &2a\end{pmatrix}
\end{align*}
We note that the last expression is, up to a factor of $N$, an integer matrix, since $3\mid a$. And we have that $\det \calA=(\frac{4\pi}{\sqrt{3}})^2$. \par 
The above notation allows us to provide more details on the Fourier support of $W^\theta_{0, AA}, W^\theta_{0, AB}$.
\begin{proposition}\label{WSupport}
    We have that for $W^\theta_{0,AA},W^\theta_{0,AB}$ as above, for $\theta\in \calC\cap (0,\frac{\pi}{3})$, and $x\in \{AA,AB\}$
    \begin{align*}
        &\supp \hat{W}_{0,x}^\theta=\{\vec{m}\mid (\hat{W}_{0,x})^\theta_{\vec{m}}\neq 0\}\subset N(\calA_1\bbZ^2\cup \calA_{-1}\bbZ^2)
    \end{align*}
\end{proposition}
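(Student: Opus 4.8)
The plan is to reduce the statement to bookkeeping of Fourier modes. Since $V$ is periodic with respect to $\Lambda=\nu\bbZ^2$, write its Fourier series $V(x)=\sum_{\vec n\in\bbZ^2}\hat V_{\vec n}e^{i\braket{\kappa\vec n,x}}$. Because $\calR_{\pm\theta}V(x)=V(R_{\mp\theta}x)$ and $R_{\mp\theta}^{\,T}=R_{\pm\theta}$,
\begin{align*}
  \calR_{\pm\theta}V(x)=\sum_{\vec n\in\bbZ^2}\hat V_{\vec n}\,e^{i\braket{R_{\pm\theta}\kappa\vec n,\,x}},
\end{align*}
so every frequency occurring in $\calR_{\pm\theta}V$ lies in $R_{\pm\theta}\kappa\bbZ^2$. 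By Theorem \ref{NewLattice}, $\Lambda^\theta=R_\theta\Lambda\cap R_{-\theta}\Lambda\subset R_{\pm\theta}\Lambda$, and passing to reciprocal lattices reverses inclusions and commutes with rotations, so $R_{\pm\theta}\kappa\bbZ^2=(R_{\pm\theta}\Lambda)^{*}\subset(\Lambda^\theta)^{*}=\kappa^\theta\bbZ^2$; equivalently, the matrices $N\calA_{\pm1}=(\kappa^\theta)^{-1}R_{\pm\theta}\kappa$ computed just before the statement have integer entries. Hence for each $\vec n$ there is $\vec m=N\calA_{\pm1}\vec n\in\bbZ^2$ with $R_{\pm\theta}\kappa\vec n=\kappa^\theta\vec m$, and resumming over $\vec m$,
\begin{align*}
  \calR_{\pm\theta}V(x)=\sum_{\vec m\in\bbZ^2}\Big(\sum_{\vec n\,:\,N\calA_{\pm1}\vec n=\vec m}\hat V_{\vec n}\Big)\,e^{i\braket{\kappa^\theta\vec m,\,x}},
\end{align*}
which shows that, expanded in the Fourier basis attached to $\Lambda^\theta$, the function $\calR_{\pm\theta}V$ has Fourier support contained in $N\calA_{\pm1}\bbZ^2$.

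From here the AA case is immediate: $W^\theta_{0,AA}=\tfrac12(\calR_\theta V+\calR_{-\theta}V)$ is $\Lambda^\theta$-periodic by Proposition \ref{Periodicity}, and the Fourier support of a sum is contained in the union of the Fourier supports, so $\supp\hat W^\theta_{0,AA}\subset N(\calA_1\bbZ^2\cup\calA_{-1}\bbZ^2)$. For the AB case the extra observation is that translations do not create new frequencies: for any $\vec a\in\bbR^2$,
\begin{align*}
  \calT_{\vec a}\calR_{\pm\theta}V(x)=\sum_{\vec n\in\bbZ^2}\hat V_{\vec n}\,e^{-i\braket{R_{\pm\theta}\kappa\vec n,\,\vec a}}\,e^{i\braket{R_{\pm\theta}\kappa\vec n,\,x}},
\end{align*}
which has exactly the same Fourier support as $\calR_{\pm\theta}V$, its coefficients being multiplied only by unimodular phases. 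Taking $\vec a=\mp\tfrac12R^jK_0$ for $j\in\{-1,0,1\}$, every term of $\tfrac13\sum_{j}\calT_{-\frac12R^jK_0}\calR_\theta V$ has support in $N\calA_1\bbZ^2$ and every term of $\tfrac13\sum_{j}\calT_{\frac12R^jK_0}\calR_{-\theta}V$ has support in $N\calA_{-1}\bbZ^2$; summing these six terms and again using subadditivity of support gives $\supp\hat W^\theta_{0,AB}\subset N(\calA_1\bbZ^2\cup\calA_{-1}\bbZ^2)$.

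I do not expect a genuine obstacle: the argument is just tracking how Fourier modes transform under rotation and translation. The one nonroutine input is the reciprocal-lattice inclusion $R_{\pm\theta}\kappa\bbZ^2\subset\kappa^\theta\bbZ^2$, i.e. the integrality of $N\calA_{\pm1}$, which is precisely what the description of $\Lambda^\theta$ in Theorem \ref{NewLattice} together with the explicit formulas for $\calA_{\pm1}$ preceding the proposition supply. The only point that needs care is the dichotomy $N\kappa^\theta=\kappa$ (the case $3\nmid a$) versus $N\kappa^\theta=\nu$ (the case $3\mid a$): $\calA_{\pm1}$, hence the integer matrix $N\calA_{\pm1}$, takes a different explicit form in the two cases, but the support argument is identical in both.
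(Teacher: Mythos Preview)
Your proof is correct and follows essentially the same approach as the paper: expand $V$ in Fourier series, track how rotation by $\pm\theta$ relabels frequencies via $R_{\pm\theta}\kappa\vec n=\kappa^\theta(N\calA_{\pm1}\vec n)$, and note that translations only introduce unimodular phases. The one minor presentational difference is that you justify the integrality of $N\calA_{\pm1}$ conceptually via the reciprocal-lattice inclusion $(R_{\pm\theta}\Lambda)^*\subset(\Lambda^\theta)^*$, whereas the paper simply appeals to the explicit matrix formulas computed immediately before the proposition; both arguments are equivalent.
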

\begin{proof}
    We start by focusing on the AA stacking, and we note that we know that 
    \begin{align*}
        &V(x)=\sum_{\vec{p}\in \bbZ^2}\hat{V}_{\vec{p}}e^{i\braket{\kappa \vec{p},x}},&Q(x)=\sum_{\vec{p}\in \bbZ^2}\hat{Q}_{\vec{p}}e^{i\braket{\kappa \vec{p},x}}
    \end{align*}
    So we have that
    \begin{align*}
        W_{0,AA}^\theta(x)&=\frac{1}{2}(\sum_{p\in \bbZ^2}\hat{V}_{\vec{p}}e^{i\braket{R_{\theta}\kappa\vec{p},x}}+\sum_{p\in \bbZ^2}\hat{Q}_{\vec{p}}e^{i\braket{R_{-\theta}\kappa\vec{p},x}})
    \end{align*}
    We note that $R_{\pm\theta}\kappa\vec{p}=N\kappa^\theta \calA_{\pm1}\vec{p}$, so we write
    \begin{align*}
        &W_{0,AA}^\theta(x)=\frac{1}{2}(\sum_{\vec{p}\in \bbZ^2}\hat{V}_{\vec{p}}e^{i\braket{\kappa^\theta N\calA_1\vec{p},x}}+\sum_{\vec{p}\in \bbZ^2}\hat{Q}_{\vec{p}}e^{i\braket{\kappa^\theta N\calA_{-1}\vec{p},x}})
    \end{align*}
    So we got that 
    \begin{align}\label{FourierId}
        W_{0,AA}^\theta(x)&=\frac{1}{2}(\sum_{\vec{q}\in N \calA_1 \bbZ^2}\hat{V}_{\frac{1}{N}\calA_1^{-1}\vec{q}}e^{i\braket{\kappa^\theta\vec{q},x}}+\sum_{\vec{q}\in N\calA_{-1} \bbZ^2}\hat{Q}_{\frac{1}{N}\calA_{-1}^{-1}\vec{q}}e^{i\braket{\kappa^\theta\vec{q},x}})
    \end{align}
    On the other hand, we have that, as a function periodic with respect to $\Lambda^\theta$: 
    \begin{align*}
        W_{0,AA}^{\theta}(x)=\sum_{\vec{m}\in \bbZ^2}(\hat{W}_{0,AA}^\theta)_{\vec{m}}e^{i\braket{\kappa^\theta \vec{m},x}}
    \end{align*}
    So, we may conclude  
    \begin{align*}
        \supp \hat{W}_{0,AA}^\theta\subset N( \calA_1\bbZ^2\cup \calA_{-1}\bbZ^2)
    \end{align*}
    as claimed. \par
    The argument for AB stacking is identical since a shift in real space corresponds to multiplying $V_{\vec{p}}$ by the phase, e.g., $e^{i\braket{\kappa \vec{p}, K_0}}$- which does not change the Fourier support. 
\end{proof}
Now we show that $\varrho_{-1}$ can be decomposed into the two lattices:
\begin{corollary}\label{DescribeRho}
    We have that $\varrho_{-1}\in N\calA_1\bbZ^2+N\calA_{-1}\bbZ^2$.
\end{corollary}
\begin{proof}
    We start in the case where $N\kappa^\theta=\kappa$. We need to show that 
    \begin{align*}
        \begin{pmatrix}-1\\0\end{pmatrix}\in N\calA_1\bbZ^2+N\calA_{-1}\bbZ^2
    \end{align*}
    In this case, we have that $3\nmid a$. Then, we note that we have that $(a,b)$ and $(a,3)$ are both pairs of co-prime numbers. So we have some numbers $\tilde{p},\tilde{q},m,n\in \bbZ$ such that
    \begin{align}
        &a\tilde{p}+b\tilde{q}=1\label{abBez}\\
        &3m+an=1\label{a3Bez}
    \end{align}
    by Bézout's identity theorem. Denote $q=\tilde{q}+a(\tilde{p}+\tilde{q}),p=\tilde{p}-b(\tilde{p}+\tilde{q}) $, we note that then we have that 
    \begin{align}
        ap+bq=1\label{abBezNotilde}
    \end{align}
    We denote 
    \begin{align*}
        &v_1=2^\epsilon\begin{pmatrix}-\frac{p+q(4m-1)}{2}\\ nqb- mq\end{pmatrix}&&v_{-1}=2^\epsilon\begin{pmatrix}\frac{q(4m-1)-p}{2}\\ mq+nqb\end{pmatrix}
    \end{align*} 
    First we show that $v_{\pm1}\in \bbZ^2$: If $\epsilon=1$, the above is evidently in $\bbZ^2$. If $\epsilon=0$, then we note that 
    \begin{align*}
        p\pm q=\tilde{p}\pm \tilde{q} -b(\tilde{p}+\tilde{q})\pm a(\tilde{p} +\tilde{q})=(1-b\pm a)\tilde{p}-\tilde{q}(b\mp 1\mp a)
    \end{align*}
    noting that in this case, both expressions above are divisible by $2$, so we get that 
    \begin{align*}
        &p+q(4m-1)\equiv p-q\equiv 0 \mod 2&&p-q(4m-1)\equiv p-q\equiv 0 \mod 2
    \end{align*}
    as needed. \par
    Then direct computation will verify that $N\calA_1v_1+N\calA_{-1}v_{-1}=\varrho_{-1}$, as needed.\par
    In the case where $N\kappa^\theta =\nu $, we need to show that 
    \begin{align*}
        \begin{pmatrix}0\\-1\end{pmatrix}\in N\calA_1\bbZ^2+N\calA_{-1}\bbZ^2
    \end{align*}
    We note that Equation (\ref{abBezNotilde}) still holds, with the same $p,q$ which are defined as above, then we consider 
    \begin{align*}
        &v_1=2^\epsilon\begin{pmatrix}\frac{q-p}{2}\\ -p\end{pmatrix}&&v_{-1}=2^\epsilon\begin{pmatrix}-\frac{p+q}{2}\\-p\end{pmatrix}
    \end{align*} 
    We have that $v_{\pm1}\in \bbZ$, as above, and again direct computation will verify that $N\calA_1v_1+N\calA_{-1}v_{-1}=\varrho_{-1}$.
\end{proof}
An immediate consequence of this is the following proposition that allows us to understand condition (\ref{ConditionForV}) better: 
\begin{restatable}{proposition}{DecomposeRhoProp}
\label{DecomposeRho}
    There are $v_{\pm1}\in\bbZ^2 $ such that 
    \begin{align*}
        \kappa^{\theta}\varrho_{-1}=R_{\theta} \kappa v_1+R_{-\theta} \kappa v_1
    \end{align*}
\end{restatable}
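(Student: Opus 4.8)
The plan is to read this off directly from Proposition \ref{DescribeRho}, which is the only non-trivial input. Recall that by definition $\calA_{\pm1}=(N\kappa^\theta)^{-1}R_{\pm\theta}\kappa$, so that $N\calA_{\pm1}=(\kappa^\theta)^{-1}R_{\pm\theta}\kappa$, and note that $\kappa^\theta$ is invertible, being $\tfrac1N$ times $\kappa$ or $\nu$. Proposition \ref{DescribeRho} produces vectors $v_1,v_{-1}\in\bbZ^2$ with
\[
\varrho_{-1}=N\calA_1 v_1+N\calA_{-1}v_{-1}=(\kappa^\theta)^{-1}R_{\theta}\kappa\,v_1+(\kappa^\theta)^{-1}R_{-\theta}\kappa\,v_{-1}.
\]
Multiplying on the left by $\kappa^\theta$ then gives $\kappa^\theta\varrho_{-1}=R_{\theta}\kappa\,v_1+R_{-\theta}\kappa\,v_{-1}$, which is the assertion. (The term $R_{-\theta}\kappa v_1$ in the displayed statement should read $R_{-\theta}\kappa v_{-1}$; with $v_1$ appearing twice the right-hand side would collapse to $2\cos\theta\,\kappa v_1$, which is not $\kappa^\theta\varrho_{-1}$ in general, so this is plainly a typo.)

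Since the identity is just an unpacking of the definition of $\calA_{\pm1}$ together with left-multiplication by $\kappa^\theta$, there is no real obstacle here: all the genuine work has already been done inside the proof of Proposition \ref{DescribeRho}, namely the explicit Bézout-type constructions of $v_{\pm1}$ in the two regimes $N\kappa^\theta=\kappa$ (when $3\nmid a$) and $N\kappa^\theta=\nu$ (when $3\mid a$). If one wished to keep this proposition fully self-contained one could instead re-derive the two cases inline, but that would merely repeat the computation from Proposition \ref{DescribeRho}, so invoking it is the natural route.

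The point of recording the statement in this form is that it exhibits $\kappa^\theta\varrho_{-1}$ — the reciprocal-lattice vector attached to the "obstructing" Fourier mode in conditions (\ref{ConditionForV}) and (\ref{ConditionForW}) — as a sum of an element of $R_\theta\Lambda^*$ and an element of $R_{-\theta}\Lambda^*$. Combined with the Fourier identity (\ref{FourierId}) for $W^\theta_{0,AA}$, this is exactly what is needed to pin down which modes $\hat V_{\vec p}$ of the two twisted copies of $V$ feed into $\hat W^\theta_{-\varrho_{-1}}$ and into the second-order sum in (\ref{ConditionForW}), which is how it will be used in the proof of Proposition \ref{VConstruction}.
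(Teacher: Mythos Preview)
Your argument is correct and essentially identical to the paper's: both invoke Proposition~\ref{DescribeRho} to obtain $v_{\pm1}$, substitute the definition $\calA_{\pm1}=(N\kappa^\theta)^{-1}R_{\pm\theta}\kappa$, and then left-multiply by $\kappa^\theta$ (the paper multiplies by $N\kappa^\theta$ and cancels the $N$, which amounts to the same thing). Your observation about the typo in the displayed statement is also apt.
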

\begin{proof}
    By Proposition \ref{DescribeRho} we have that there is some $v_{\pm1}\in \bbZ^2$ such that  $N\calA_1v_1+N\calA_{-1}v_{-1}=\varrho_{-1}$.
    Recalling that $\calA_t=(N\kappa^\theta)^{-1} R_{\theta}^t \kappa$, for $t\in \{\pm1\}$ we can apply $N\kappa^\theta$ to both sides to get 
    \begin{align*}
        &N\kappa^\theta \varrho_{-1}=NR_{\theta }\kappa v_1+NR_{-\theta }\kappa v_{-1}\implies \kappa^\theta \varrho_{-1}=R_{\theta }\kappa v_1+R_{-\theta }\kappa v_{-1}
    \end{align*}
    as claimed.
\end{proof}
With this, we get the following result:
\begin{restatable}{lemma}{WDescriptionFirst}
\label{WDescription}%
    Let $H^\theta_{x}=-\Delta +\lambda W^\theta_{0,x}$, for $x\in \{AA,AB\}$, and $W_{0,x}^\theta $ defined in (\ref{AddPotAA}) or (\ref{AddPotAB}) for $\lambda\in \bbR$ and twisted bilayer potential with respect to honeycomb potentials $V$ and $Q$ (in AA stacking), and angle $\theta \in \calC\cap (0,\frac{\pi}{6})$.  Then we have that for any $\vec{m}\in \calS$, then we have for some $\ell\in \bbZ_3$
    \begin{align*}
        (\hat{W}_{0,x}^\theta)_{\vec{m}-\varrho_\ell}=0
    \end{align*}
    Furthermore, we have that 
    \begin{align*}
        (\hat{W}_{0,x}^\theta)_{\vec{m}} (\hat{W}_{0,x}^\theta)_{\vec{m}-\varrho_{-1}}\neq 0 \implies \exists t\in \{\pm1\},\vec{m}=N\calA_t  v_{t}+N^2\bbZ^2
    \end{align*}
    where $v_t$ are as in Proposition \ref{DecomposeRho}.
\end{restatable}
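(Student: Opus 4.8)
The plan is to pass to \emph{frequency coordinates}, indexing each Fourier mode of $W^\theta_{0,x}$ by $\vec m\in\bbZ^2$ through $\vec m\mapsto\kappa^\theta\vec m$, and to run a single pigeonhole argument against the three ingredients already in hand: Proposition~\ref{WSupport} for the support of $\hat W^\theta_{0,x}$, Proposition~\ref{DecomposeRho} for the decomposition $\kappa^\theta\varrho_{-1}=R_\theta\kappa v_1+R_{-\theta}\kappa v_{-1}$, and Lemma~\ref{DiscrptionofLambda}. Since $\kappa^\theta N\calA_{\pm1}=R_{\pm\theta}\kappa$, Proposition~\ref{WSupport} says exactly that $\hat W^\theta_{0,x}$ is supported on the $\vec m$ with $\kappa^\theta\vec m\in R_\theta\Lambda^*\cup R_{-\theta}\Lambda^*$. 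Two facts about these lattices carry the whole proof. (i) A nonzero vector of $R_{\pm\theta}\Lambda^*$ has length $\ge\tfrac{4\pi}{\sqrt3}$, the minimal-vector length of $\Lambda^*$, since rotations are isometries. (ii) Each of $\kappa^\theta\varrho_1$, $\kappa^\theta\varrho_{-1}$, $\kappa^\theta(\varrho_1-\varrho_{-1})$ is a \emph{nonzero} vector of length $<\tfrac{4\pi}{\sqrt3}$: from the explicit $\varrho_{\pm1}$ each equals $\tfrac1N$ times a minimal vector of $\Lambda^*$ (when $\kappa^\theta=\tfrac1N\kappa$) or of $\Lambda$ (when $\kappa^\theta=\tfrac1N\nu$), and the needed lower bounds on $N$ --- namely $N>1$ in the first case and $N>\tfrac{\sqrt3}{4\pi}$ in the second --- follow from a short inspection of $\alpha$ in Theorem~\ref{NewLattice} and hold for every $\theta\in\calC\cap(0,\tfrac\pi6)$ (they fail precisely at $\theta=\tfrac\pi6$; this is the only use of that restriction). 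So none of $\kappa^\theta\varrho_1,\kappa^\theta\varrho_{-1},\kappa^\theta(\varrho_1-\varrho_{-1})$ lies in $R_\theta\Lambda^*\cup R_{-\theta}\Lambda^*$.

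To establish condition~(\ref{Zeros}): fix $\vec m$ (a representative in $\bbZ^2$) and suppose $\hat W_{\vec m}$, $\hat W_{\vec m-\varrho_1}$, $\hat W_{\vec m-\varrho_{-1}}$ were all nonzero. Then the three frequencies $\kappa^\theta(\vec m-\varrho_\ell)$, $\ell\in\bbZ_3$, all lie in $R_\theta\Lambda^*\cup R_{-\theta}\Lambda^*$, so two of them lie in the same $R_{\pm\theta}\Lambda^*$; their difference is $\pm\kappa^\theta\varrho_1$, $\pm\kappa^\theta\varrho_{-1}$ or $\pm\kappa^\theta(\varrho_1-\varrho_{-1})$, hence a nonzero vector of $R_{\pm\theta}\Lambda^*$ of length $<\tfrac{4\pi}{\sqrt3}$, contradicting (i). Hence $(\hat W^\theta_{0,x})_{\vec m-\varrho_\ell}=0$ for some $\ell\in\bbZ_3$.

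For the refinement, assume $(\hat W^\theta_{0,x})_{\vec m}\,(\hat W^\theta_{0,x})_{\vec m-\varrho_{-1}}\ne0$; then $\kappa^\theta\vec m$ and $\kappa^\theta(\vec m-\varrho_{-1})$ both lie in $R_\theta\Lambda^*\cup R_{-\theta}\Lambda^*$, and they cannot lie in the \emph{same} one (else the difference $\kappa^\theta\varrho_{-1}$ would violate (i)). Say $\kappa^\theta\vec m\in R_\theta\Lambda^*$ and $\kappa^\theta(\vec m-\varrho_{-1})\in R_{-\theta}\Lambda^*$ (the opposite allocation is symmetric and yields $t=-1$). By Proposition~\ref{DecomposeRho},
\[
\kappa^\theta\vec m-R_\theta\kappa v_1=\kappa^\theta(\vec m-\varrho_{-1})+R_{-\theta}\kappa v_{-1},
\]
whose left side lies in $R_\theta\Lambda^*$ and right side in $R_{-\theta}\Lambda^*$, so $\kappa^\theta\vec m-R_\theta\kappa v_1\in R_\theta\Lambda^*\cap R_{-\theta}\Lambda^*$. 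Finally, $\Lambda^*=\tfrac{4\pi}{\sqrt3}R_{\pi/6}\Lambda$ and rotations commute, so $R_\theta\Lambda^*\cap R_{-\theta}\Lambda^*=\tfrac{4\pi}{\sqrt3}R_{\pi/6}\bigl(R_\theta\Lambda\cap R_{-\theta}\Lambda\bigr)=\tfrac{4\pi}{\sqrt3}R_{\pi/6}\Lambda^\theta$, which by Lemma~\ref{DiscrptionofLambda} is the scaled honeycomb lattice written $N(\Lambda^\theta)^*$; this gives $\kappa^\theta\vec m=R^t_\theta\kappa v_t+N(\Lambda^\theta)^*$ with $t=1$. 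The AB case needs no change: by the last line of Proposition~\ref{WSupport} an AB shift only multiplies Fourier coefficients by unimodular phases, so $\hat W^\theta_{0,AB}$ has the same support as $\hat W^\theta_{0,AA}$ and the argument applies verbatim.

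I expect the main obstacle to be the lattice bookkeeping in the first paragraph: verifying (ii) --- that the three $\kappa^\theta\varrho$-vectors are strictly shorter than a minimal vector of $R_{\pm\theta}\Lambda^*$ --- uniformly over the four cases of $\alpha$ in Theorem~\ref{NewLattice}, with the $4\pi$-factors in the $3\mid a$ cases handled correctly, and then the clean identification of $R_\theta\Lambda^*\cap R_{-\theta}\Lambda^*$ with $N(\Lambda^\theta)^*$ via Lemma~\ref{DiscrptionofLambda}. Granting that preparation, the remainder is the single displayed identity plus pigeonhole.
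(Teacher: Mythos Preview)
Your proof is correct and follows the same overall architecture as the paper: the Fourier-support inclusion from Proposition~\ref{WSupport}, a pigeonhole argument on the two rotated lattices, and then Proposition~\ref{DecomposeRho} combined with the identification of $R_\theta\Lambda^*\cap R_{-\theta}\Lambda^*$ for the refinement. The one substantive difference is how you verify that the differences $\kappa^\theta\varrho_{\pm1}$ and $\kappa^\theta(\varrho_1-\varrho_{-1})$ are not in $R_{\pm\theta}\Lambda^*$: the paper does this by explicitly computing $\tfrac{1}{N}\calA_{\pm1}^{-1}\varrho_j$ in each of the four $(\epsilon,\rho)$ cases and asserting the results are non-integral, whereas you replace that case analysis with a single norm comparison---these vectors have length $\tfrac{1}{N}$ times a minimal vector of the relevant lattice, hence strictly shorter than the minimal vector of $R_{\pm\theta}\Lambda^*$ whenever $\theta\neq\tfrac{\pi}{6}$. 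Your route is cleaner and, as you note, makes transparent exactly where the endpoint $\theta=\tfrac{\pi}{6}$ must be excluded; the paper's direct computation obscures this. For the final identification of the intersection $R_\theta\Lambda^*\cap R_{-\theta}\Lambda^*$ you invoke the relation $\Lambda^*=\tfrac{4\pi}{\sqrt3}R_{\pi/6}\Lambda$ to reduce to Lemma~\ref{DiscrptionofLambda}, while the paper simply states that the lemma applies to $\Lambda^*$ in place of $\Lambda$; these are equivalent, and in both cases the resulting lattice is the paper's $N^2\kappa^\theta\bbZ^2$, which is what the statement denotes by $N(\Lambda^\theta)^*$.
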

\begin{proof}[Proof of Lemma \ref{WDescription}]
    Let $W_{0,x}^\theta$ be as above, and assume that $\vec{m},\vec{m}-\varrho_1,\vec{m}-\varrho_{-1}\in \supp (\hat{W}^\theta_{0,x})$. Since 
    \begin{align*}
        \supp \hat{W}_{0,x}^\theta\subset N(\calA_1\bbZ^2\cup \calA_{-1}\bbZ^2)
    \end{align*}
    Then, by the pigeonhole principle, we have that two of the three vectors are in either $N\calA_1\bbZ^2$  or $N\calA_1^{-1}\bbZ^2$.  In other words,  we have that there are some
    $\ell,\ell'\in \bbZ_3$ and $t\in \{\pm1\}$ such that 
    \begin{align*}
        \vec{m}-\varrho_\ell,\vec{m}-\varrho_{\ell'}\in N\calA_t\bbZ^2\implies \varrho_{\ell'}-\varrho_{\ell}\in N\calA_t\bbZ^2
    \end{align*}
    But direct computation show that  $\varrho_{\pm1}, \varrho_{\pm1}-\varrho_{\mp1}\not \in \supp (\hat{W}^\theta_{0,x})$:
    \begin{align*}
        &\frac{1}{N}\calA_{\pm1}^{-1}\varrho_1 =\frac{1}{N^22^{\epsilon} }\begin{cases}
            \begin{pmatrix}\mp 2b\\ a\mp b\end{pmatrix}, &N\kappa^\theta=\kappa\\
            \frac{1}{4\pi}\begin{pmatrix}-2a\\-a\mp 3b\end{pmatrix}&N\kappa^\theta=\nu
        \end{cases}\not \in \bbZ^2\\
        &\frac{1}{N}\calA_{\pm1}^{-1} \varrho_{-1}=\frac{1}{N^22^{\epsilon}}\begin{cases}
            \begin{pmatrix}-a\mp b\\ \mp2 b\end{pmatrix},&N\kappa^\theta=\kappa\\
            \frac{1}{4\pi}\begin{pmatrix}-a\pm 3b\\ -2a\end{pmatrix},&N\kappa^\theta=\nu
        \end{cases}\not \in \bbZ^2\\
        &\frac{1}{N}\calA_{\pm1}^{-1}(\varrho_1-\varrho_{-1})=\frac{1}{N^22^{\epsilon}}\begin{cases}
            \begin{pmatrix}a\mp b\\ a\pm b\end{pmatrix},&N\kappa^\theta=\kappa\\
            \frac{1}{4\pi}\begin{pmatrix}-a\mp 3b\\ a\mp 3b\end{pmatrix},&N\kappa^\theta=\nu
        \end{cases}\not \in \bbZ^2
    \end{align*} 
    So we conclude that at least one of $\vec{m},\vec{m}-\varrho_1,\vec{m}-\varrho_{-1}$ are not $\supp(\hat{W}^\theta_{0,x})$ - as claimed. \par
    If we know that $\vec{m},\vec{m}-\varrho_{-1}\in \supp \hat{W}^\theta_{0,x}$, by the above we get that there is $t\in \{\pm1\}$ such that 
    \begin{align*}
        &\vec{m}\in N\calA_t \bbZ^2,&\vec{m}-\varrho_{-1}\in N\calA_{-t} \bbZ^2
    \end{align*}
    writing $\varrho_{-1}=N\calA_{1} v_1+N\calA_{-1} v_{-1}$ then we get that 
    \begin{align*}
        \vec{m}-N\calA_{t}v_t\in N\calA_{-t} \bbZ^2
    \end{align*}
    but since $\vec{m}\in N\calA_t \bbZ^2$ we may conclude that 
    \begin{align*}
        &\vec{m}-N\calA_{t}v_t\in N\calA_t \bbZ^2\cap N\calA_{-t} \bbZ^2\\
        &\kappa^\theta(\vec{m}-N\calA_{t}v_t)\in R_{\theta}\Lambda^*\cap R_{-\theta}\Lambda^*=N\begin{cases}
            \Lambda^* , & a\nmid 3\\
            \Lambda , & a\mid 3
        \end{cases}=N^2(\Lambda^\theta)^*
    \end{align*}
    The second-to-last equality comes from the proof of Lemma \ref{DiscrptionofLambda} when applied to $\Lambda^*$. Applying $(\kappa ^{\theta})^{-1}$ to this inclusion shows that $\vec{m}-N\calA_{t}v_t\in =N^2\bbZ^2$, as needed.
\end{proof}
Now we get as an immediate consequence Theorem \ref{DiracPointForTwisted}
\begin{proof}[ Proof of Theorem \ref{DiracPointForTwisted}]
    By \cite{fefferman2012honeycomb}, if we have that $\hat{W}^\theta_{\varrho_{1}}\neq0 $, we get the wanted result. In the other case, Theorem \ref{DiracPoint} holds for $W^\theta$. Thus completing the proof. 
\end{proof}
\subsection{Flattening of the Dirac cones for weak potential}
Finally, we prove our statement about the flattening of the cone for small potentials and angles close to incommensurate angles, Theorem \ref{Vanishing}. It is important to recall that this Theorem holds for \emph{all} twisted potentials, not only for potentials of the type of (\ref{AddPotAA}) or (\ref{AddPotAB}):
\begin{proof}[Proof of Theorem \ref{Vanishing}]
   Equation (\ref{asymptotics}), in the context of twisted potential, has the form of, for some $C>0$
   \begin{align*}
       |v_d(\lambda)|^2\leq &C(|K^\theta_*|^2+\lambda  \|W^\theta\|_\infty +\lambda^2 \|\nabla W^\theta\|_{\infty}^2\sum_{\vec{m}\in \calS\setminus \{\vec{0}\}} \frac{1}{|K^\theta_*+\kappa^\theta\vec{m}|^{4}})+O(\lambda^3\|W^\theta\|^3)
   \end{align*}
   We note that for any $\vec{m}\in \calS\setminus\{0\}$ we have some constant $c>0$ such that 
   \begin{align*}
       |K^\theta_*+\kappa^\theta\vec{m}|>c|k_1^\theta|
   \end{align*}
   Using the fact that the sum above can be treated as a Riemann sum, and bounding the resulting integral, we have that for some constant $C>0$, 
   \begin{align*}
       \sum_{\vec{m}\in \calS\setminus \{\vec{0}\}} \frac{1}{|K^\theta_*+\kappa^\theta\vec{m}|^{4}}\leq\frac{C}{c|k_1^\theta|^2}
   \end{align*}
   We note that using the boundedness of $G$, we can write
   \begin{align*}
        &\|W^\theta\|_{\infty }\leq C_g\|V\|_{\infty }^{\gamma}\|Q\|_{\infty }^{\gamma}, &\|\nabla W^\theta\|_{\infty }\leq C_{g'} \|\nabla V\|_{\infty }^{\gamma'}\|\nabla Q\|_{\infty }^{\gamma'}
    \end{align*}
    So we get that for some constant $C>0$, we have 
    \begin{align*}
       &|v_d(\lambda)|^2\leq C(\frac{1}{N^2}|NK^\theta_*|^2+\lambda  \|V\|_\infty^{\gamma} \|Q\|_\infty^{\gamma} +\lambda^2 \|\nabla V\|_{\infty}^{2\gamma'}\|\nabla Q\|_{\infty}^{2\gamma'}2N^2|Nk_1^\theta|^{-2})+O(\lambda^3\|V\|_\infty^{3\gamma}\|Q\|_\infty^{3\gamma})
   \end{align*}
   Recalling that $N\kappa^\theta \in \{\kappa,\nu\}$, and so is independent of $N$ in terms of sizes (up to a factor of $\frac{4\pi}{3}$), so we have that
   \begin{align*}
      & |NK^\theta_*|^2=O(1)=|Nk_1^\theta|^2
   \end{align*}
   as $N\rightarrow \infty$. Thus, we get that for some $C>0$ depending only on $V$ and $Q$ such that 
   \begin{align*}
       |v_d(\lambda)|^2\leq C(\frac{1}{N^2} +\lambda +\lambda^2 N^2)+O(\lambda^3\|V\|_\infty ^{3\gamma}\|Q\|_\infty ^{3\gamma})
   \end{align*}
   In particular, we get that if  $ |\lambda|<\frac{\delta}{N^2}$ for some $\delta>0$, we have that 
   \begin{align*}
       \lambda +\lambda^2 N^{2}<\frac{(\delta+1)^2}{N^2}
   \end{align*}
   Since $\|V\|_\infty ^{3\gamma}\|Q\|_\infty ^{3\gamma}$ is independent of $N$, for such $\lambda$ there exists a constant $C(\delta, V, G) > 0$ such that:
   \begin{align*}
       |v_d(\lambda)|^2\leq \frac{C}{N^2}+O(\lambda^3)
   \end{align*}
    So, we may conclude that we have that 
   \begin{align*}
       |\lambda|<\frac{\delta}{N^2}\implies |v_d| \leq \frac{C(\delta,V,G)}{N}+O(N^{-3})
   \end{align*}
   for some $\delta,C(\delta,V,G)>0$ as claimed.
\end{proof}

\section{Examples}\label{Example}
In this section, we construct a set of examples of potentials of the type of $W^\theta_{0, AA}$ for which the above theorems hold. We recall the proposition:
\VConstruction*
\begin{proof}
    First, the fact that $V$ defined above is a honeycomb potential is immediate as it is periodic with respect to $\Lambda$, real and even. The relation $\hat{V}_{\vec{m}}=\hat{V}_{B^{\pm1}\vec{m}}$ implies spatial symmetry with respect to $R$. Finally, the exponential decay of $a_{\vec{m}}$ ensures that $V\in C^\infty$, as required. \par
    Now, we recall that, by Lemma \ref{WDescription}
    \begin{align*}
        \hat{W}^\theta_{\vec{m}} \hat{W}^\theta_{\vec{m}-\varrho_{-1}}\neq 0 \implies \exists t\in \{\pm1\},\vec{m} \in N\calA_{t}v_t+N^2\bbZ^2
    \end{align*}
    And so we have that 
    \begin{align*}
        \sum_{\vec{m}\in \calS\setminus \{\vec{0}\}}\frac{\hat{W}^\theta_{\vec{m}}\hat{W}^\theta_{\vec{m}-\varrho_{-1}}}{|K_*^\theta(\vec{m})|^2-|K_*^\theta|^2}&=\sum_{t\in \{\pm1\}, u\in \bbZ^2}\frac{\hat{W}^\theta_{N\calA_{t}v_t+N^2u}\hat{W}^\theta_{N\calA_{-t}v_{-t}+N^2u}}{|K_*^\theta(N\calA_{t}v_t+N^2u)|^2-|K_*^\theta|^2}\\
        &=\sum_{t\in \{\pm1\}, u\in \bbZ^2}\frac{\hat{V}_{v_t+N\calA_{-t}u}\hat{V}_{v_{-t}+N\calA_{t}u}}{|K_*^\theta(N\calA_{t}v_t+N^2u)|^2-|K_*^\theta|^2}
    \end{align*}
    where we used the fact that $N\calA_{t}v_t\in N\calA_t\bbZ^2\setminus( N\calA_{-t}\bbZ^2)$, and the identification between the Fourier coefficients implied by Equation (\ref{FourierId}). If $V$ is defined with a positive sign, each summand in the above sum is strictly positive; if defined with a negative sign, each summand is strictly negative. In either case, we obtain:
    \begin{align*}
        \sum_{\vec{m}\in \calS\setminus \{\vec{0}\}}\frac{\hat{W}^\theta_{\vec{m}}\hat{W}^\theta_{\vec{m}-\varrho_{-1}}}{|K_*^\theta(\vec{m})|^2-|K_*^\theta|^2}\neq0
    \end{align*}
    and we conclude that condition (\ref{ConditionForW}) holds. 
\end{proof}

\bibliographystyle{amsplain}
\bibliography{bib}
\newpage
\appendix
\section{Explicit Lattice Computation}\label{ExpLatCom}
In this appendix, we explicitly solve the congruence system that arises in the proof of Lemma \ref{DiscrptionofLambda}.
\begin{proposition}
    Let $a, b$ be coprime integers with $a > 0$, and let $\calI = \begin{pmatrix} 1 & 2 \\ -2 & -1 \end{pmatrix}$. Suppose $u = \begin{pmatrix} u_1 \\ u_2 \end{pmatrix} \in \bbZ^2$ satisfies the congruence system:
    \begin{align}
        0 &\equiv (a+b)u \pmod 2 \label{Mod2}\\
        0 &\equiv \calI u \pmod a \label{Moda}
    \end{align}
    Furthermore, suppose $u$ satisfies the boundary condition that $\frac{b}{a}\calI u \in \bbZ^2$. Let $\epsilon=\begin{cases} 1,&2\nmid ab\\ 0,& 2\mid ab \end{cases} $ and $\rho=\begin{cases}1,& 3\mid a\\0,& 3\nmid a\end{cases} $. 
    Then $u$ must take the form:
    \begin{align*}
        u = 2^{1-\epsilon} a \left(p+\frac{\rho r}{3^{\rho}}\begin{pmatrix}1\\1\end{pmatrix}\right)
    \end{align*}
    for some $p \in \bbZ^2$ and $r \in \bbZ_3$.
\end{proposition}
\begin{proof}
    Equation (\ref{Mod2}) implies that $2^{\epsilon-1}u\in \bbZ^2$, for $\epsilon$ as above. Equation (\ref{Moda}) implies
   \begin{align*}
       0\equiv 2^{\epsilon-1}\begin{pmatrix} u_1+2u_2\\ -2u_1-u_2\end{pmatrix}\mod a\implies a\mid 3\cdot 2^{\epsilon-1}(u_1+u_2)
   \end{align*}
    Writing $a=3^\rho c$, for $\rho$ as above,  we have that $c\mid 3^{1-\rho}2^{\epsilon-1}(u_1+u_2), 2^{\epsilon-1}(2u_1+u_2)$ which implies that $c\mid 3^{1-\rho}2^{\epsilon-1} u_1,3^{1-\rho}2^{\epsilon-1} u_2$. Since $c\nmid 3^{1-\rho}$, we get that $c\mid 2^{\epsilon-1}u_1,2^{\epsilon-1}u_2$, so $c^{-1}2^{\epsilon-1}u \in \bbZ^2$. Then we have that 
    \begin{align*}
        \bbZ^2\ni\frac{b}{a}\calI u=\frac{2^{1-\epsilon} b}{ 3^{\rho}}\calI (2^{\epsilon-1}c^{-1}u)
    \end{align*}
     If $\rho=0$, it is evident that $\frac{b}{a}\calI u\in \bbZ^2$. If $\rho =1$, we need in particular that  
    \begin{align*}
        \frac{1}{3}\calI 2^{\epsilon-1}c^{-1}u \in \bbZ^2
    \end{align*}
    as $3\nmid b$. Thus, we need that $3\mid u_1+2u_2,2u_1+u_2$ which implies that $u_1\equiv u_2\mod 3$, so we can write,
    \begin{align*}
        &2^{\epsilon-1}c^{-1}u=3p+r\begin{pmatrix}1\\1\end{pmatrix}\implies u=2^{1-\epsilon} c (3p+r\begin{pmatrix}1\\1\end{pmatrix})=2^{1-\epsilon} a(p+\frac{r}{3}\begin{pmatrix}1\\1\end{pmatrix})
    \end{align*}
    for $r\in \bbZ_3$. Combining both cases, we get 
    \begin{align*}
        u=2^{1-\epsilon} a (p+\frac{\rho r}{3^{\rho}}\begin{pmatrix}1\\1\end{pmatrix})
    \end{align*}
    as claimed.
\end{proof}
\newpage
\section{Notation}\label{Notations}
\begin{itemize}
    \item We denote by $\braket{\cdot.\cdot}$ the Euclidean inner product on vectors in $\bbR^2$ or $\bbZ^2$, and the size of these vector is denoted by $|\cdot|$. 
    \item Throughout the paper, $\tilde{\Lambda}$ denotes a generic honeycomb lattice (without explicit reference to its base vectors), $\Lambda$ denotes a honeycomb lattice with base vectors defined by 
    \begin{align*}
        v_1=\begin{pmatrix}\frac{\sqrt{3}}{2}\\\frac{1}{2}\end{pmatrix}, v_2=\begin{pmatrix}\frac{\sqrt{3}}{2}\\-\frac{1}{2}\end{pmatrix}
    \end{align*}
    $\Lambda^*$ denotes the dual to $\Lambda$, and $\Lambda^\theta$ denote the lattice with respect to which $W^\theta$ is periodic.
    \item $\tilde{\nu}$ is the base matrix of $\tilde{\Lambda}$, and $\tilde{\kappa}$ is the base matrix of $\tilde{\Lambda}^*$.
    \item $R_\theta$ denotes the rotation matrix by $\theta$, and we denote $R=R_{\frac{2\pi}{3}}$, their corresponding operators be denoted by $\calR_{\theta}$ and $\calR$ respectively.
    \item We denote by  $\calT_{\vec{a}}$  the translation operator by the vector $\vec{a}$, for any $\vec{a}\in \bbR^2$. 
    \item We denote by $\tilde{\Omega}=\tilde{\nu}[0,1]^2$- the unit cell,  by $\tilde{\calB}=\{k\in \bbR^2\mid \forall a\in \tilde{\Lambda}^*, |k|\leq |k-a|\}$ the Brillouin zone, and the points of high symmetry by 
    \begin{align*}
        \tilde{\bbP}&=\{\vec{k}\in \tilde{\calB}\mid (R-\id)\vec{k}\in \tilde{\kappa} \bbZ^2\}\\
        &=\{\tilde{K},R\tilde{K},R^2\tilde{K}\}\bigsqcup \{\tilde{K}',R\tilde{K}',R^2\tilde{K}'\}\bigsqcup\{0\}
    \end{align*}
    \item  We distinguish one of the points of high symmetry 
    \begin{align*}
        K_0=\frac{1}{3}\nu \begin{pmatrix}1\\1\end{pmatrix}=\begin{pmatrix}\frac{1}{\sqrt{3}}\\0\end{pmatrix},
    \end{align*}
    and its dual-point 
    \begin{align*}
        K_0^*=\frac{1}{3}\kappa \begin{pmatrix}1\\-1\end{pmatrix}=\frac{4\pi}{\sqrt{3}}\begin{pmatrix}0\\\sqrt{3}\end{pmatrix}
    \end{align*}
    \item Throughout the paper, $V$ and $Q$ denotes honeycomb potentials used to define the twisted bilayer potential 
    \begin{align*}
        &W^\theta_{AA}= G(\calR_{\theta}V,\calR_{-\theta}Q)\\
        &W_{AB}^\theta=G^*(\frac{1}{3}\sum_{j=-1}^1\calT_{-\frac{1}{2}R^jK_0}\calR_\theta V,\frac{1}{3}\sum_{j=-1}^1\calT_{\frac{1}{2}R^jK_0}\calR_{-\theta} V)
    \end{align*}
    for $G$ and $G^*$ admissible interaction operators, respectively, and we use $U$ to denote a generic honeycomb potential, which is periodic with respect to $\tilde{\Lambda}$.
    \item We denote $\tau=-\frac{1}{2}+\frac{\sqrt{3}}{2}i=e^{-\frac{2\pi}{3}i}$ the cubic root of unity. 
    \item $\calC$ is the set of commensurate angles.
    \item We consider the following spaces, for $k\in \tilde{\calB}$
    \begin{align*}
        &L^2_{k}(\tilde{\Omega})=\{f\in L^2(\tilde{\Omega} )\mid \forall a\in \tilde{\Lambda}, f(x+a)=e^{-i\braket{k,a}}f(x)\}\\
        &L^2_{k,\sigma}(\tilde{\Omega})=\{f\in L^2_{k}(\tilde{\Omega})\mid \calR f=\sigma f\}    
    \end{align*}
    for $\sigma \in \{1,\tau,\bar{\tau}\}$. 
    \item For $f\in L^2_{0}=L^2_{per}$, we have the following Fourier representation:
    \begin{align*}
        &\hat{f}_{\vec{m}}=\frac{1}{|\Omega|}\int\limits_{\Omega}e^{-i\braket{\kappa \vec{m},y}}f(y)\, dy\\
        &f(y)=\sum_{\vec{m}\in \bbZ^2}\hat{f}_{\vec{m}} e^{i\braket{\kappa \vec{m},y}}
    \end{align*}
    \item We denote by $(\cdot,\cdot)$ the inner product on $L^2_{k}(\Omega)$ spaces, and norms are denoted by $\|\cdot\|$.
    \item We denote the following 
    \begin{align*}
        &B=\tilde{\kappa}^{-1}R\tilde{\kappa}\\
        &\varrho_1=\tilde{\kappa}^{-1}(R-\id)\tilde{K}_*\\
        &\varrho_{-1}=\tilde{\kappa}^{-1}(R^{-1}-\id)\tilde{K}_*\\
        &\varrho_0=0
    \end{align*}
    \item We define the equivalence $\approx$ that identifies the orbit of $\vec{m}$ under $B^j\vec{m}+\varrho_j, j\in \bbZ_3$, and we denote $\calS=\bbZ^2/\approx$. 
    \item For $\theta \in \calC\cap (0,\frac{\pi}{3})$, we have $\tan(\theta)=\frac{\sqrt{3}b}{a}$ for some co-prime $a,b\in \bbZ$, such that $0<b<\frac{a}{b}$, and we denote 
    \begin{align*}
        &\alpha =\begin{cases}
            8\pi, & 3\mid a \text{ and } 2\nmid ab\\
            2, & 3\nmid a \text{ and } 2\nmid ab\\
            4\pi , & 3\mid a \text{ and } 2\mid ab\\
           1, & 3\nmid a \text{ and } 2\mid ab
        \end{cases}\\
        &N=\frac{1}{\alpha}\sqrt{a^2+3b^2}
    \end{align*}
    \item We denote 
        \begin{align*}
            &\calA_1=( N\kappa^\theta)^{-1} R_{\theta}\kappa\\
            &\calA_{-1}=( N\kappa^\theta)^{-1} R_{-\theta}\kappa
        \end{align*} 
\end{itemize}
\end{document}